\documentclass[aps,pra,preprint,titlepage,superscriptaddress,longbibliography]{revtex4-2}
\usepackage{amsmath}
\usepackage{amsthm}
\usepackage{amsfonts}
\usepackage{amssymb}
\usepackage{xcolor,graphicx}
\usepackage[pdftex,hidelinks]{hyperref}
\usepackage{bm}

\usepackage{bibunits}
\defaultbibliographystyle{apsrev4-2}

\usepackage{etoolbox}
\makeatletter
\patchcmd{\NAT@bibsetnum}{\ref{LastBibItem}}{\ref{LastBibItem-\@bibunitname}}{}{
	\PackageError{manuscript}{Cannot patch bibliography label reader}{}
}
\patchcmd{\endthebibliography}{\label{LastBibItem}}{\label{LastBibItem-\@bibunitname}}{}{
	\PackageError{manuscript}{Cannot patch bibliography label writer}{}
}
\renewcommand*{\@extra@b@citeb}{.\@bibunitname}
\renewcommand*{\@extra@binfo}{.\@bibunitname}
\g@addto@macro\frontmatter@setup{\linespread{1}\selectfont}
\g@addto@macro\frontmatter@abstractfont{\linespread{1}\selectfont}
\makeatother
\newtheorem{theorem}{Theorem}
\newtheorem{proposition}[theorem]{Proposition}
\newtheorem{lemma}[theorem]{Lemma}
\newtheorem{corollary}[theorem]{Corollary}
\newtheorem{definition}[theorem]{Definition}
\newtheorem{remark}[theorem]{Remark}

\begin{document}
	\raggedbottom
	
	\begin{bibunit}[apsrev4-2]

		\title{Root-system structure of sloppiness in passive Gaussian metrology}
		
		\author{A. Kafuri} %orcid: 0009-0009-5865-2611
		\email[e-mail: ]{anuar.kafuri@cinvestav.mx}
		\affiliation{Departamento de F\'isica, Centro de Investigaci\'on y de Estudios Avanzados del IPN, P.O. Box 14-740, CP. 07000, Ciudad de M\'exico, Mexico.}
		
		\author{B.~M. Rodr\'iguez-Lara}% ORCID:0000-0002-7014-0450
		\email[e-mail: ]{blas.rodriguez@gmail.com}
		\affiliation{Universidad Polit\'ecnica Metropolitana de Hidalgo, Tolcayuca, Hidalgo 43860, Mexico.}
		
		\date{\today}
		
		\begin{abstract}
			Passive transformations of an $n$-mode squeezed probe become locally unidentifiable when the quantum Fisher matrix loses rank.
			For pure zero-mean Gaussian probes, we show that the $C_{n}$ restricted-root system of $\mathrm{Sp}(2n,\mathbb{R})/\mathrm{U}(n)$ governs this exact sloppiness.
			For squeezing magnitudes $r_{1},\ldots,r_{n}$ in the canonical basis, the roots $2r_{j}$ and $r_{j}\pm r_{k}$ label the local phase rotations and two beam-splitter quadratures, and their vanishing identifies every Fisher-null direction.
			The associated Fisher weights quantify the approach to each singular wall.
			At fixed positive mean photon number, we solve the E-optimal probe-design problem of maximizing the smallest passive Fisher eigenvalue in two explicit generator normalizations.
			In the canonical phase and beam-splitter angle convention, the unique optimum in the fundamental Weyl chamber is an arithmetic progression approaching the consecutive-odd-integer dual-Weyl direction at large resource.
			With an invariant generator norm, the smallest eigenvalue is independent of the passive frame and the optimum lies exactly on that ray at every resource.
			Uniform squeezing instead minimizes the local-phase A-optimal cost while leaving one beam-splitter quadrature unidentifiable for every mode pair.
			Finally, the two quadratures of each identifiable mode pair saturate the quantum-geometric incompatibility bound, and the largest compatible passive submodel has $n(n+1)/2$ parameters at a regular spectrum.
			The resulting classification separates local identifiability, sensitivity optimization, and simultaneous attainability.
		\end{abstract}
		
		\maketitle

		%%%%%%%%%%%%%%%%%%%%%%%    Body    %%%%%%%%%%%%%%%%%%%%%%%%%
		
		%%%%%%%%%%%%%%%%%%%%%%%%%%
		\section{Introduction}
		\label{sec:Sec1}
		%%%%%%%%%%%%%%%%%%%%%%%%%%
		
		Gaussian probes provide a standard platform for multiparameter quantum metrology.
		They are completely characterized in phase space by their first moments and covariance matrices, which transform through finite-dimensional matrix operations under experimentally realizable Gaussian processes~\cite{Simon1988p3028,Weedbrook2012p621,Serafini2017}.
		Gaussian-state methods give explicit expressions for the quantum Fisher information (QFI)~\cite{Monras2013p1303.3682,Safranek2019p035304}, while variational approaches optimize Gaussian resources for a broad class of sensing problems~\cite{Safranek2016p062313,Matsubara2019p033014,Gessner2020p3817}.
		These tools make individual estimation problems tractable, but they do not by themselves expose a global structure of parameter identifiability.
		
		We use exact sloppiness for Fisher-rank loss, which renders some local parameter directions unidentifiable and leaves only particular combinations of the encoded parameters estimable~\cite{Frigerio2026p2540001}.
		This exact rank deficiency is the singular limit of the broader use of sloppiness for a wide hierarchy of Fisher eigenvalues in classical models, which need not have exact null directions~\cite{Gutenkunst2007pe189,Transtrum2015p010901}.
		In a Gaussian model with equally squeezed inputs and relative displacement allowed, the state encoded by two successive phase shifts on the same interferometer arm depends only on their sum when no operation separates them.
		An inserted squeezer and suitable interferometer settings can lift this redundancy and yield a compatible model~\cite{Frigerio2026p2540001}.
		Related studies control sloppiness through qubit scrambling~\cite{He2025p325301}, an intermediate phase shift between squeezing operations~\cite{Sharma2025p104511}, and tunable weak measurements in two-phase estimation~\cite{Bizzarri2025p432}.
		Even with full Fisher rank, noncommuting optimal measurements can prevent simultaneous saturation of the individual symmetric-logarithmic-derivative (SLD) bounds~\cite{Matsumoto2002p3111,Ragy2016p052108,DemkowiczDobrzanski2020p363001,Chang2026p126}.
		
		Recent work has begun to organize identifiability, sensitivity, and compatibility geometrically and algebraically.
		The Gaussian QFI admits symplectic decompositions connected with the Siegel metric of the pure Gaussian-state manifold~\cite{Chatterjee2026p045005}.
		Cartan structure can organize precision, sloppiness, and compatibility in gate metrology~\cite{Fazio2025p100260}, while the full symplectic symmetry $\mathrm{Sp}(2n,\mathbb{R})$ provides a framework for multimode Gaussian interferometry and control~\cite{Lv2026p2606.25768}.
		Related mode-estimation approaches show how structural decompositions can convert general precision bounds into explicit probe-design rules~\cite{Gessner2023p996}.
		In distributed Gaussian sensing, the sensitivity matrix of a fixed interferometer network selects the best-estimable linear combination of phases~\cite{Malitesta2023p032621}.
		Its eigenvalue spectrum describes phase sensitivity, whereas the squeezing spectrum in our analysis characterizes the probe.
		
		Here we show that, in its canonical squeezing basis, a product of independent single-mode squeezed vacua carries a generator-resolved identifiability structure for passive Gaussian transformations, which conserve total excitation number and are generated by local phase rotations and beam-splitter couplings.
		In Section~\ref{sec:Sec2}, we use the Bloch--Messiah decomposition~\cite{Bloch1962p95,Braunstein2005p513,Weedbrook2012p621,Serafini2017} to reduce any pure zero-mean Gaussian probe to this product through a passive change of mode basis and show how the squeezing spectrum determines the passive Fisher weights and, through the correspondingly conjugated generators, which parameter combinations remain estimable and where the Fisher matrix loses rank.
		We organize the probe's squeezing magnitudes $r_{1},\ldots,r_{n}$ in the $C_{n}$ Weyl chamber.
		Using established Gaussian Fisher formulas~\cite{Sparaciari2016p023810,Safranek2016p062313}, symmetric-space radial factors~\cite{Helgason1984,Helgason2001}, and the pure-Gaussian metric~\cite{Chatterjee2026p045005}, we identify the singular Fisher set with the $C_{n}$ restricted-root arrangement of $\mathrm{Sp}(2n,\mathbb{R})/\mathrm{U}(n)$ and associate every root hyperplane with the phase rotation or beam-splitter quadrature that becomes locally unidentifiable in Sec.~\ref{sec:Sec3}, providing a physical classification of passive calibration blind spots.
		We convert this singular arrangement into probe-design criteria in Sec.~\ref{sec:Sec4}, derive the dual-Weyl solution of the algebraic root-margin problem, and solve the finite-photon-number maximin problem exactly in both generator normalizations using the E-optimal criterion of classical experimental design~\cite{Kiefer1974p849,Pukelsheim1993}.
		In the canonical angle normalization and fundamental Weyl chamber, we obtain a unique arithmetic progression of squeezing magnitudes.
		We use this classification in Sec.~\ref{sec:Sec5} to compare local-phase sensing with passive-network calibration and show that uniform squeezing minimizes the local-phase A-optimal cost while making one beam-splitter quadrature locally unidentifiable for every mode pair.
		Finally, we derive the complete passive mean-commutator tensor in Sec.~\ref{sec:Sec6}, show that every identifiable beam-splitter pair saturates the quantum-geometric incompatibility bound, and obtain a complete compatibility classification with a sharp bound on compatible-submodel dimension.
		We distinguish identifiability, sensitivity design, and attainability, and close with a summary and our conclusions in Sec.~\ref{sec:Sec7}.
		
		%%%%%%%%%%%%%%%%%%%%%%%%%%%%%%%%%%%%%%%%%%%%%%%%%%%%%%%%%%%%%%%%%%%%%%%%%%%%%%
		\section{Canonical squeezed probe}
		\label{sec:Sec2}
		%%%%%%%%%%%%%%%%%%%%%%%%%%%%%%%%%%%%%%%%%%%%%%%%%%%%%%%%%%%%%%%%%%%%%%%%%%%%%%
		
		We consider $n$ bosonic modes with annihilation operators $\hat{a}_{j}$, $j=1,\ldots,n$, and the canonical zero-mean pure Gaussian probe,
		\begin{align}
			\lvert \psi_{\bm{r}} \rangle =&~ \hat{S}(\bm{r}) \lvert 0 \rangle,
			\qquad
			\hat{S}(\bm{r}) = \prod_{j=1}^{n} e^{\frac{1}{2} r_{j} \left( \hat{a}_{j}^{2} - \hat{a}_{j}^{\dagger 2} \right)},
		\end{align}
		with multimode squeezing unitary $\hat{S}(\bm{r})$, multimode vacuum $\lvert 0 \rangle$, and real squeezing spectrum $\bm{r} = \left( r_{1}, \ldots, r_{n} \right) \in \mathbb{R}^{n}$.
		We refer to unitary transformations that conserve the total excitation number as passive.
		We focus on passive identifiability in the diagonal squeezing basis because any pure zero-mean Gaussian state in Bloch--Messiah decomposition form~\cite{Bloch1962p95,Braunstein2005p513,Weedbrook2012p621,Serafini2017},
		\begin{align}
			\lvert \Psi_{\bm{r}} \rangle =&~ \hat{U}_{1} \hat{S}(\bm{r}) \hat{U}_{2} \lvert 0 \rangle
			= \hat{U}_{1} \lvert \psi_{\bm{r}} \rangle,
		\end{align}
		reduces to the canonical zero-mean pure Gaussian probe under the passive unitary change of reference frame $\hat{U}_{1}^{\dagger}$, while $\hat{U}_{2}$ leaves the multimode vacuum invariant.
		Two probes with the same squeezing spectrum $\bm{r}$ need not share the same phase-rotation and beam-splitter operators in the laboratory frame.
		
		We use an $n^{2}$-element Hermitian basis for the passive algebra $\mathrm{u}(n)$,
		\begin{align}
			\begin{aligned}
				&\hat{g}^{\mathrm{rot}}_{j} = \hat{a}_{j}^{\dagger} \hat{a}_{j}, \qquad &&
				j = 1,\ldots,n, \\
				&\begin{aligned}
					\hat{g}^{\mathrm{bsX}}_{jk} =&~ \hat{a}_{j}^{\dagger} \hat{a}_{k} + \hat{a}_{k}^{\dagger} \hat{a}_{j}, \\
					\hat{g}^{\mathrm{bsY}}_{jk} =&~ -i \left( \hat{a}_{j}^{\dagger} \hat{a}_{k} - \hat{a}_{k}^{\dagger} \hat{a}_{j} \right),
				\end{aligned} \qquad &&
				1 \leq j < k \leq n,
			\end{aligned}
		\end{align}
		one local phase rotation per mode and two beam-splitter quadratures per mode pair.
		This convention sets the relative scales of the parameters, and we use the Euclidean norm of these coefficients to define the spectral design criteria.
		Rescaling the generators can change the optimal probe, although any invertible change of coordinates preserves Fisher rank.
		A coefficient is a local phase angle or the angle in a two-mode coupling whose transmissivity is $\cos^{2}\theta$.
		Each passive generator has the number-conserving quadratic form $\hat{g}_{\mu}=\hat{\bm{a}}^{\dagger}\bm{h}_{\mu}\hat{\bm{a}}$, where $\hat{\bm{a}}=(\hat{a}_{1},\ldots,\hat{a}_{n})^{\mathrm{T}}$ is the annihilation-operator column and $\bm{h}_{\mu}$ is an $n\times n$ Hermitian generator matrix~\cite{Arvind1995p471,Weedbrook2012p621}.
		The invariant inner product $\operatorname{Re}\operatorname{Tr}(\bm{h}_{\mu}\bm{h}_{\nu})$, with the trace over the $n$-dimensional mode space, gives squared norm one to each rotation and two to each beam-splitter quadrature.
		This angle basis is orthogonal but not orthonormal in the invariant metric.
		Using the ordered basis,
		\begin{align}
			\left\{ \hat{g}_{\mu} \right\}_{\mu=1}^{n^{2}} =&~ \left\{ \hat{g}^{\mathrm{rot}}_{j} \right\}_{j=1}^{n} \cup \left\{ \hat{g}^{\mathrm{bsX}}_{jk}, \hat{g}^{\mathrm{bsY}}_{jk} \right\}_{1\leq j<k\leq n},
		\end{align}
		we encode the real parameter vector $\bm{\theta}=(\theta_{1},\ldots,\theta_{n^{2}})\in\mathbb{R}^{n^{2}}$ in the probe as
		\begin{align}
			\lvert \psi_{\bm{r}}(\bm{\theta}) \rangle =&~  e^{-i \sum_{\mu=1}^{n^{2}} \theta_{\mu} \hat{g}_{\mu}} \lvert \psi_{\bm{r}} \rangle,
		\end{align}
		and consider local estimation around the identity, $\bm{\theta}=\bm{0}$, where the generator fluctuations on the canonical squeezed probe determine the quantum Fisher matrix,
		\begin{align}
			\begin{aligned}
				F_{\mu\nu} =&~ 4 \, \mathrm{Re} \left[ \langle \partial_{\mu} \psi \vert \partial_{\nu} \psi \rangle - \langle \partial_{\mu} \psi \vert \psi \rangle \langle \psi \vert \partial_{\nu} \psi \rangle \right], \\
				=&~ 2 \left\langle \left\{ \Delta \hat{g}_{\mu}, \Delta \hat{g}_{\nu} \right\} \right\rangle,
				\qquad
				\Delta \hat{g}_{\mu} = \hat{g}_{\mu} - \left\langle \hat{g}_{\mu} \right\rangle,
			\end{aligned}
		\end{align}
		where $\lvert\psi\rangle=\lvert\psi_{\bm{r}}(\bm{\theta})\rangle$ and $\partial_{\mu}=\partial/\partial\theta_{\mu}$ is the derivative with respect to the parameter $\theta_{\mu}$.
		We evaluate the derivatives at $\bm{\theta}=\bm{0}$ and the expectation values in $\lvert\psi_{\bm{r}}\rangle$.
		Since the probe is a pure centered Gaussian state, Wick contraction reduces every fourth-order bosonic moment entering the generator covariances to products of second moments~\cite{Wick1950p268}, and the resulting quantum Fisher matrix depends only on the Gaussian covariance matrix and is diagonal in the canonical passive basis,
		\begin{align}
			\begin{aligned}
				F^{\mathrm{rot}}_{j} =&~ 2 \sinh^{2} \left( 2r_{j} \right), \\
				F^{\mathrm{bsY}}_{jk} =&~ 4 \sinh^{2} \left( r_{j} - r_{k} \right), \\
				F^{\mathrm{bsX}}_{jk} =&~ 4 \sinh^{2} \left( r_{j} + r_{k} \right),
			\end{aligned}
		\end{align}
		under the vacuum covariance convention $\bm{\sigma}_{0} = (1/2) \bm{I}_{2n}$, with $\bm{I}_{2n}$ the $2n$-dimensional identity matrix.
		These weights follow from established Gaussian phase and beam-splitter QFI formulas~\cite{Sparaciari2016p023810,Safranek2016p062313}; we derive their common restricted-root form in Sec.~S6 of the Supplemental Material.
		Each passive channel depends on one of the linear forms $2r_{j}$, $r_{j}-r_{k}$, or $r_{j}+r_{k}$ and becomes locally unidentifiable exactly where that form vanishes.
		
		An invariantly orthonormal basis instead uses $ \widetilde{g}^{\mathrm{rot}}_{j} = \hat g^{\mathrm{rot}}_{j}, \widetilde{g}^{\mathrm{bsX,bsY}}_{jk} = \hat g^{\mathrm{bsX,bsY}}_{jk} / \sqrt{2}$.
		The generator Gram matrix $\bm{M}$, with entries $M_{\mu\nu}=\operatorname{Re}\operatorname{Tr}(\bm{h}_{\mu}\bm{h}_{\nu})$, converts the Fisher matrix to $\widetilde{\bm{F}}=\bm{M}^{-\frac{1}{2}} \bm{F} \bm{M}^{-\frac{1}{2}}$.
		Passive conjugation acts orthogonally in this normalized basis, so the spectrum of $\widetilde{\bm{F}}$ is independent of the passive Bloch--Messiah frame.
		In the angle basis a frame change gives a generally nonorthogonal congruence of $\bm{F}$, which preserves rank but can change ordinary eigenvalues.
		The angle-normalized design applies to canonical probes, or to correspondingly conjugated generators, and does not optimize over arbitrary passive probe orientations against fixed laboratory generators.
		
		%%%%%%%%%%%%%%%%%%%%%%%%%%%%%%%%%%%%%%%%%%%%%%%%%%%%%%%%%%%%%%%%%%%%%%%%%%%%%%
		\section{Restricted-root identifiability}
		\label{sec:Sec3}
		%%%%%%%%%%%%%%%%%%%%%%%%%%%%%%%%%%%%%%%%%%%%%%%%%%%%%%%%%%%%%%%%%%%%%%%%%%%%%%
		
		The linear forms $2r_{j}$, $r_{j}-r_{k}$, and $r_{j}+r_{k}$ define the positive restricted roots of $\mathrm{Sp}(2n,\mathbb{R})/\mathrm{U}(n)$ on the squeezing flat, where $\{\bm{e}_{j}\}$ is the basis of linear forms dual to the diagonal squeezing directions, $\bm{e}_{j}(\bm{r})=r_{j}$,
		\begin{align}
			\Sigma_{+} =&~ \left\{ 2\bm{e}_{j} \right\}_{j=1}^{n} \cup \left\{ \bm{e}_{j}-\bm{e}_{k},\bm{e}_{j}+\bm{e}_{k} \right\}_{1\leq j<k\leq n},
		\end{align}
		with root system $C_{n}$~\cite{Helgason1984,Helgason2001,Knapp1996} and generator correspondence,
		\begin{align}
			\begin{aligned}
				2 \bm{e}_{j} &\quad \longleftrightarrow \quad &\hat{g}^{\mathrm{rot}}_{j}, \\
				\bm{e}_{j}-\bm{e}_{k} &\quad \longleftrightarrow \quad &\hat{g}^{\mathrm{bsY}}_{jk}, \\
				\bm{e}_{j}+\bm{e}_{k} &\quad \longleftrightarrow \quad &\hat{g}^{\mathrm{bsX}}_{jk}.
			\end{aligned}
		\end{align}
		The correspondence is a bijection because the $C_{n}$ system has $n+n(n-1)=n^{2}$ positive roots, matching $\dim\mathrm{u}(n)=n^{2}$, as it must since every restricted root of this symmetric pair has multiplicity one and the centralizer of the squeezing flat in the passive algebra is trivial.
		This generator correspondence allows us to label each passive Fisher weight by its associated positive restricted root and to write all weights in the unified form,
		\begin{align}
			F_{\bm{\alpha}}(\bm{r}) =&~ c_{\bm{\alpha}} \sinh^{2} \left[ \bm{\alpha}(\bm{r}) \right],
			\qquad
			c_{2\bm{e}_{j}} = 2,
			\qquad
			c_{\bm{e}_{j}\pm\bm{e}_{k}} = 4,
		\end{align}
		for $\bm{\alpha}\in\Sigma_{+}$, where $c_{\bm{\alpha}}$ is the generator-normalization coefficient.
		The $\sinh^{2}[\bm{\alpha}(\bm{r})]$ radial dependence and the root-wall stabilizers are standard features of the polar decomposition of this symmetric space~\cite{Helgason1984,Helgason2001}; the restricted-root correspondence identifies their physical sensing channels.
		Since $F_{\bm{\alpha}}(\bm{r})$ vanishes exactly when $\bm{\alpha}(\bm{r})=0$, the passive Fisher matrix loses rank on
		\begin{align}
			\mathcal{A}_{C_{n}} =&~ \bigcup_{\bm{\alpha} \in \Sigma_{+}} \ker \bm{\alpha}.
		\end{align}
		Each hyperplane $\ker\bm{\alpha}$ identifies the passive generator that loses local identifiability there, while every squeezing spectrum outside $\mathcal{A}_{C_{n}}$ gives full passive Fisher rank.
		
		The positive restricted roots also organize the action of squeezing on the passive generators in phase space.
		We write $\bm{K}_{\bm{\alpha}}$ for the phase-space matrix of the passive generator associated with $\bm{\alpha}\in\Sigma_{+}$ and $\bm{P}_{\bm{\alpha}}$ for its corresponding active direction.
		The phase-space matrix of the squeezing exponent,
		\begin{align}
			\bm{A}(\bm{r}) =&~
			\begin{pmatrix}
				-\bm{R} & \bm{0} \\
				\bm{0} & \bm{R}
			\end{pmatrix},
			\qquad
			\bm{R} = \mathrm{diag} \left( r_{1},\ldots,r_{n} \right),
		\end{align}
		acts within each pair $\left(\bm{K}_{\bm{\alpha}},\bm{P}_{\bm{\alpha}}\right)$ as
		\begin{align}
			\begin{aligned}
				\left[ \bm{A}(\bm{r}),\bm{K}_{\bm{\alpha}} \right] =&~ \bm{\alpha}(\bm{r}) \bm{P}_{\bm{\alpha}}, \\
				\left[ \bm{A}(\bm{r}),\bm{P}_{\bm{\alpha}} \right] =&~ \bm{\alpha}(\bm{r}) \bm{K}_{\bm{\alpha}},
			\end{aligned}
		\end{align}
		giving the conjugated passive direction,
		\begin{align}
			e^{-\bm{A}(\bm{r})} \bm{K}_{\bm{\alpha}} e^{\bm{A}(\bm{r})} =&~
			\cosh \left[ \bm{\alpha}(\bm{r}) \right] \bm{K}_{\bm{\alpha}}
			- \sinh \left[ \bm{\alpha}(\bm{r}) \right] \bm{P}_{\bm{\alpha}},
		\end{align}
		with active component proportional to $\sinh \left[ \bm{\alpha}(\bm{r}) \right]$.
		
		\begin{figure*}[t]
			\centering
			\includegraphics[width=150mm]{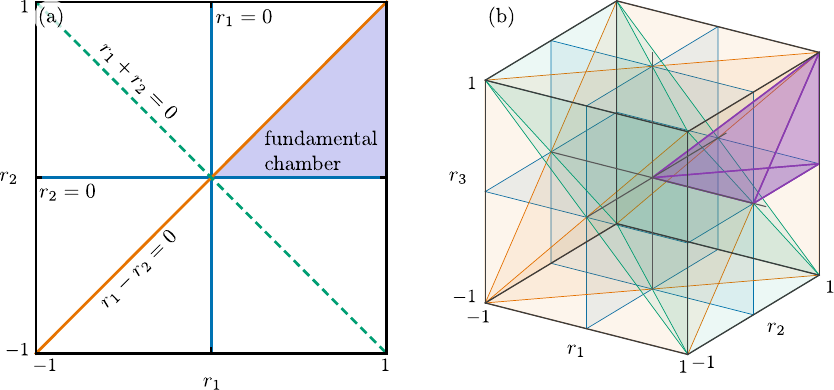}
			\caption{ Restricted-root arrangements in squeezing-spectrum space.
				(a) Arrangement for the $C_{2}$ restricted-root system.
				The four singular lines $r_{1}=0$, $r_{2}=0$, $r_{1}=r_{2}$, and $r_{1}+r_{2}=0$ correspond to the vanishing Fisher weights of the $\mathrm{rot}_{1}$, $\mathrm{rot}_{2}$, $\mathrm{bsY}_{12}$, and $\mathrm{bsX}_{12}$ channels.
				The shaded region indicates the fundamental Weyl chamber.
				(b) Arrangement for the $C_{3}$ restricted-root system.
				The singular planes partition the three-dimensional squeezing space into Weyl chambers related by permutations and sign changes of the squeezing spectrum.
				Each plane identifies the passive generator that becomes locally unidentifiable. }
			\label{fig:Fig1}
		\end{figure*}
		
		Figure~\ref{fig:Fig1} shows the restricted-root arrangement for the first two nontrivial cases, illustrating how each singular hyperplane partitions squeezing-spectrum space into regular regions separated by Fisher-null walls.
		The full squeezing flat, $\bm{r}\in\mathbb{R}^{n}$, is divided by the singular hyperplanes $r_{j}=0$, $r_{j}-r_{k}=0$, and $r_{j}+r_{k}=0$.
		Passive unitary changes of reference frame generated by mode permutations and phase rotations relate squeezing spectra that differ only by permutations and sign changes of their entries, corresponding to the restricted Weyl-group action~\cite{Helgason2001,Bourbaki2002,Knapp1996}, so we restrict the probe design to the fundamental chamber, $r_{1} \geq \cdots \geq r_{n} \geq 0$.
		Within this chamber, the independent singular walls are the vanishing-squeezing boundary $r_{n}=0$ and the adjacent equal-squeezing walls $r_{j}=r_{j+1}$, $j=1,\ldots,n-1$.
		The sum-root condition $r_{j}+r_{k}=0$ occurs only when both corresponding squeezing parameters vanish and is already contained in the vanishing-squeezing boundary.
		A vanishing squeezing parameter removes local phase sensitivity from the corresponding unsqueezed mode, while an equal-squeezing wall $r_{j}=r_{k}$ removes sensitivity to the corresponding $\mathrm{bsY}$ beam-splitter quadrature.
		Equal squeezing is also used in the sequential phase model~\cite{Frigerio2026p2540001}, but its phase redundancy is independent of the squeezing spectrum and is distinct from this covariance wall.
		The classification is local at the identity, $\bm{\theta}=\bm{0}$.
		At another passive operating point it describes the full tangent model in a correspondingly conjugated generator basis.
		A fixed nonlinear parameterization can introduce an additional Jacobian, so its coordinate Fisher matrix need not retain these diagonal entries or eigenvalues.
		
		%%%%%%%%%%%%%%%%%%%%%%%%%%%%%%%%%%%%%%%%%%%%%%%%%%%%%%%%%%%%%%%%%%%%%%%%%%%%%%
		\section{Probe design}
		\label{sec:Sec4}
		%%%%%%%%%%%%%%%%%%%%%%%%%%%%%%%%%%%%%%%%%%%%%%%%%%%%%%%%%%%%%%%%%%%%%%%%%%%%%%
		
		Our restricted-root classification identifies the squeezing spectra for which one or more passive channels become locally unidentifiable.
		Probe design requires choosing a spectrum $\bm{r}^{\star}$ inside the fundamental chamber that remains separated from every singular wall.
		We first maximize the smallest positive-root evaluation at fixed Euclidean norm,
		\begin{align}
			\bm{r}^{\star} =&~
			\arg\max_{\substack{\lVert\bm{r}\rVert=1\\
					r_{1}\geq\cdots\geq r_{n}\geq0}}
			\min_{\bm{\alpha}\in\Sigma_{+}}
			\bm{\alpha}(\bm{r}).
		\end{align}
		The algebraic root margin
		\begin{align}
			m(\bm{r}) =&~ \min_{\bm{\alpha}\in\Sigma_{+}} \bm{\alpha}(\bm{r})
		\end{align}
		measures the smallest argument entering the passive Fisher weights and vanishes on the singular arrangement.
		It is an algebraic root margin because the $C_{n}$ roots have two lengths,
		\begin{align}
			\left\lvert 2\bm{e}_{j}\right\rvert^{2} =&~ 4,
			\qquad
			\left\lvert\bm{e}_{j}\pm\bm{e}_{k}\right\rvert^{2} = 2.
		\end{align}
		Its unique maximizer lies along the dual Weyl vector $\bm{\rho}^{\vee}$, the half-sum of the positive coroots~\cite{Bourbaki2002,Knapp1996},
		\begin{align}
			\bm{r}^{\star} \propto \bm{\rho}^{\vee} =&~ \frac{1}{2} \left( 2 n - 1, 2 n - 3, \ldots, 3, 1 \right),
		\end{align}
		which gives the squeezing ratios $3:1$, $5:3:1$, and $7:5:3:1$ for $n=2,3,4$.
		
		Our scale-free optimization identifies the chamber direction that maximizes the smallest root argument, but it does not yet impose the physical squeezing resource or maximize the weakest Fisher weight.
		The physical probe-design problem holds the mean photon number constant,
		\begin{align}
			N =&~ \sum_{j=1}^{n}\sinh^{2} r_{j},
		\end{align}
		and seeks the spectrum that maximizes
		\begin{align}
			w(\bm{r}) =&~ \min_{\bm{\alpha}\in\Sigma_{+}}
			c_{\bm{\alpha}}\sinh^{2}\left[\bm{\alpha}(\bm{r})\right].
		\end{align}
		
		Because the passive Fisher matrix is diagonal in the canonical passive basis, its spectrum is the set of Fisher weights, and the objective $w(\bm{r})$ is its smallest eigenvalue,
		\begin{align}
			w(\bm{r}) =&~ \lambda_{\min}\left[ \bm{F}_{\mathrm{passive}}(\bm{r}) \right].
		\end{align}
		Maximizing $w$ at fixed photon number applies the E-optimal criterion of experimental design~\cite{Kiefer1974p849,Pukelsheim1993} to this family of quantum Fisher matrices in the canonical angle normalization.
		For a general Fisher matrix the weakest-channel objective only bounds the smallest eigenvalue from above, $\min_{\mu}F_{\mu\mu}\geq\lambda_{\min}(\bm{F})$ by the Rayleigh principle; the two coincide for the canonical squeezed probe because the restricted-root decomposition diagonalizes the passive response.
		For a positive-definite Fisher matrix,
		\begin{align}
			\max_{\lVert\bm{u}\rVert=1}\bm{u}^{\mathrm{T}}\bm{F}^{-1}\bm{u}
			=&~ \frac{1}{\lambda_{\min}(\bm{F})}.
		\end{align}
		The design minimizes the largest directional inverse-QFI bound per copy.
		A basis generator attains this maximum; when the smallest eigenvalue is degenerate, any unit vector in its eigenspace also attains it.
		Simultaneous attainability of the multiparameter bound requires a separate compatibility analysis.
		
		For $n\geq2$ in the fundamental chamber, the smallest squeezing $r_{n}$ determines the weakest local-rotation channel, while the smallest adjacent gap,
		\begin{align}
			\delta =&~ \min_{1\leq j\leq n-1}\left(r_{j}-r_{j+1}\right),
		\end{align}
		determines the weakest difference-root channel.
		Every sum-root channel satisfies
		\begin{align}
			4\sinh^{2}\left(r_{i}+r_{j}\right) \geq&~
			2\sinh^{2}\left(2r_{n}\right),
		\end{align}
		so a sum-root channel cannot lower the minimum set by the phase and difference-root channels, and the physical objective reduces to
		\begin{align}
			w(\bm{r}) =&~ \min \left\{ 2 \sinh^{2} \left(2 r_{n} \right), 4 \sinh^{2} \delta \right\}.
		\end{align}
		
		For a prescribed weakest weight $w$, we obtain the smallest photon budget by saturating the smallest squeezing and every adjacent gap.
		With the smallest squeezing $a=r_{n}>0$ and common adjacent gap $b>0$, the optimal spectrum has the arithmetic progression,
		\begin{align}
			r_{j} =&~ a+(n-j)b, \qquad
			2 \sinh^{2} \left( 2 a \right) = 4 \sinh^{2} b = w.
		\end{align}
		The photon budget increases strictly with every $r_{j} \geq 0$, so the same spectrum uniquely maximizes $w$ at fixed $N$.
		The exact optimum depends on the available photon number through the relation
		\begin{align}
			\sinh b =&~ \frac{1}{\sqrt{2}} \sinh \left( 2 a \right).
		\end{align}
		As the photon budget increases, $a \rightarrow \infty$ and $b/a \rightarrow 2$, and the corresponding arithmetic progression approaches the dual-Weyl ray, $\bm{r} \propto ( 2 n - 1, 2 n - 3, \ldots, 3, 1 )$, recovering the solution of the scale-free optimization problem.
		For $n=1$, the single phase has $w=2\sinh^{2}(2r_{1})$ and $r_{1}=\operatorname{arcsinh}\sqrt{N}$.
		Figure~\ref{fig:Fig2} illustrates both optimization problems.
		The left panel shows the finite-resource objective over the $C_{2}$ fundamental chamber, while the right panel shows how the exact finite-photon optimum approaches the dual-Weyl direction as the available squeezing resource increases.
		
		For comparison, the invariantly orthonormal basis gives $\widetilde F_{\bm{\alpha}}=2\sinh^{2}[\bm{\alpha}(\bm{r})]$ for every positive root.
		Its worst weight is $\widetilde w=2\sinh^{2}m$, where $m=\min_{\bm{\alpha}\in\Sigma_+}\bm{\alpha}(\bm{r})$ in the fundamental chamber.
		At fixed $N>0$ the unique invariantly normalized optimum is $r_j=(2n-2j+1)a$, with $\sum_{j=1}^{n}\sinh^{2}[(2n-2j+1)a]=N$.
		We prove this exact dual-Weyl result in Sec.~S7 of the Supplemental Material.
		Both optimal spectra, in the angle and invariant normalizations, have equal adjacent gaps.
		In the angle normalization, the ratio of the adjacent gap to the smallest squeezing approaches two as the photon number increases.
		
		\begin{figure*}[t]
			\centering
			\includegraphics[width=150mm]{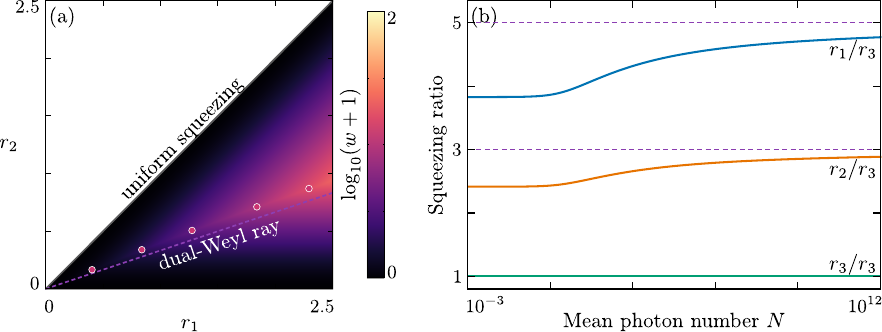}
			\caption{ E-optimal (maximin) probe design in the canonical angle normalization.
				(a) The quantity $\log_{10}[1+w(\bm{r})]$ over the $C_{2}$ fundamental chamber. Markers show optima at $N=0.2,1,3,10,25$. The uniform-squeezing ray lies on the Fisher-null wall, whereas the finite-photon optimum remains in the chamber interior and moves toward the dual-Weyl direction.
				(b) Normalized squeezing ratios for the exact optimal spectrum of the three-mode problem. The range $10^{-3}\le N\le10^{12}$ displays the slow convergence to the dual-Weyl ratios $5:3:1$ (dashed lines). These dashed ratios also give the exact optimum at every $N$ under the invariant generator normalization. }
			\label{fig:Fig2}
		\end{figure*}

		%%%%%%%%%%%%%%%%%%%%%%%%%%%%%%%%%%%%%%%%%%%%%%%%%%%%%%%%%%%%%%%%%%%%%%%%%%%%%%
		\section{Local sensing versus network calibration}
		\label{sec:Sec5}
		%%%%%%%%%%%%%%%%%%%%%%%%%%%%%%%%%%%%%%%%%%%%%%%%%%%%%%%%%%%%%%%%%%%%%%%%%%%%%%
		
		For $n$ independent local phases, the Fisher matrix is diagonal,
		\begin{align}
			\bm{F}_{\mathrm{phase}} =&~ \mathrm{diag} \left( 2 \sinh^{2} \left( 2 r_{1} \right), \ldots, 2 \sinh^{2} \left( 2r_{n} \right)
			\right),
		\end{align}
		and the corresponding A-optimal cost is the sum of the individual inverse-QFI sensitivities~\cite{Kiefer1974p849,Pukelsheim1993},
		\begin{align}
			C_{\mathrm{phase}}(\bm{r}) =&~ \mathrm{Tr}\left(\bm{F}_{\mathrm{phase}}^{-1}\right)
			= \sum_{j=1}^{n} \frac{1}{2\sinh^{2}\left(2r_{j}\right)}.
		\end{align}
		At fixed mean photon number, this cost is uniquely minimized by the uniform squeezing spectrum,
		\begin{align}
			r_{1} = \cdots =r_{n} =&~ \operatorname{arcsinh} \sqrt{\frac{N}{n}}.
		\end{align}
		Local-phase sensing favors equal distribution of the squeezing resource, in contrast with the nonuniform arithmetic progression selected by the E-optimal network objective.
		This optimum follows from the single-mode squeezed-vacuum phase QFI~\cite{Monras2006p033821} and convex resource allocation.
		The phase and network objectives apply to different parameter spaces, so their comparison concerns both the estimation task and the scalar cost.
		For the full passive model with $n\geq2$, uniform squeezing gives $\lambda_{\min}=0$ and $\det\bm{F}=0$, while $\mathrm{Tr}(\bm{F}^{-1})$ diverges on approach to the wall.
		A finite trace $\mathrm{Tr}\,\bm{F}$ alone does not ensure that every parameter is identifiable.
		
		For passive-network calibration, each mode pair has two beam-splitter quadratures with individual inverse-QFI costs,
		\begin{align}
			\begin{aligned}
				\frac{1}{F^{\mathrm{bsY}}_{ij}} =&~ \frac{1}{4 \sinh^{2} \left( r_{i} -r_{j} \right)}, \\
				\frac{1}{F^{\mathrm{bsX}}_{ij}} =&~ \frac{1 }{4 \sinh^{2} \left( r_{i} + r_{j} \right)}.
			\end{aligned}
		\end{align}
		Uniform squeezing places the probe on every equal-squeezing wall,
		\begin{align}
			F^{\mathrm{bsY}}_{ij} =&~ 0, \qquad 1 \leq i < j \leq n,
		\end{align}
		so all $n(n-1)/2$ $\mathrm{bsY}$ coupling directions become locally unidentifiable.
		The common intersection of these walls is the uniform-squeezing ray and has codimension $n-1$ in squeezing-spectrum space.
		No measurement can recover a parameter direction whose QFI vanishes~\cite{Helstrom1976,Braunstein1994p3439}.
		A probe required to sense local phases while retaining sensitivity to all passive-network couplings must depart from the uniform squeezing spectrum.
		
		Changing the squeezing spectrum removes the covariance-wall degeneracy.
		In contrast, an intermediate squeezer separates successive phase encodings by changing their effective generators~\cite{Frigerio2026p2540001}; unequal input squeezing alone cannot remove their bare phase-sum redundancy.
		The inverse-QFI network weights characterize the individual coupling directions; whether their bounds can be attained simultaneously is a separate multiparameter-compatibility question.
		
		For $n=3$ and $N=2$, the E-optimal network spectrum $\bm{r}_{\mathrm{network}} \simeq \left( 1.021, 0.642, 0.262 \right)$ gives a local phase cost $C_{\mathrm{phase}}\left(\bm{r}_{\mathrm{network}}\right) \simeq 1.87$.
		Using the ideal quadrature-variance convention $s_{j}=20r_{j}/\ln 10$, the optimal spectrum corresponds to
		$\bm{s}_{\mathrm{network}}\simeq(8.87,5.57,2.28)\,\mathrm{dB}$.
		These values describe the input probe before loss and other experimental imperfections and are below the $10\,\mathrm{dB}$ squeezing demonstrated with cavity and waveguide sources~\cite{Hagemann2024p7954,Hirota2026p7958}.
		A cavity source has reached $12.6\,\mathrm{dB}$ and maintained more than $10\,\mathrm{dB}$ over $50\,\mathrm{h}$ with a $96.6\%$ duty cycle~\cite{Shajilal2022p37213}.
		
		At this spectrum, $\lambda_{\min}\simeq0.6033$ is attained by $\mathrm{rot}_{3}$, $\mathrm{bsY}_{12}$, and $\mathrm{bsY}_{23}$, consistent with the balance condition between the weakest phase and difference-root weights.
		The largest weight is $\lambda_{\max}\simeq28.68$ on $\mathrm{rot}_{1}$, giving the spectral condition number $\kappa\simeq47.53$.
		The finite value establishes invertibility but does not by itself imply good conditioning.
		E-optimality maximizes $\lambda_{\min}$, whereas condition-number minimization concerns the ratio $\kappa=\lambda_{\max}/\lambda_{\min}$.
		Indeed, along the E-optimal family, $\ln\kappa=4(n-1)b+\mathcal{O}(1)$ in the strong-resource limit.
		A fully identifiable probe can still develop the broad Fisher hierarchy associated with classical sloppiness~\cite{Gutenkunst2007pe189,Transtrum2015p010901}.
		Minimizing $\kappa$ at fixed photon number is a separate design problem.
		
		At the same photon number, the uniform spectrum $\bm{r}_{\mathrm{phase}} \simeq \left( 0.745, 0.745, 0.745 \right)$ minimizes the local-phase A-optimal cost, $C_{\mathrm{phase}}\left(\bm{r}_{\mathrm{phase}}\right) = 27/80 \simeq 0.34$.
		This spectrum lies on all three equal-squeezing walls, so every $\mathrm{bsY}$ coupling direction is locally unidentifiable.
		
		The nonuniform spectrum restores sensitivity to every passive coupling direction, but it increases the local-phase A-optimal cost, meaning a worse aggregate precision for estimating the local phases.
		Figure~\ref{fig:Fig3} compares the passive Fisher spectra associated with the two optimal probes.
		The E-optimal spectrum gives every passive channel a positive weight, whereas uniform squeezing retains sensitivity to local rotations and sum-root beam-splitter channels but eliminates every difference-root channel.
		
		\begin{figure*}[t]
			\centering
			\includegraphics[width=150mm]{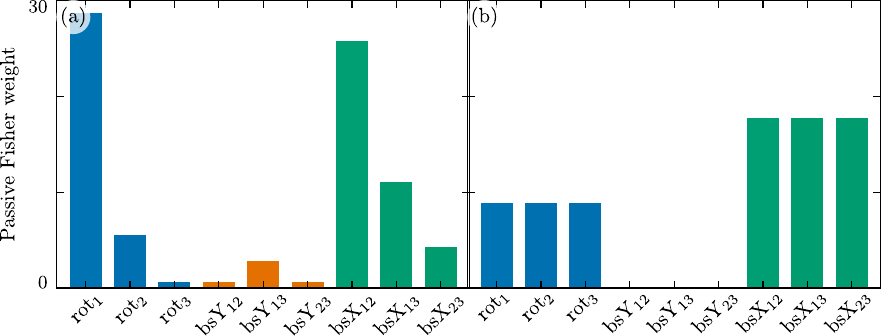}
			\caption{ Passive Fisher spectra for the two competing probe-design objectives at $n=3$ and mean photon number $N=2$.
				(a) The E-optimal network probe redistributes the squeezing resource according to the optimal arithmetic progression.
				Every passive channel acquires a nonzero Fisher weight, although the aggregate precision for local-phase estimation decreases.
				(b) Uniform squeezing minimizes the local-phase A-optimal cost but places the probe on every equal-squeezing wall, causing all three difference-root beam-splitter channels to become Fisher-null. }
			\label{fig:Fig3}
		\end{figure*}
		
		%%%%%%%%%%%%%%%%%%%%%%%%%%%%%%%%%%%%%%%%%%%%%%%%%%%%%%%%%%%%%%%%%%%%%%%%%%%%%%
		\section{Measurement compatibility}
		\label{sec:Sec6}
		%%%%%%%%%%%%%%%%%%%%%%%%%%%%%%%%%%%%%%%%%%%%%%%%%%%%%%%%%%%%%%%%%%%%%%%%%%%%%%
		
		Passive identifiability and measurement compatibility describe distinct properties of the local metrological model.
		Our restricted roots determine whether a passive generator produces a nonzero tangent on the canonical squeezed probe,
		while compatibility depends on the expectation values of generator commutators.
		
		For a pure unitary model, a pair of parameters satisfies the standard weak-commutativity condition when
		\begin{align}
			\mathcal{C}_{\mu\nu} =&~ \frac{1}{2i} \left\langle \left[ \hat{g}_{\mu}, \hat{g}_{\nu} \right] \right\rangle = 0.
		\end{align}
		For an identifiable pure-state submodel, vanishing mean commutators for every parameter pair are necessary and sufficient for a measurement to attain the SLD quantum Cram\'er--Rao matrix bound at the operating point, with locally unbiased estimators and unrestricted measurements~\cite{Matsumoto2002p3111,Ragy2016p052108}.
		We use measurement compatibility in this local sense, which depends on the probe and operating point~\cite{Nichols2018p012114,Imai2026p2602.12097}.
		A Fisher-null generator weakly commutes trivially with every direction, so we assess compatibility and identifiability jointly.
		
		For the canonical squeezed probe, all passive-generator expectation values vanish except those of the local number operators,
		\begin{align}
			\left\langle \hat{g}_{j}^{\mathrm{rot}} \right\rangle =&~ \sinh^{2} r_{j}.
		\end{align}
		The only passive commutator with a nonzero expectation value couples the two beam-splitter quadratures of the same mode pair,
		\begin{align}
			\left[ \hat{g}_{jk}^{\mathrm{bsX}}, \hat{g}_{jk}^{\mathrm{bsY}} \right] =&~ 2 i \left( \hat{g}_{j}^{\mathrm{rot}} - \hat{g}_{k}^{\mathrm{rot}} \right),
		\end{align}
		giving a mean commutator
		\begin{align}
			\begin{aligned}
				\mathcal{C}_{\mathrm{bsX}_{jk},\mathrm{bsY}_{jk}} =&~ \sinh \left( r_{j} + r_{k} \right) \sinh \left( r_{j} - r_{k} \right),
			\end{aligned}
		\end{align}
		which vanishes precisely when $r_{j}=r_{k}$ or $r_{j}=-r_{k}$.
		The passive mean-commutator tensor decomposes into one antisymmetric block for each mode pair,
		\begin{align}
			\bm{\mathcal{C}}_{jk} =&~
			\begin{pmatrix}
				0 & \mathcal{C}_{\mathrm{bsX}_{jk},\mathrm{bsY}_{jk}} \\
				-\mathcal{C}_{\mathrm{bsX}_{jk},\mathrm{bsY}_{jk}} & 0
			\end{pmatrix},
		\end{align}
		with null rows and columns for the local rotations.
		
		The restricted-root Fisher weights give the exact relation,
		\begin{align}
			\left( \mathcal{C}_{\mathrm{bsX}_{jk},\mathrm{bsY}_{jk}} \right)^{2} =&~ \frac{1}{16} F_{jk}^{\mathrm{bsX}} F_{jk}^{\mathrm{bsY}}.
		\end{align}
		At a regular squeezing spectrum, $\bm{\alpha}(\bm{r}) \neq 0$ for every $\bm{\alpha} \in \Sigma_{+}$, both Fisher weights are nonzero and
		\begin{align}
			\mathcal{C}_{\mathrm{bsX}_{jk},\mathrm{bsY}_{jk}} \neq&~ 0,
		\end{align}
		so the two identifiable beam-splitter quadratures of every mode pair are incompatible.
		The canonical centered squeezed probe has no squeezing spectrum for which both beam-splitter quadratures of one mode pair are simultaneously identifiable and weakly compatible.
		
		In fact, the incompatibility is maximal in a normalized quantum-geometric sense.
		The positive semidefinite tangent Gram matrix is $\bm{Q}=\bm{F}/4+i\bm{\mathcal{C}}$, so each two-parameter principal minor gives
		\begin{align}
			\mathcal C_{\mu\nu}^{2}
			\leq&~ \frac{F_{\mu\mu}F_{\nu\nu}-F_{\mu\nu}^{2}}{16}.
		\end{align}
		Each identifiable beam-splitter pair saturates this inequality, since its block of $\bm{Q}$ has rank one and its centered state tangents are complex collinear.
		For $\bm{F}>0$, we define the dimensionless incompatibility measure~\cite{Carollo2019p094010}
		\begin{align}
			R =&~ \left\lVert i\bm{F}^{-1/2}\bm{D}\bm{F}^{-1/2}\right\rVert_{\mathrm{op}},
			\qquad \bm{D}=4\bm{\mathcal{C}},
		\end{align}
		where $\bm{D}$ is the SLD mean-commutator matrix and $\lVert\cdot\rVert_{\mathrm{op}}$ is the operator norm.
		Each identifiable beam-splitter pair attains the maximum $R=1$ because it saturates the quantum-geometric bound; the full regular model likewise has $R=1$ for $n\geq2$.
		A positive cost matrix weights the estimator covariance to define a scalar estimation error.
		The SLD cost is the corresponding weighted trace of $\bm{F}^{-1}$, while the Holevo cost gives the tighter attainable bound that accounts for measurement incompatibility~\cite{DemkowiczDobrzanski2020p363001,Chang2026p126}.
		For pure-state models the Holevo bound is locally attainable with measurements on individual copies, and its cost lies between the SLD cost and twice that cost for the same weight matrix~\cite{Matsumoto2002p3111,Albarelli2020p1911.11036}.
		The value $R=1$ does not imply that every cost matrix attains the upper limit.
		
		At a regular squeezing spectrum, we can construct a maximal pairwise weakly compatible passive submodel with all $n$ local rotations and one beam-splitter quadrature from each mode pair, for a total of
		\begin{align}
			n + \frac{n(n-1)}{2} =&~ \frac{n(n+1)}{2}
		\end{align}
		identifiable parameters.
		The independent choice of $\hat{g}_{jk}^{\mathrm{bsX}}$ or $\hat{g}_{jk}^{\mathrm{bsY}}$ for every mode pair gives
		\begin{align}
			2^{n(n-1)/2}
		\end{align}
		basis-adapted submodels of this maximal dimension, while the full $n^{2}$-parameter passive model is identifiable but not weakly compatible away from the restricted-root arrangement.
		This dimension is maximal among all real tangent subspaces, not only selections of basis generators.
		Indeed, at a regular spectrum $\operatorname{rank}\bm{\mathcal{C}}=n(n-1)$, and an antisymmetric form on an $n^{2}$-dimensional space has maximal isotropic dimension $n^{2}-\operatorname{rank}\bm{\mathcal{C}}/2=n(n+1)/2$.
		These basis-adapted submodels attain this bound. For $n=1$, the single identifiable phase is compatible and $R=0$.
		
		%%%%%%%%%%%%%%%%%%%%%%%%%%%%%%%%%%%%%%%%%%%%%%%%%%%%%%%%%%%%%%%%%%%%%%%%%%%%%%
		\section{Conclusion}
		\label{sec:Sec7}
		%%%%%%%%%%%%%%%%%%%%%%%%%%%%%%%%%%%%%%%%%%%%%%%%%%%%%%%%%%%%%%%%%%%%%%%%%%%%%%
		
		We showed that sloppiness in passive Gaussian metrology is governed by the $C_{n}$ restricted-root structure of the squeezing spectrum.
		Each singular configuration identifies a specific phase-rotation or beam-splitter direction that becomes locally unidentifiable, while the corresponding root evaluation controls the approach to Fisher rank loss and the rate at which the associated Fisher weight vanishes.
		
		At fixed positive photon number, in canonical angle coordinates or their conjugated basis, the E-optimal spectrum is unique within the fundamental chamber, forms an arithmetic progression of squeezing magnitudes, and approaches the consecutive-odd-integer dual-Weyl direction in the strong-resource limit.
		With an invariant generator norm the optimum is exactly dual-Weyl at every photon number and its Fisher spectrum is frame independent.
		This nonuniform design contrasts with the A-optimal local-phase probe, for which uniform squeezing is optimal but makes one beam-splitter quadrature unidentifiable for every mode pair.
		
		Identifiability and measurement compatibility remain distinct but are linked within each mode pair.
		Whenever both beam-splitter quadratures are identifiable, they saturate the quantum-geometric incompatibility bound; weak commutativity occurs only when one is Fisher-null.
		At a regular spectrum, $n(n+1)/2$ is the largest possible dimension of a compatible passive tangent submodel.
		Our restricted-root viewpoint unifies passive rank loss, Fisher weight assignment, optimal probe design in the Kiefer sense, and compatibility in a single algebraic framework.
		Two design problems remain open within the same framework, namely minimizing the condition number of the passive Fisher matrix at fixed photon number, and the joint optimization over squeezing spectrum and coherent displacement, in which both resources compete for the same photon budget.
		
		%%%%%%%%%%%%%%%%%%%%%%% Backmatter %%%%%%%%%%%%%%%%%%%%%%%%%
		
		\section*{Funding}
		The authors received no specific funding for this work.
		
		\section*{Acknowledgments}
		B.~M.~R.~L. thanks Jacinta Alderete Galan for providing daycare support throughout this work.
		He also acknowledges support and hospitality as an affiliate visiting colleague at the Department of Physics and Astronomy, University of New Mexico, as well as fruitful discussions with F.~E. Becerra.
		
		\section*{Disclosures}
		The authors declare no conflicts of interest.
		
		\section*{Tool and Computational Resource}
		We used OpenAI Codex and Anthropic Claude Fable 5.1 (High effort) to produce adversarial review of the manuscript, identify potential gaps in the arguments, and improve clarity and grammar.
		We reviewed the outputs, verified the mathematical arguments and calculations, and made all final decisions on the manuscript.
		We take full responsibility for the accuracy, integrity, and attribution of the final work.
		
		\section*{Data Availability Statement}
		The data that support the findings of this study are available from the corresponding author upon reasonable request.
		
		%%%%%%%%%%%%%%%%%%%%%%% References %%%%%%%%%%%%%%%%%%%%%%%%%

	%	\putbib[references]
	
	%apsrev4-2.bst 2019-01-14 (MD) hand-edited version of apsrev4-1.bst
	%Control: key (0)
	%Control: author (72) initials jnrlst
	%Control: editor formatted (1) identically to author
	%Control: production of article title (-1) disabled
	%Control: page (0) single
	%Control: year (1) truncated
	%Control: production of eprint (0) enabled
	%

	\end{bibunit}
	
	\clearpage
	
	%%%%%%%%%%%%%%%%%%%%%%% Supplemental Material %%%%%%%%%%%%%%%%%%%%%%%%%
	
	\setcounter{section}{0}
	\setcounter{subsection}{0}
	\setcounter{equation}{0}
	\setcounter{figure}{0}
	\setcounter{table}{0}
	
	\renewcommand{\thesection}{S\arabic{section}}
	\renewcommand{\thesubsection}{S\arabic{section}.\arabic{subsection}}
	\renewcommand{\theequation}{S\arabic{equation}}
	\renewcommand{\thefigure}{S\arabic{figure}}
	\renewcommand{\thetable}{S\arabic{table}}
	
	% Distinct PDF destinations for counters reset in the Supplemental Material.
	\renewcommand{\theHsection}{supp.\arabic{section}}
	\renewcommand{\theHsubsection}{supp.\arabic{section}.\arabic{subsection}}
	\renewcommand{\theHequation}{supp.\arabic{equation}}
	\renewcommand{\theHfigure}{supp.\arabic{figure}}
	\renewcommand{\theHtable}{supp.\arabic{table}}
	
	% All theorem-like environments share the `theorem' counter, so only that one
	% exists and only that one may be reset or renumbered here.
	\setcounter{theorem}{0}
	\renewcommand{\thetheorem}{S\arabic{theorem}}
	\renewcommand{\theHtheorem}{supp.\arabic{theorem}}

	\begin{bibunit}[apsrev4-2]
		
		\begin{center}
			
			{\large\bfseries Supplemental Material for\\[0.5em]
				``Root-system structure of sloppiness in passive Gaussian metrology''}
			
			\vspace{1.5em}
			
			A. Kafuri$^{1}$ and B.~M. Rodr\'iguez-Lara$^{2}$
			
			\vspace{0.75em}
			
			{\small
				$^{1}$Departamento de F\'isica, Centro de Investigaci\'on y de Estudios Avanzados del IPN,
				P.O. Box 14-740, CP. 07000, Ciudad de M\'exico, Mexico.\\[0.5em]
				$^{2}$Universidad Polit\'ecnica Metropolitana de Hidalgo,
				Tolcayuca, Hidalgo 43860, Mexico.}
			
		\end{center}
		
		\vspace{1em}
		
		We develop the quadratic bosonic framework and the ordinary and restricted root constructions in Secs.~\ref{sec:S1}--\ref{sec:S6}, deriving the passive Fisher weights used in the probe-optimization proofs of Sec.~\ref{sec:S7} and the compatibility classification of Sec.~\ref{sec:S8}.
		These results allow us to distinguish sequential phase redundancy from covariance-wall degeneracy in the two-mode reduction of Sec.~\ref{sec:S8b} and to examine phase sensing, network calibration, displacement, and isotropic additive noise in Sec.~\ref{sec:S9}.
		The comparison in Sec.~\ref{sec:S10} summarizes the defining structures and physical roles of the ordinary and restricted roots.
		
		%%%%%%%%%%%%%%%%%%%%%%%%%%%%%%%%%%%%%%%%%%%%%%%%%%%%%%%%%%%%%%%%%%%%%%%%%%%%%%
		\section{Quadratic bosonic Gaussian framework}
		\label{sec:S1}
		%%%%%%%%%%%%%%%%%%%%%%%%%%%%%%%%%%%%%%%%%%%%%%%%%%%%%%%%%%%%%%%%%%%%%%%%%%%%%%
		
		A pure zero-mean Gaussian probe admits a Bloch--Messiah factorization into passive transformations and a canonical diagonal squeezing transformation that the spectrum $\bm{r}$ characterizes.
		Choosing the canonical squeezing basis leaves a passive $\mathrm{U}(n)$ orbit that phase rotations and beam-splitter transformations generate.
		The Gaussian contraction matrix determines the Fisher and mean-commutator tensors on this orbit.
		
		Two distinct root constructions of abstract type $C_{n}$ organize the resulting geometry.
		The ordinary roots of the complexified algebra $\mathrm{sp}(2n,\mathbb{C})$, defined relative to the number Cartan, organize quadratic-generator commutators and measurement compatibility.
		The restricted roots of the symmetric pair $\mathrm{Sp}(2n,\mathbb{R})/\mathrm{U}(n)$, defined relative to the diagonal squeezing flat, control the passive Fisher weights, their singular hyperplanes, and the associated probe-optimization problem.
		
		%%%%%%%%%%%%%%%%%%%%%%%%%%%%%%%%%%%%%%%%%%%%%%%%%%%%%%%%%%%%%%%%%%%%%%%%%%%%%%
		\subsection{Bosonic phase space}
		%%%%%%%%%%%%%%%%%%%%%%%%%%%%%%%%%%%%%%%%%%%%%%%%%%%%%%%%%%%%%%%%%%%%%%%%%%%%%%
		
		We consider $n$ bosonic modes with canonical quadratures
		\begin{align}
			\hat{\bm{\chi}} =&~ \left( \hat{q}_{1},\ldots,\hat{q}_{n},\hat{p}_{1},\ldots,\hat{p}_{n} \right)^{\mathrm{T}},
		\end{align}
		satisfying the canonical commutation relations with symplectic form $\bm{\Omega}$,
		\begin{align}
			\left[ \hat{\chi}_{a},\hat{\chi}_{b} \right] =&~ i\Omega_{ab},
			\qquad
			\bm{\Omega} =
			\begin{pmatrix}
				\bm{0} & \bm{I}_{n} \\
				-\bm{I}_{n} & \bm{0}
			\end{pmatrix},
		\end{align}
		where $\bm{I}_{n}$ is the $n$-dimensional identity matrix.
		The annihilation and creation operators are
		\begin{align}
			\hat{a}_{j} =&~ \frac{1}{\sqrt{2}} \left( \hat{q}_{j}+i\hat{p}_{j} \right),
			\qquad
			\hat{a}_{j}^{\dagger} = \frac{1}{\sqrt{2}} \left( \hat{q}_{j}-i\hat{p}_{j} \right).
		\end{align}
		
		A Gaussian state has first moments and a covariance matrix,
		\begin{align}
			d_{a} =&~ \left\langle \hat{\chi}_{a} \right\rangle,
			\qquad
			\sigma_{ab} = \frac{1}{2} \left\langle \left\{ \hat{\chi}_{a}-d_{a},\hat{\chi}_{b}-d_{b} \right\} \right\rangle,
		\end{align}
		and the vacuum has covariance
		\begin{align}
			\bm{\sigma}_{0} =&~ \frac{1}{2}\bm{I}_{2n}.
		\end{align}
		
		%%%%%%%%%%%%%%%%%%%%%%%%%%%%%%%%%%%%%%%%%%%%%%%%%%%%%%%%%%%%%%%%%%%%%%%%%%%%%%
		\subsection{Quadratic generators and Gaussian transformations}
		%%%%%%%%%%%%%%%%%%%%%%%%%%%%%%%%%%%%%%%%%%%%%%%%%%%%%%%%%%%%%%%%%%%%%%%%%%%%%%
		
		\begin{proposition}[Quadratic--symplectic correspondence]
			\label{prop:quadraticsymplectic}
			A Hermitian quadratic bosonic generator,
			\begin{align}
				\hat{g} =&~ \frac{1}{2}\hat{\bm{\chi}}^{\mathrm{T}}\bm{G}\hat{\bm{\chi}}+c,
				\qquad
				\bm{G}=\bm{G}^{\mathrm{T}},
				\qquad
				c\in\mathbb{R},
			\end{align}
			with real symmetric $2n\times2n$ quadratic-form matrix $\bm{G}$ and scalar offset $c$, has the Hamiltonian matrix
			\begin{align}
				\bm{K} =&~ \bm{\Omega}\bm{G}\in\mathrm{sp}(2n,\mathbb{R}),
			\end{align}
			an element of the symplectic algebra obeying
			\begin{align}
				\bm{K}\bm{\Omega}+\bm{\Omega}\bm{K}^{\mathrm{T}} =&~ \bm{0}.
			\end{align}
			The Gaussian unitary $\hat{U}=e^{-i\hat{g}}$ acts linearly on the canonical quadratures,
			\begin{align}
				\hat{U}^{\dagger}\hat{\bm{\chi}}\hat{U} =&~ e^{\bm{K}}\hat{\bm{\chi}},
			\end{align}
			and transforms the Gaussian moments as
			\begin{align}
				\bm{d}^{\prime} =&~ e^{\bm{K}}\bm{d},
				\qquad
				\bm{\sigma}^{\prime} = e^{\bm{K}}\bm{\sigma}e^{\bm{K}^{\mathrm{T}}}.
			\end{align}
			The scalar $c$ affects neither $\bm{K}$ nor the tangent vector in projective Hilbert space.
		\end{proposition}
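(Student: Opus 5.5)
The plan is to work entirely from the canonical commutation relations $[\hat\chi_a,\hat\chi_b]=i\Omega_{ab}$ and reduce everything to one linear commutator identity. First, I compute $[\hat g,\hat\chi_c]$ for $\hat g=\tfrac12\hat{\bm\chi}^{\mathrm T}\bm G\hat{\bm\chi}+c$. Using the Leibniz rule and the symmetry of $\bm G$, this gives
\begin{align}
[\hat g,\hat\chi_c]=\tfrac12 G_{ab}\left(\hat\chi_a[\hat\chi_b,\hat\chi_c]+[\hat\chi_a,\hat\chi_c]\hat\chi_b\right)=-i(\bm\Omega\bm G\hat{\bm\chi})_c .
\end{align}
Here I use $\Omega_{bc}=-\Omega_{cb}$, and I check the sign convention carefully, since this is the only place the overall sign of $\bm K$ is fixed. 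The scalar $c$ drops out immediately because it commutes with everything. It follows that $i[\hat g,\hat{\bm\chi}]=\bm K\hat{\bm\chi}$ with $\bm K=\bm\Omega\bm G$.

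Second, I verify the Lie-algebra condition directly. Since $\bm\Omega^{\mathrm T}=-\bm\Omega$ and $\bm G^{\mathrm T}=\bm G$,
\begin{align}
\bm K\bm\Omega+\bm\Omega\bm K^{\mathrm T}=\bm\Omega\bm G\bm\Omega+\bm\Omega\bm G\bm\Omega^{\mathrm T}=\bm 0 .
\end{align}
Hence $\bm K\in\mathrm{sp}(2n,\mathbb R)$, and $e^{\bm K}$ is symplectic.

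Third, I integrate to obtain the finite action. I set $\hat{\bm\chi}(t)=e^{it\hat g}\hat{\bm\chi}e^{-it\hat g}$. Its derivative is $\tfrac{d}{dt}\hat{\bm\chi}(t)=e^{it\hat g}\,i[\hat g,\hat{\bm\chi}]\,e^{-it\hat g}=\bm K\hat{\bm\chi}(t)$, because $\bm K$ is a c-number matrix and passes through the conjugation. This linear ODE with initial value $\hat{\bm\chi}(0)=\hat{\bm\chi}$ has solution $e^{t\bm K}\hat{\bm\chi}$. Setting $t=1$ gives $\hat U^\dagger\hat{\bm\chi}\hat U=e^{\bm K}\hat{\bm\chi}$. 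The main obstacle is rigor with unbounded operators. I would sidestep it either by working on the dense invariant domain of finite-particle vectors (or Schwartz functions), where quadratic Hamiltonians are essentially self-adjoint, or by using the formal Baker--Campbell--Hausdorff series $e^{\mathrm{ad}}$ applied to a linear operator. The latter closes on linear combinations of the $\hat\chi_a$, so it sums to $e^{\bm K}$ termwise.

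Finally, I read off the moments. For the first moments, $\bm d'=\langle\hat U^\dagger\hat{\bm\chi}\hat U\rangle=e^{\bm K}\bm d$. For the covariance, inserting $\hat U^\dagger\hat\chi_a\hat U-d'_a=(e^{\bm K})_{ab}(\hat\chi_b-d_b)$ into the symmetrized second moment gives $\bm\sigma'=e^{\bm K}\bm\sigma e^{\bm K^{\mathrm T}}$. For the last claim, replacing $\hat g$ by $\hat g+c$ multiplies $\hat U$ by the global phase $e^{-ic}$. Along a parameterized family $e^{-i\theta\hat g}$ it changes the tangent $-i\Delta\hat g\lvert\psi\rangle$ only by a component parallel to $\lvert\psi\rangle$, which vanishes in projective space and cancels in the centered expression $\Delta\hat g=\hat g-\langle\hat g\rangle$. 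This leaves $\bm K$ and the projective tangent unchanged.
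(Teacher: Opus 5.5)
Your proposal is correct and follows the paper's own argument: the canonical commutation relations give $i[\hat g,\hat{\bm\chi}]=\bm K\hat{\bm\chi}$ with $\bm K=\bm\Omega\bm G$, symmetry of $\bm G$ gives $\bm K\bm\Omega+\bm\Omega\bm K^{\mathrm T}=\bm 0$, integrating the Heisenberg equation yields $e^{\bm K}$, and the moment transformations follow from the linear action. Your explicit sign check, the remark on unbounded-operator domains, and the treatment of the scalar $c$ as a global phase whose tangent contribution is parallel to $\lvert\psi\rangle$ only fill in details that the paper's terse proof leaves implicit.
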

		
		\begin{proof}
			The canonical commutation relations give the linear quadrature response,
			\begin{align}
				i\left[ \hat{g},\hat{\bm{\chi}} \right] =&~ \bm{\Omega}\bm{G}\hat{\bm{\chi}} = \bm{K}\hat{\bm{\chi}},
			\end{align}
			and exponentiating the Heisenberg equation lifts it to the finite action $e^{\bm{K}}$.
			Symmetry of $\bm{G}$ gives
			\begin{align}
				\bm{K}\bm{\Omega}+\bm{\Omega}\bm{K}^{\mathrm{T}} =&~ \bm{\Omega}\bm{G}\bm{\Omega} + \bm{\Omega}\bm{G}^{\mathrm{T}}\bm{\Omega}^{\mathrm{T}} = \bm{0},
			\end{align}
			which places $\bm{K}$ in $\mathrm{sp}(2n,\mathbb{R})$.
			The first- and second-moment transformations follow from the linear quadrature action.
		\end{proof}
		
		Adding linear generators extends the quadratic symplectic algebra to the Jacobi algebra,
		\begin{align}
			\mathrm{hsp}(2n,\mathbb{R}) =&~ \mathrm{hw}(n)\rtimes\mathrm{sp}(2n,\mathbb{R}),
		\end{align}
		where the Heisenberg--Weyl algebra $\mathrm{hw}(n)$ contains the linear quadrature generators and the identity~\cite{Berceanu2007p1}.
		This extension includes displacements, which change the first moments and can reveal a passive direction that leaves the covariance invariant~\cite{Weedbrook2012p621,Monras2013p1303.3682}.
		
		%%%%%%%%%%%%%%%%%%%%%%%%%%%%%%%%%%%%%%%%%%%%%%%%%%%%%%%%%%%%%%%%%%%%%%%%%%%%%%
		\section{Physical quadratic basis and Cartan decomposition}
		\label{sec:S2}
		%%%%%%%%%%%%%%%%%%%%%%%%%%%%%%%%%%%%%%%%%%%%%%%%%%%%%%%%%%%%%%%%%%%%%%%%%%%%%%
		
		%%%%%%%%%%%%%%%%%%%%%%%%%%%%%%%%%%%%%%%%%%%%%%%%%%%%%%%%%%%%%%%%%%%%%%%%%%%%%%
		\subsection{Physical Hermitian generators}
		%%%%%%%%%%%%%%%%%%%%%%%%%%%%%%%%%%%%%%%%%%%%%%%%%%%%%%%%%%%%%%%%%%%%%%%%%%%%%%
		
		The physical Hermitian basis adapted to single-mode and mode-pair transformations consists of the phase-rotation and single-mode squeezing generators
		\begin{align}
			\begin{aligned}
				\hat{g}_{j}^{\mathrm{rot}} =&~ \hat{a}_{j}^{\dagger}\hat{a}_{j}, \\
				\hat{g}_{j}^{\mathrm{sqX}} =&~ \frac{1}{2}\left( \hat{a}_{j}^{\dagger 2}+\hat{a}_{j}^{2} \right), \\
				\hat{g}_{j}^{\mathrm{sqY}} =&~ -\frac{i}{2}\left( \hat{a}_{j}^{\dagger 2}-\hat{a}_{j}^{2} \right),
			\end{aligned}
			\qquad
			j=1,\ldots,n,
		\end{align}
		and the beam-splitter and two-mode squeezing generators
		\begin{align}
			\begin{aligned}
				\hat{g}_{jk}^{\mathrm{bsX}} =&~ \hat{a}_{j}^{\dagger}\hat{a}_{k}+\hat{a}_{k}^{\dagger}\hat{a}_{j}, \\
				\hat{g}_{jk}^{\mathrm{bsY}} =&~ -i\left( \hat{a}_{j}^{\dagger}\hat{a}_{k}-\hat{a}_{k}^{\dagger}\hat{a}_{j} \right), \\
				\hat{g}_{jk}^{\mathrm{tsX}} =&~ \hat{a}_{j}^{\dagger}\hat{a}_{k}^{\dagger}+\hat{a}_{j}\hat{a}_{k}, \\
				\hat{g}_{jk}^{\mathrm{tsY}} =&~ -i\left( \hat{a}_{j}^{\dagger}\hat{a}_{k}^{\dagger}-\hat{a}_{j}\hat{a}_{k} \right),
			\end{aligned}
			\qquad
			1\leq j<k\leq n.
		\end{align}
		
		The phase rotations and beam-splitter generators span the passive sector,
		while the single-mode and two-mode squeezing generators span the active sector.
		
		%%%%%%%%%%%%%%%%%%%%%%%%%%%%%%%%%%%%%%%%%%%%%%%%%%%%%%%%%%%%%%%%%%%%%%%%%%%%%%
		\subsection{Compact and noncompact sectors}
		%%%%%%%%%%%%%%%%%%%%%%%%%%%%%%%%%%%%%%%%%%%%%%%%%%%%%%%%%%%%%%%%%%%%%%%%%%%%%%
		
		\begin{proposition}[Physical Cartan decomposition]
			\label{prop:physicalCartan}
			Quadratic operators are understood modulo scalar multiples of the identity, equivalently through their Hamiltonian matrices.
			With this convention, the physical Hermitian basis realizes the Cartan decomposition into the passive subalgebra $\mathfrak{k}$ and the active vector subspace $\mathfrak{p}$,
			\begin{align}
				\mathrm{sp}(2n,\mathbb{R}) =&~ \mathfrak{k}\oplus\mathfrak{p},
				\qquad
				\mathfrak{k}\simeq\mathrm{u}(n),
			\end{align}
			as a direct sum of real vector spaces, with
			\begin{align}
				\begin{aligned}
					\mathfrak{k} =&~ \mathrm{span}_{\mathbb{R}}
					\left\{
					\hat{g}_{j}^{\mathrm{rot}},
					\hat{g}_{jk}^{\mathrm{bsX}},
					\hat{g}_{jk}^{\mathrm{bsY}}
					\right\},
					\\
					\mathfrak{p} =&~ \mathrm{span}_{\mathbb{R}}
					\left\{
					\hat{g}_{j}^{\mathrm{sqX}},
					\hat{g}_{j}^{\mathrm{sqY}},
					\hat{g}_{jk}^{\mathrm{tsX}},
					\hat{g}_{jk}^{\mathrm{tsY}}
					\right\}.
				\end{aligned}
			\end{align}
			Under the real Hermitian bracket
			\begin{align}
				\left[ \hat{g}_{1},\hat{g}_{2} \right]_{\mathrm{H}} =&~ -i\left[ \hat{g}_{1},\hat{g}_{2} \right],
			\end{align}
			the two sectors satisfy
			\begin{align}
				\left[ \mathfrak{k},\mathfrak{k} \right]_{\mathrm{H}} &\subseteq \mathfrak{k},
				&
				\left[ \mathfrak{k},\mathfrak{p} \right]_{\mathrm{H}} &\subseteq \mathfrak{p},
				&
				\left[ \mathfrak{p},\mathfrak{p} \right]_{\mathrm{H}} &\subseteq \mathfrak{k}.
			\end{align}
		\end{proposition}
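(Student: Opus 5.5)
I will prove Proposition~\ref{prop:physicalCartan} by working with Hamiltonian matrices, using the quadratic--symplectic correspondence of Proposition~\ref{prop:quadraticsymplectic}. The map $\hat g\mapsto \bm K=\bm\Omega\bm G$ from quadratic operators modulo scalars to $\mathrm{sp}(2n,\mathbb R)$ is linear and injective, since $\bm\Omega$ is invertible and only the scalar $c$ is discarded. It is also surjective, because every $\bm K\in\mathrm{sp}(2n,\mathbb R)$ gives a symmetric $\bm G=-\bm\Omega\bm K$.

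The bracket must be intertwined correctly. From $i[\hat g,\hat{\bm\chi}]=\bm K\hat{\bm\chi}$ and the Jacobi identity, $-i[\hat g_1,\hat g_2]$ corresponds, up to an overall sign convention, to the matrix commutator $[\bm K_1,\bm K_2]$. The scalar produced by normal ordering is irrelevant modulo the identity. A sign would only flip both sides of each inclusion, so it does not affect the statement.

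First I check that the $n^2+n(n+1)=n(2n+1)$ listed operators give a basis. They are $n$ rot, $n(n-1)$ bs, $2n$ sq and $n(n-1)$ ts generators, totalling $2n^2+n=\dim\mathrm{sp}(2n,\mathbb R)$. Linear independence follows from the monomials they contain. Passive generators involve only $\hat a^\dagger_j\hat a_k$ and independent Hermitian combinations of them. Active generators involve only $\hat a_j\hat a_k$ and $\hat a^\dagger_j\hat a^\dagger_k$, with sq covering $j=k$ and ts covering $j<k$, in real and imaginary combinations. This gives the direct sum $\mathfrak k\oplus\mathfrak p$.

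The bracket relations follow from a grading by excitation number. Let $\hat N=\sum_j \hat a_j^\dagger\hat a_j$. Passive operators satisfy $[\hat N,\hat g]=0$. Active operators split as $\hat g=\hat g_{+}+\hat g_{-}$, where $[\hat N,\hat g_{\pm}]=\pm2\hat g_{\pm}$ and $\hat g_{-}=\hat g_+^\dagger$. Since $\mathrm{ad}_{\hat N}$ is a derivation, the commutator of degree $d_1$ and $d_2$ components has degree $d_1+d_2$. Hence:
\begin{itemize}
\item $[\mathfrak k,\mathfrak k]$ has degree $0$.
\item $[\mathfrak k,\mathfrak p]$ has degrees $\pm2$.
\item $[\mathfrak p,\mathfrak p]$ has degrees $0,\pm4$.
\end{itemize}
Commutators of quadratics are quadratic plus scalars, and no quadratic monomial has degree $\pm4$, so those terms vanish. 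The Hermitian bracket $-i[\cdot,\cdot]$ preserves Hermiticity. Hermitian quadratics of degree $0$ lie in $\mathfrak k$ modulo scalars, and Hermitian quadratics with only degree $\pm2$ parts lie in $\mathfrak p$. This gives the three inclusions.

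Equivalently, in matrices, $\mathfrak k$ consists of the $\bm K$ commuting with $\bm\Omega$, i.e.\ orthogonal symplectic matrices, and $\mathfrak p$ of those anticommuting with it. The Cartan involution $\bm K\mapsto\bm\Omega\bm K\bm\Omega^{-1}$ then gives the inclusions immediately.

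It remains to show $\mathfrak k\simeq\mathrm u(n)$. The map $\hat{\bm a}^\dagger\bm h\hat{\bm a}\mapsto -i\bm h$ is a real-linear bijection from passive generators onto $n\times n$ anti-Hermitian matrices. The identity $[\hat{\bm a}^\dagger\bm h_1\hat{\bm a},\hat{\bm a}^\dagger\bm h_2\hat{\bm a}]=\hat{\bm a}^\dagger[\bm h_1,\bm h_2]\hat{\bm a}$ shows it is a Lie algebra homomorphism.

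The main obstacle is bookkeeping rather than concept. One must confirm that the $\mathrm{Re}/\mathrm{Im}$ combinations with the chosen factors, $\frac12$ for sq versus $1$ for ts, indeed span all degree-$\pm2$ Hermitian quadratics. The Hermitian-part argument settles this without computing structure constants.
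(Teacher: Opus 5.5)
Your proposal is correct and follows essentially the same route as the paper, which also derives the three closure relations from the excitation-number grading: passive generators have grade $0$, active ones grade $\pm2$, and the would-be $\pm4$ terms vanish because pair-creation or pair-annihilation operators commute. Your dimension count, the monomial-independence check, and the explicit homomorphism $\hat{\bm a}^{\dagger}\bm h\hat{\bm a}\mapsto -i\bm h$ spell out the direct-sum and $\mathfrak{k}\simeq\mathrm{u}(n)$ claims that the paper's short proof only asserts.
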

		
		\begin{proof}
			The phase rotations and beam-splitter generators conserve total excitation number and span the compact algebra $\mathrm{u}(n)$, the zero-grade sector under the number operator.
			The single-mode and two-mode squeezing generators shift total excitation number by $\pm2$ and span the noncompact complement, the grade-$\pm2$ sector.
			The three closure relations follow from the number grading, since brackets with a zero-grade generator preserve grade, equal-sign pair-creation or pair-annihilation terms commute, and brackets of opposite-sign quadratic terms return to zero grade modulo a scalar.
		\end{proof}
		
		%%%%%%%%%%%%%%%%%%%%%%%%%%%%%%%%%%%%%%%%%%%%%%%%%%%%%%%%%%%%%%%%%%%%%%%%%%%%%%
		\section{Bloch--Messiah reduction and canonical probe}
		\label{sec:S3}
		%%%%%%%%%%%%%%%%%%%%%%%%%%%%%%%%%%%%%%%%%%%%%%%%%%%%%%%%%%%%%%%%%%%%%%%%%%%%%%
		
		\begin{theorem}[Bloch--Messiah factorization]
			\label{thm:BlochMessiah}
			Every $n$-mode Gaussian unitary with no displacement admits, up to an overall phase, the Bloch--Messiah factorization~\cite{Bloch1962p95,Braunstein2005p513,Weedbrook2012p621,Serafini2017},
			\begin{align}
				\hat{U}_{\mathrm{G}} =&~ \hat{U}_{1} \hat{S}(\bm{r}) \hat{U}_{2},
			\end{align}
			where $\hat{U}_{1}$ and $\hat{U}_{2}$ are passive Gaussian unitaries and $\hat{S}(\bm{r})$ is a canonical diagonal squeezing unitary characterized by the squeezing spectrum
			\begin{align}
				\bm{r} =&~ \left( r_{1},\ldots,r_{n} \right).
			\end{align}
		\end{theorem}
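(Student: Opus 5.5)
The plan is to reduce the statement to a matrix factorization in $\mathrm{Sp}(2n,\mathbb{R})$ and then lift it back to the unitary level. By Proposition~\ref{prop:quadraticsymplectic}, a displacement-free Gaussian unitary $\hat{U}_{\mathrm{G}}$ generated by quadratic Hamiltonians acts on the quadratures as a real symplectic matrix $\bm{S}\in\mathrm{Sp}(2n,\mathbb{R})$. This assignment is a homomorphism from the group generated by exponentials of quadratic operators onto $\mathrm{Sp}(2n,\mathbb{R})$. Its kernel consists of scalars: a unitary commuting with every $\hat{q}_{j},\hat{p}_{j}$ is a multiple of the identity by irreducibility of the Schr\"odinger representation. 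It therefore suffices to prove $\bm{S}=\bm{O}_{1}\bm{D}(\bm{r})\bm{O}_{2}$, where $\bm{O}_{1},\bm{O}_{2}\in\mathrm{Sp}(2n,\mathbb{R})\cap\mathrm{O}(2n)$ and $\bm{D}(\bm{r})=e^{\bm{A}(\bm{r})}=\mathrm{diag}(e^{-\bm{R}},e^{\bm{R}})$ is the phase-space image of $\hat{S}(\bm{r})$. Each orthosymplectic factor is then lifted to a passive unitary, and the lifted product agrees with $\hat{U}_{\mathrm{G}}$ up to a phase.

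For the matrix factorization I would use the polar decomposition $\bm{S}=\bm{O}\bm{P}$, with $\bm{O}$ orthogonal and $\bm{P}=(\bm{S}^{\mathrm{T}}\bm{S})^{1/2}$ positive definite. Since $\mathrm{Sp}(2n,\mathbb{R})$ is closed under transposition and $\bm{S}^{\mathrm{T}}\bm{S}$ is symplectic, the unique positive square root $\bm{P}$ is also symplectic. One way to see this is to write $\bm{S}^{\mathrm{T}}\bm{S}=e^{\bm{H}}$ with $\bm{H}$ symmetric and note that $\bm{\Omega}^{\mathrm{T}}e^{\bm{H}}\bm{\Omega}=e^{-\bm{H}}$ forces $\bm{\Omega}^{\mathrm{T}}\bm{H}\bm{\Omega}=-\bm{H}$ by uniqueness of the real logarithm of a positive matrix. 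Hence $\bm{O}=\bm{S}\bm{P}^{-1}$ is orthosymplectic. Equivalently, $\bm{H}\in\mathfrak{p}$ in the language of Proposition~\ref{prop:physicalCartan}.

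The remaining step, which I expect to be the main technical point, is to diagonalize $\bm{H}\in\mathfrak{p}$ by conjugation with $\mathrm{Ad}(\mathrm{U}(n))$, that is, to show that every element of $\mathfrak{p}$ is $\mathrm{U}(n)$-conjugate into the squeezing flat $\{\bm{A}(\bm{r})\}$. I would make this concrete in the complex picture. A symmetric Hamiltonian matrix corresponds to a pure active quadratic form $\frac{1}{2}\sum_{jk}(\zeta_{jk}\hat{a}_{j}^{\dagger}\hat{a}_{k}^{\dagger}+\mathrm{h.c.})$ with a complex symmetric matrix $\bm{\zeta}$. A passive $\bm{u}\in\mathrm{U}(n)$ acts by $\bm{\zeta}\mapsto\bm{u}\bm{\zeta}\bm{u}^{\mathrm{T}}$. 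The Autonne--Takagi factorization then supplies $\bm{u}$ with $\bm{u}\bm{\zeta}\bm{u}^{\mathrm{T}}$ real, diagonal and nonnegative. Takagi is proved from the spectral theorem applied to $\bm{\zeta}\bm{\zeta}^{\dagger}$, with an extra argument for repeated singular values.

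This gives $\bm{P}=\bm{V}\bm{D}(\bm{r})\bm{V}^{\mathrm{T}}$ with $\bm{V}$ orthosymplectic. With the paper's sign convention the relevant exponent is $-\bm{A}$ or $+\bm{A}$, which is only a matter of the sign of $\bm{r}$ or a swap of $q$ and $p$ by a passive phase rotation. We therefore get $\bm{S}=(\bm{O}\bm{V})\bm{D}(\bm{r})\bm{V}^{\mathrm{T}}$.

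Finally, I would lift each factor. Orthosymplectic matrices are exactly the images of passive unitaries, since $\mathrm{Sp}(2n,\mathbb{R})\cap\mathrm{O}(2n)\cong\mathrm{U}(n)$ via $\bm{X}+i\bm{Y}$. Because $\mathrm{U}(n)$ is connected, each such matrix is the exponential of an element of $\mathfrak{k}$, and $\mathfrak{k}$ is exponentiated by excitation-number-conserving generators. Setting $\hat{U}_{1}$ and $\hat{U}_{2}$ to be these lifts, the operator $\hat{U}_{1}\hat{S}(\bm{r})\hat{U}_{2}$ induces the same symplectic action as $\hat{U}_{\mathrm{G}}$. By the kernel argument of the first paragraph, the two coincide up to an overall phase.
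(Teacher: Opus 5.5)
The paper states this theorem without proof and defers to the cited literature; your argument is the standard one found there and is correct: a polar decomposition $\bm{S}=\bm{O}\bm{P}$ with both factors symplectic, $\mathrm{U}(n)$-conjugation of $\log\bm{P}\in\mathfrak{p}$ into the squeezing flat via the Takagi factorization, and a lift through the symplectic representation, whose kernel consists of phases. One cosmetic correction: a real nonnegative Takagi form $\bm{\zeta}=\mathrm{diag}(\sigma_{j})$ gives the $\hat{g}^{\mathrm{sqX}}_{j}$ directions, whose Hamiltonian matrices are off-diagonal, whereas $\bm{A}(\bm{r})$ belongs to $\hat{g}^{\mathrm{sqY}}_{j}$, so the extra passive factor is a $\pi/4$ phase rotation of each mode (equivalently, choose the Takagi phases so the diagonal is $\mp i\,\mathrm{diag}(\sigma_{j})$) rather than just a sign flip or $q\leftrightarrow p$ swap, and it is harmlessly absorbed into $\bm{V}$.
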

		
		In the physical Hermitian basis, the diagonal squeezing unitary reads
		\begin{align}
			\hat{S}(\bm{r}) =&~ e^{-i \hat{g}_{\mathrm{diag}}(\bm{r})},
			\qquad
			\hat{g}_{\mathrm{diag}}(\bm{r}) = \sum_{j=1}^{n} r_{j} \hat{g}_{j}^{\mathrm{sqY}},
		\end{align}
		or equivalently,
		\begin{align}
			\hat{S}(\bm{r}) =&~ \prod_{j=1}^{n} e^{\frac{r_{j}}{2} \left( \hat{a}_{j}^{2}-\hat{a}_{j}^{\dagger 2} \right)}.
		\end{align}
		
		\begin{corollary}[Canonical pure Gaussian probe]
			\label{cor:canonicalprobe}
			For a pure zero-mean Gaussian state prepared from the vacuum, the passive factor $\hat{U}_{2}$ satisfies
			\begin{align}
				\hat{U}_{2} \lvert 0 \rangle =&~ \lvert 0 \rangle,
			\end{align}
			leaving
			\begin{align}
				\lvert \psi_{\mathrm{G}} \rangle =&~ \hat{U}_{1} \hat{S}(\bm{r}) \lvert 0 \rangle.
			\end{align}
			Passive transformations change the mode basis while preserving $\bm{r}$.
			Choosing the diagonal squeezing basis defines the canonical probe
			\begin{align}
				\lvert \psi_{\bm{r}} \rangle =&~ \hat{S}(\bm{r}) \lvert 0 \rangle.
			\end{align}
		\end{corollary}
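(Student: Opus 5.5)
The plan is to obtain the corollary directly from the Bloch--Messiah factorization (Theorem~\ref{thm:BlochMessiah}), combined with the fact that passive unitaries conserve total excitation number.

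First, write the given pure zero-mean Gaussian state as $\lvert\psi_{\mathrm{G}}\rangle=\hat{U}_{\mathrm{G}}\lvert 0\rangle$. Here $\hat{U}_{\mathrm{G}}$ is a Gaussian unitary with no displacement; a nonzero displacement would produce nonzero first moments. Up to an irrelevant global phase, factor it as $\hat{U}_{1}\hat{S}(\bm{r})\hat{U}_{2}$.

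Second, show that $\hat{U}_{2}\lvert 0\rangle=\lvert 0\rangle$ up to a phase. The cleanest route is Proposition~\ref{prop:physicalCartan}: $\hat{U}_{2}=e^{-i\hat{g}}$ with $\hat{g}$ in the passive span $\mathfrak{k}$, modulo a scalar. Each passive generator has the form $\hat{\bm{a}}^{\dagger}\bm{h}\hat{\bm{a}}$ plus a scalar offset $c$, and therefore annihilates the vacuum apart from that offset. Hence $e^{-i\hat{g}}\lvert0\rangle=e^{-ic}\lvert0\rangle$, and the phase is absorbed.

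An alternative route works through Proposition~\ref{prop:quadraticsymplectic}. A passive $\hat{U}_{2}$ has $e^{\bm{K}}$ orthogonal and symplectic, so it maps $\bm{\sigma}_{0}=\tfrac12\bm{I}_{2n}$ to itself and $\bm{d}=0$ to itself. A pure Gaussian state is fixed by its moments, so $\hat{U}_{2}\lvert0\rangle$ equals $\lvert0\rangle$ up to a phase. I would present the generator argument and note the moment argument as a check.

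Third, conclude $\lvert\psi_{\mathrm{G}}\rangle=\hat{U}_{1}\hat{S}(\bm{r})\lvert0\rangle$. Applying the passive frame change $\hat{U}_{1}^{\dagger}$ gives the canonical probe $\lvert\psi_{\bm{r}}\rangle$.

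Fourth, justify the claim that passive transformations preserve $\bm{r}$. Let $\bm{O}=e^{\bm{K}}$ be the orthogonal symplectic matrix of the passive factor. The covariance $\bm{\sigma}=\bm{O}\bm{S}_{\bm{r}}\bm{S}_{\bm{r}}^{\mathrm{T}}\bm{O}^{\mathrm{T}}/2$ is then an orthogonal conjugate of $\mathrm{diag}(e^{-2\bm{R}},e^{2\bm{R}})/2$. Its ordinary eigenvalues $e^{\pm2r_j}/2$ are therefore invariant, and they determine $\{|r_j|\}$ up to ordering, i.e.\ up to the Weyl-group action already discussed in Sec.~\ref{sec:Sec3}.

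The main obstacle is mild: stating precisely in what sense $\bm{r}$ is preserved. It is preserved only as a multiset up to signs and permutations, because mode permutations and $\pi/2$ phase rotations are themselves passive. Handling this caveat requires appealing to that Weyl-group identification rather than claiming literal invariance of the ordered vector.
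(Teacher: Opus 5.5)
Your proposal is correct and follows essentially the paper's route. The paper gives no separate proof: it applies the Bloch--Messiah factorization, uses the fact that the passive factor $\hat{U}_{2}$ conserves total excitation number and so fixes the vacuum up to a phase, and then removes $\hat{U}_{1}$ as a passive frame change, which is exactly your argument. Your generator form of the vacuum-invariance step, and your covariance-eigenvalue argument that $\bm{r}$ is preserved only up to sign changes and permutations, spell out points that the paper asserts here and treats through the Weyl-group discussion of the main text.
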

		
		\begin{proposition}[Canonical squeezed covariance]
			\label{prop:canonicalcovariance}
			For the squeezing spectrum matrix and its phase-space generator,
			\begin{align}
				\bm{R} =&~ \mathrm{diag} \left( r_{1},\ldots,r_{n} \right),
				\qquad
				\bm{A}(\bm{r}) =
				\begin{pmatrix}
					-\bm{R} & \bm{0} \\
					\bm{0} & \bm{R}
				\end{pmatrix},
			\end{align}
			the symplectic squeezing transformation is
			\begin{align}
				e^{\bm{A}(\bm{r})} =&~
				\begin{pmatrix}
					e^{-\bm{R}} & \bm{0} \\
					\bm{0} & e^{\bm{R}}
				\end{pmatrix},
			\end{align}
			and the canonical probe has covariance
			\begin{align}
				\bm{\sigma}(\bm{r}) =&~ \frac{1}{2} \mathrm{diag} \left( e^{-2r_{1}},\ldots,e^{-2r_{n}},e^{2r_{1}},\ldots,e^{2r_{n}} \right).
			\end{align}
		\end{proposition}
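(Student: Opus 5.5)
We compute the Hamiltonian matrix of $\hat{g}_{\mathrm{diag}}(\bm{r})$ through Proposition~\ref{prop:quadraticsymplectic}, exponentiate it, and apply the moment transformation rule to the vacuum covariance. First we write each single-mode term in quadratures. Using $\hat{a}_{j}=(\hat{q}_{j}+i\hat{p}_{j})/\sqrt{2}$ we get $\hat{a}_{j}^{2}-\hat{a}_{j}^{\dagger2}=i(\hat{q}_{j}\hat{p}_{j}+\hat{p}_{j}\hat{q}_{j})$. Hence
\begin{align}
	\hat{g}^{\mathrm{sqY}}_{j}=-\frac{i}{2}\left(\hat{a}_{j}^{\dagger2}-\hat{a}_{j}^{2}\right)=\frac{1}{2}\left(\hat{q}_{j}\hat{p}_{j}+\hat{p}_{j}\hat{q}_{j}\right),
\end{align}
and $\hat{g}_{\mathrm{diag}}(\bm{r})=\tfrac12\hat{\bm{\chi}}^{\mathrm{T}}\bm{G}\hat{\bm{\chi}}$ with off-diagonal block form
\begin{align}
	\bm{G}=\begin{pmatrix}\bm{0}&\bm{R}\\ \bm{R}&\bm{0}\end{pmatrix}.
\end{align}
The same result follows directly from the single-mode Heisenberg responses $i[\hat{g}^{\mathrm{sqY}}_{j},\hat{q}_{j}]=\hat{q}_{j}$ and $i[\hat{g}^{\mathrm{sqY}}_{j},\hat{p}_{j}]=-\hat{p}_{j}$. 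With $\bm{\Omega}$ as defined, this gives
\begin{align}
	\bm{K}=\bm{\Omega}\bm{G}=\begin{pmatrix}\bm{R}&\bm{0}\\ \bm{0}&-\bm{R}\end{pmatrix}.
\end{align}

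The main obstacle is matching the sign convention. Proposition~\ref{prop:quadraticsymplectic} gives the Heisenberg action $\hat U^{\dagger}\hat{\bm\chi}\hat U=e^{\bm K}\hat{\bm\chi}$, whereas the statement writes the transformation as $e^{\bm{A}(\bm{r})}$ with $\bm A=-\bm K$ and assigns $e^{-2r_j}$ to the $q$-block. We resolve this by fixing the convention from the physical content: $\hat{S}(\bm{r})$ with positive $r_{j}$ squeezes $\hat{q}_{j}$. We check this directly in one mode by computing $\langle \hat q_j^2\rangle$ on $\hat S|0\rangle$ through the Bogoliubov relation $\hat S^{\dagger}\hat a_j\hat S=\cosh r_j\,\hat a_j-\sinh r_j\,\hat a_j^{\dagger}$, which is standard for this sign of exponent. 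This relation gives $\hat S^{\dagger}\hat q_j\hat S=e^{-r_j}\hat q_j$ and $\hat S^{\dagger}\hat p_j\hat S=e^{r_j}\hat p_j$, so the state's quadrature map is $\mathrm{diag}(e^{-\bm R},e^{\bm R})=e^{\bm A(\bm r)}$. If the grouping in Proposition~\ref{prop:quadraticsymplectic} yields the opposite sign, we interpret $e^{\bm A}$ as the Schr\"odinger-picture symplectic map acting on moments, which is the inverse of the Heisenberg map in that grouping. The covariance formula is then unaffected, because it is fixed by the direct second-moment check.

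Since $\bm A$ is diagonal, $e^{\bm A(\bm r)}$ is simply the entrywise exponential given in the statement. Finally, we apply $\bm\sigma'=e^{\bm A}\bm\sigma_0e^{\bm A^{\mathrm T}}$ with $\bm\sigma_0=\tfrac12\bm I_{2n}$. Because $e^{\bm A}$ is diagonal and symmetric, this yields $\tfrac12 e^{2\bm A}=\tfrac12\mathrm{diag}(e^{-2r_1},\dots,e^{-2r_n},e^{2r_1},\dots,e^{2r_n})$. As a consistency check, this reproduces $\langle\hat g_j^{\mathrm{rot}}\rangle=\tfrac12(\sigma_{q_jq_j}+\sigma_{p_jp_j})-\tfrac12=\sinh^2 r_j$, matching the mean photon number used in the main text.
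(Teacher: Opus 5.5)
Your closing computation is the same as the paper's proof and gives the correct covariance. The Bogoliubov relation yields $\hat q_j\mapsto e^{-r_j}\hat q_j$ and $\hat p_j\mapsto e^{r_j}\hat p_j$, $e^{\bm A}$ is the entrywise exponential of a diagonal matrix, and $e^{\bm A}\bm\sigma_0 e^{\bm A}=\tfrac12 e^{2\bm A}$. Your extra step of deriving $\bm A(\bm r)$ from $\hat g_{\mathrm{diag}}(\bm r)$, which the paper simply takes as given, contains a sign error.

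From your own identity $\hat a_j^{2}-\hat a_j^{\dagger 2}=i(\hat q_j\hat p_j+\hat p_j\hat q_j)$ you get $\hat g^{\mathrm{sqY}}_j=-\tfrac{i}{2}(\hat a_j^{\dagger2}-\hat a_j^{2})=\tfrac{i}{2}\cdot i(\hat q_j\hat p_j+\hat p_j\hat q_j)=-\tfrac12(\hat q_j\hat p_j+\hat p_j\hat q_j)$, not $+\tfrac12(\cdots)$. So the off-diagonal blocks of $\bm G$ are $-\bm R$. The Heisenberg responses are $i[\hat g^{\mathrm{sqY}}_j,\hat q_j]=-\hat q_j$ and $i[\hat g^{\mathrm{sqY}}_j,\hat p_j]=+\hat p_j$. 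Hence $\bm K=\bm\Omega\bm G=\mathrm{diag}(-\bm R,\bm R)=\bm A(\bm r)$ exactly. The ``main obstacle'' you identify comes only from this slip: Proposition~\ref{prop:quadraticsymplectic} gives $\hat S^{\dagger}\hat{\bm\chi}\hat S=e^{\bm A}\hat{\bm\chi}$, which agrees with your Bogoliubov computation.

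The hedge you use to reconcile the apparent mismatch is also wrong. The second moments of $\hat U\lvert\psi\rangle$ are $\langle\psi\rvert\hat U^{\dagger}\hat\chi_a\hat U\,\hat U^{\dagger}\hat\chi_b\hat U\lvert\psi\rangle$. They therefore transform by the same matrix $e^{\bm K}$ as the Heisenberg map, not by its inverse, exactly as Proposition~\ref{prop:quadraticsymplectic} states with $\bm\sigma'=e^{\bm K}\bm\sigma e^{\bm K^{\mathrm T}}$. If $\bm K$ really were $-\bm A$, your stated plan would give $\tfrac12 e^{-2\bm A}$, with the $p$ quadratures squeezed. That contradicts the statement, and no change of picture would fix it. Your argument reaches the right answer only because the independent Bogoliubov check is correct. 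Fix the sign of $\hat g^{\mathrm{sqY}}_j$ and drop the Schr\"odinger-picture remark; the operator route and the direct check then agree with the paper's one-line proof.
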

		
		\begin{proof}
			The generator $\bm{A}(\bm{r})$ is block diagonal, so its exponential exponentiates each block separately, which scales the quadratures by $\hat{q}_{j}\mapsto e^{-r_{j}}\hat{q}_{j}$ and $\hat{p}_{j}\mapsto e^{r_{j}}\hat{p}_{j}$.
			The matrix $e^{\bm{A}(\bm{r})}$ is symmetric, so the covariance congruence reduces to
			\begin{align}
				\bm{\sigma}(\bm{r}) =&~ e^{\bm{A}(\bm{r})} \bm{\sigma}_{0} e^{\bm{A}^{\mathrm{T}}(\bm{r})}
				= e^{\bm{A}(\bm{r})} \bm{\sigma}_{0} e^{\bm{A}(\bm{r})}.
			\end{align}
			The vacuum covariance $\bm{\sigma}_{0}=\bm{I}_{2n}/2$ commutes with $e^{\bm{A}(\bm{r})}$, so the congruence reduces to $e^{2\bm{A}(\bm{r})}/2$.
		\end{proof}
		
		%%%%%%%%%%%%%%%%%%%%%%%%%%%%%%%%%%%%%%%%%%%%%%%%%%%%%%%%%%%%%%%%%%%%%%%%%%%%%%
		\section{Passive metrological model and Gaussian tensors}
		\label{sec:S4}
		%%%%%%%%%%%%%%%%%%%%%%%%%%%%%%%%%%%%%%%%%%%%%%%%%%%%%%%%%%%%%%%%%%%%%%%%%%%%%%
		
		%%%%%%%%%%%%%%%%%%%%%%%%%%%%%%%%%%%%%%%%%%%%%%%%%%%%%%%%%%%%%%%%%%%%%%%%%%%%%%
		\subsection{Passive parameterization}
		%%%%%%%%%%%%%%%%%%%%%%%%%%%%%%%%%%%%%%%%%%%%%%%%%%%%%%%%%%%%%%%%%%%%%%%%%%%%%%
		
		\begin{definition}[Passive metrological model]
			\label{def:passivemodel}
			For fixed squeezing spectrum $\bm{r}$ and diagonal squeezing basis, the $n^{2}$ Hermitian operators generate the passive orbit of $\lvert\psi_{\bm{r}}\rangle$,
			\begin{align}
				\left\{ \hat{g}_{\mu} \right\}_{\mu=1}^{n^{2}} =&~ \left\{ \hat{g}_{j}^{\mathrm{rot}} \right\}_{j=1}^{n} \cup \left\{ \hat{g}_{jk}^{\mathrm{bsX}},\hat{g}_{jk}^{\mathrm{bsY}} \right\}_{1\leq j<k\leq n}.
			\end{align}
			The unitary parameterization is
			\begin{align}
				\lvert\psi_{\bm{r}}(\bm{\theta})\rangle =&~ e^{-i\sum_{\mu=1}^{n^{2}}\theta_{\mu}\hat{g}_{\mu}} \lvert\psi_{\bm{r}}\rangle,
				\qquad
				\bm{\theta}\in\mathbb{R}^{n^{2}}.
			\end{align}
		\end{definition}
		
		We choose the identity as a reference point for evaluating the quantum Fisher and mean-commutator tensors.
		At an arbitrary passive operating point $\hat{U}_{0}$, we use the exact family
		\begin{align}
			\lvert\psi_{\bm{r}}^{(0)}(\delta\bm{\theta})\rangle =&~ \hat{U}_{0}\lvert\psi_{\bm{r}}(\delta\bm{\theta})\rangle,
		\end{align}
		where $\hat{U}_{0}$ is independent of the local parameters $\delta\bm{\theta}$.
		The state derivatives at $\delta\bm{\theta}=\bm{0}$ are
		\begin{align}
			\left.\partial_{\delta\theta_{\mu}}\lvert\psi_{\bm{r}}^{(0)}(\delta\bm{\theta})\rangle\right|_{\delta\bm{\theta}=\bm{0}} =&~ -i\hat{U}_{0}\hat{g}_{\mu}\lvert\psi_{\bm{r}}\rangle.
		\end{align}
		Unitarity removes $\hat{U}_{0}$ from their inner products and overlaps with the state, so both tensors have the same entries as at the identity.
		The corresponding generators on $\hat{U}_{0}\lvert\psi_{\bm{r}}\rangle$ are $\hat{U}_{0}\hat{g}_{\mu}\hat{U}_{0}^{\dagger}$.
		The restricted-root rank-loss classification applies at every passive operating point, with the Fisher-null generators transformed by this conjugation.
		
		For the original exponential coordinates, noncommutativity generally makes the effective generators depend on $\bm{\theta}$.
		An invertible local change of coordinates preserves Fisher rank but can change the matrix entries and ordinary eigenvalues, while a singular coordinate map can introduce additional coordinate rank loss.
		
		%%%%%%%%%%%%%%%%%%%%%%%%%%%%%%%%%%%%%%%%%%%%%%%%%%%%%%%%%%%%%%%%%%%%%%%%%%%%%%
		\subsection{Wick contraction and metrological tensors}
		%%%%%%%%%%%%%%%%%%%%%%%%%%%%%%%%%%%%%%%%%%%%%%%%%%%%%%%%%%%%%%%%%%%%%%%%%%%%%%
		
		For a centered Gaussian state, the unsymmetrized two-point matrix is
		\begin{align}
			W_{ab} =&~ \left\langle \hat{\chi}_{a}\hat{\chi}_{b} \right\rangle,
			\qquad
			\bm{W} = \bm{\sigma}+\frac{i}{2}\bm{\Omega}.
		\end{align}
		
		\begin{proposition}[Gaussian Wick contraction]
			\label{prop:Wick}
			Let $\left\{ \hat{g}_{\mu} \right\}_{\mu=1}^{m}$ be Hermitian quadratic generators on a centered pure Gaussian state, with
			\begin{align}
				\hat{g}_{\mu} =&~ \frac{1}{2}\hat{\bm{\chi}}^{\mathrm{T}}\bm{G}_{\mu}\hat{\bm{\chi}}+c_{\mu},
				\qquad
				\bm{G}_{\mu}=\bm{G}_{\mu}^{\mathrm{T}},
			\end{align}
			and centered operators
			\begin{align}
				\Delta\hat{g}_{\mu} =&~ \hat{g}_{\mu}-\left\langle \hat{g}_{\mu} \right\rangle.
			\end{align}
			The contraction
			\begin{align}
				Z_{\mu\nu} =&~ \mathrm{Tr}\left( \bm{G}_{\mu}\bm{W}\bm{G}_{\nu}\bm{W}^{\mathrm{T}} \right)
			\end{align}
			determines the quantum Fisher and mean-commutator tensors,
			\begin{align}
				\begin{aligned}
					F_{\mu\nu} =&~ 2\left\langle \left\{ \Delta\hat{g}_{\mu},\Delta\hat{g}_{\nu} \right\} \right\rangle = 2\,\mathrm{Re}\,Z_{\mu\nu},
					\\
					\mathcal{C}_{\mu\nu} =&~ \frac{1}{2i}\left\langle \left[ \hat{g}_{\mu},\hat{g}_{\nu} \right] \right\rangle = \frac{1}{2}\,\mathrm{Im}\,Z_{\mu\nu}.
				\end{aligned}
			\end{align}
		\end{proposition}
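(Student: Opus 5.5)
We prove Proposition~\ref{prop:Wick} by reducing everything to Wick's theorem for centered Gaussian states. Here all odd moments vanish, and every fourth moment is a sum over pairings of ordered two-point functions $W_{ab}=\langle\hat\chi_a\hat\chi_b\rangle$, with the operator order kept inside each pairing. The constants $c_\mu$ cancel in $\Delta\hat g_\mu$ and in every commutator, so we set $c_\mu=0$.

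\textbf{Step 1: expand the product.} Write $\hat g_\mu\hat g_\nu=\tfrac14\sum G_{\mu,ab}G_{\nu,cd}\,\hat\chi_a\hat\chi_b\hat\chi_c\hat\chi_d$ and apply Wick's theorem:
\begin{align}
\langle\hat\chi_a\hat\chi_b\hat\chi_c\hat\chi_d\rangle = W_{ab}W_{cd}+W_{ac}W_{bd}+W_{ad}W_{bc}.
\end{align}
The pairing $W_{ab}W_{cd}$ reproduces $\langle\hat g_\mu\rangle\langle\hat g_\nu\rangle$, so subtracting the means gives
\begin{align}
\langle\Delta\hat g_\mu\Delta\hat g_\nu\rangle=\tfrac14\sum G_{\mu,ab}G_{\nu,cd}\left(W_{ac}W_{bd}+W_{ad}W_{bc}\right).
\end{align}
Next use the symmetry of $\bm G_\mu$ to swap $a\leftrightarrow b$ in the second term; it then becomes identical to the first. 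Rewriting the sum as a trace yields
\begin{align}
\langle\Delta\hat g_\mu\Delta\hat g_\nu\rangle=\tfrac12\mathrm{Tr}(\bm G_\mu\bm W\bm G_\nu\bm W^{\mathrm T})=\tfrac12 Z_{\mu\nu}.
\end{align}
This index bookkeeping is the step most prone to error. We will check it against the identities $\bm W^{\mathrm T}=\bm W^{*}$, which holds because $\bm\sigma$ is real symmetric and $\bm\Omega$ is antisymmetric, and $\bm G^{\mathrm T}=\bm G$.

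\textbf{Step 2: swapped order.} The same computation with $\mu$ and $\nu$ interchanged gives $\langle\Delta\hat g_\nu\Delta\hat g_\mu\rangle=\tfrac12 Z_{\nu\mu}$. Taking the transpose inside the trace and using $\bm W^{\mathrm T}=\bm W^{*}$ shows $Z_{\nu\mu}=\mathrm{Tr}(\bm G_\mu\bm W^{*}\bm G_\nu\bm W)=\overline{Z_{\mu\nu}}$, since the $\bm G$'s are real. Alternatively, this follows directly because $\langle\Delta\hat g_\nu\Delta\hat g_\mu\rangle$ is the complex conjugate of $\langle\Delta\hat g_\mu\Delta\hat g_\nu\rangle$, the operators being Hermitian.

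\textbf{Step 3: assemble.} For the Fisher tensor,
\begin{align}
F_{\mu\nu}=2\langle\{\Delta\hat g_\mu,\Delta\hat g_\nu\}\rangle=2\cdot\tfrac12\left(Z_{\mu\nu}+\overline{Z_{\mu\nu}}\right)=2\,\mathrm{Re}\,Z_{\mu\nu}.
\end{align}
For the commutator, the constants drop out, so $\langle[\hat g_\mu,\hat g_\nu]\rangle=\langle[\Delta\hat g_\mu,\Delta\hat g_\nu]\rangle=\tfrac12(Z_{\mu\nu}-\overline{Z_{\mu\nu}})=i\,\mathrm{Im}\,Z_{\mu\nu}$. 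Dividing by $2i$ gives $\mathcal C_{\mu\nu}=\tfrac12\mathrm{Im}\,Z_{\mu\nu}$.

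Purity is not actually needed for these covariance identities. It matters only for the earlier identification $F_{\mu\nu}=2\langle\{\Delta\hat g_\mu,\Delta\hat g_\nu\}\rangle$ as the quantum Fisher information, and we will take that identification from the main text.
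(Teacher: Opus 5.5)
Your proof is correct and follows the paper's argument essentially step for step: Wick factorization of the fourth moment, cancellation of the disconnected pairing by centering, symmetry of the $\bm{G}$'s to merge the two connected pairings into $\tfrac{1}{2}Z_{\mu\nu}$, and $\bm{W}^{\mathrm{T}}=\overline{\bm{W}}$ to obtain the reversed-order correlator as $\tfrac{1}{2}\overline{Z}_{\mu\nu}$ before taking symmetric and antisymmetric parts. Your side remark that purity enters only through identifying $2\langle\{\Delta\hat{g}_{\mu},\Delta\hat{g}_{\nu}\}\rangle$ with the quantum Fisher information is also accurate; the paper's proof likewise never uses purity.
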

		
		\begin{proof}
			Gaussian Wick factorization~\cite{Wick1950p268} splits the fourth moment into
			\begin{align}
				\left\langle \hat{\chi}_{i}\hat{\chi}_{j}\hat{\chi}_{k}\hat{\chi}_{l} \right\rangle =&~ W_{ij}W_{kl}+W_{ik}W_{jl}+W_{il}W_{jk}.
			\end{align}
			The disconnected pairing produces $\left\langle \hat{g}_{\mu} \right\rangle\left\langle \hat{g}_{\nu} \right\rangle$ and cancels after centering.
			Symmetry of $\bm{G}_{\mu}$ and $\bm{G}_{\nu}$ makes the two connected pairings equal, each contributing
			\begin{align}
				\frac{1}{4}\sum_{ijkl}\left( G_{\mu} \right)_{ij}\left( G_{\nu} \right)_{kl}W_{ik}W_{jl} =&~ \frac{1}{4}\mathrm{Tr}\left( \bm{G}_{\mu}\bm{W}\bm{G}_{\nu}\bm{W}^{\mathrm{T}} \right),
			\end{align}
			summing the two pairings into the connected correlation
			\begin{align}
				\left\langle \Delta\hat{g}_{\mu}\Delta\hat{g}_{\nu} \right\rangle =&~ \frac{1}{2}Z_{\mu\nu}.
			\end{align}
			Since $\bm{W}^{\mathrm{T}}=\overline{\bm{W}}$,
			\begin{align}
				\left\langle \Delta\hat{g}_{\nu}\Delta\hat{g}_{\mu} \right\rangle =&~ \frac{1}{2}\overline{Z}_{\mu\nu},
			\end{align}
			and the symmetric and antisymmetric parts give $F_{\mu\nu}$ and $\mathcal{C}_{\mu\nu}$.
		\end{proof}
		
		Substitution of $\bm{W}=\bm{\sigma}+i\bm{\Omega}/2$ gives the pure Gaussian QFI,
		\begin{align}
			F_{\mu\nu} =&~ 2\mathrm{Tr}\left( \bm{G}_{\mu}\bm{\sigma}\bm{G}_{\nu}\bm{\sigma} \right)-\frac{1}{2}\mathrm{Tr}\left( \bm{G}_{\mu}\bm{\Omega}\bm{G}_{\nu}\bm{\Omega}^{\mathrm{T}} \right),
		\end{align}
		in agreement with the standard covariance-matrix formulation~\cite{Monras2013p1303.3682,Safranek2019p035304}.
		
		\begin{remark}[Single-mode normalization]
			For one squeezed mode and $\hat{g}^{\mathrm{rot}}=\hat{n}$, the Fisher weight $F_{\mathrm{rot}} = 4\,\mathrm{Var}\left( \hat{n} \right) = 2\sinh^{2}\left( 2r \right)$ reproduces the standard phase-estimation normalization~\cite{Monras2006p033821} and the pure-Gaussian metric convention~\cite{Chatterjee2026p045005}.
		\end{remark}
		
		%%%%%%%%%%%%%%%%%%%%%%%%%%%%%%%%%%%%%%%%%%%%%%%%%%%%%%%%%%%%%%%%%%%%%%%%%%%%%%
		\section{Ordinary Cartan--Weyl structure}
		\label{sec:S5}
		%%%%%%%%%%%%%%%%%%%%%%%%%%%%%%%%%%%%%%%%%%%%%%%%%%%%%%%%%%%%%%%%%%%%%%%%%%%%%%
		
		The number operators define the Cartan subalgebra
		\begin{align}
			\mathfrak{h} =&~ \mathrm{span}\left\{ \hat{H}_{i} \right\}_{i=1}^{n},
			\qquad
			\hat{H}_{i} = \hat{a}_{i}^{\dagger}\hat{a}_{i}+\frac{1}{2}.
		\end{align}
		The linear forms $\bm{\varepsilon}_{i}$ form the basis dual to the number Cartan, $\bm{\varepsilon}_{i}(\hat{H}_{j})=\delta_{ij}$, and give ordinary-root coordinates on $\mathfrak{h}$~\cite{Humphreys1972,Bourbaki2002,Knapp1996}.
		Relative to this Cartan subalgebra, the complexified algebra $\mathrm{sp}(2n,\mathbb{C})$ has the ordinary root system of type $C_{n}$, with positive-root vectors
		\begin{align}
			\begin{aligned}
				\hat{E}_{\bm{\varepsilon}_{i}-\bm{\varepsilon}_{j}} =&~ \hat{a}_{i}^{\dagger}\hat{a}_{j}, \qquad 1\leq i<j\leq n, \\
				\hat{E}_{\bm{\varepsilon}_{i}+\bm{\varepsilon}_{j}} =&~ \hat{a}_{i}^{\dagger}\hat{a}_{j}^{\dagger}, \qquad 1\leq i<j\leq n, \\
				\hat{E}_{2\bm{\varepsilon}_{i}} =&~ \frac{1}{2}\hat{a}_{i}^{\dagger 2}, \qquad 1\leq i\leq n,
			\end{aligned}
		\end{align}
		and
		\begin{align}
			\hat{E}_{-\bm{\alpha}} =&~ \hat{E}_{\bm{\alpha}}^{\dagger}.
		\end{align}
		The associated Hermitian combinations are
		\begin{align}
			\begin{aligned}
				\hat{X}_{\bm{\alpha}} =&~ \hat{E}_{\bm{\alpha}}+\hat{E}_{-\bm{\alpha}},
				\\
				\hat{Y}_{\bm{\alpha}} =&~ -i\left( \hat{E}_{\bm{\alpha}}-\hat{E}_{-\bm{\alpha}} \right).
			\end{aligned}
		\end{align}
		
		\begin{proposition}[Ordinary-root realization of the physical basis]
			\label{prop:ordinarycorrespondence}
			The physical Hermitian generators decompose into ordinary-root sectors as
			\begin{align}
				\begin{aligned}
					\hat{g}_{i}^{\mathrm{rot}} =&~ \hat{H}_{i}-\frac{1}{2}, \\
					\left( \hat{g}_{i}^{\mathrm{sqX}},\hat{g}_{i}^{\mathrm{sqY}} \right) =&~ \left( \hat{X}_{2\bm{\varepsilon}_{i}},\hat{Y}_{2\bm{\varepsilon}_{i}} \right), \\
					\left( \hat{g}_{ij}^{\mathrm{bsX}},\hat{g}_{ij}^{\mathrm{bsY}} \right) =&~ \left( \hat{X}_{\bm{\varepsilon}_{i}-\bm{\varepsilon}_{j}},\hat{Y}_{\bm{\varepsilon}_{i}-\bm{\varepsilon}_{j}} \right), \\
					\left( \hat{g}_{ij}^{\mathrm{tsX}},\hat{g}_{ij}^{\mathrm{tsY}} \right) =&~ \left( \hat{X}_{\bm{\varepsilon}_{i}+\bm{\varepsilon}_{j}},\hat{Y}_{\bm{\varepsilon}_{i}+\bm{\varepsilon}_{j}} \right).
				\end{aligned}
			\end{align}
		\end{proposition}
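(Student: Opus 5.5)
The plan is to verify each identity by direct substitution of the definitions $\hat{X}_{\bm{\alpha}}=\hat{E}_{\bm{\alpha}}+\hat{E}_{-\bm{\alpha}}$ and $\hat{Y}_{\bm{\alpha}}=-i(\hat{E}_{\bm{\alpha}}-\hat{E}_{-\bm{\alpha}})$, with $\hat{E}_{-\bm{\alpha}}=\hat{E}_{\bm{\alpha}}^{\dagger}$. Afterwards we confirm that the listed operators really are root vectors for the number Cartan $\mathfrak{h}$, so that the labels $2\bm{\varepsilon}_{i}$ and $\bm{\varepsilon}_{i}\pm\bm{\varepsilon}_{j}$ are justified. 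The rotation line is immediate: $\hat{H}_{i}-\tfrac12=\hat{a}_{i}^{\dagger}\hat{a}_{i}=\hat{g}_{i}^{\mathrm{rot}}$. The offset is irrelevant in any case, since by Proposition~\ref{prop:physicalCartan} quadratic operators are taken modulo scalars.

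For the single-mode squeezing pair, take $\hat{E}_{2\bm{\varepsilon}_{i}}=\tfrac12\hat{a}_{i}^{\dagger2}$, so $\hat{E}_{-2\bm{\varepsilon}_{i}}=\tfrac12\hat{a}_{i}^{2}$. Then $\hat{X}_{2\bm{\varepsilon}_{i}}=\tfrac12(\hat{a}_{i}^{\dagger2}+\hat{a}_{i}^{2})=\hat{g}_{i}^{\mathrm{sqX}}$ and $\hat{Y}_{2\bm{\varepsilon}_{i}}=-\tfrac{i}{2}(\hat{a}_{i}^{\dagger2}-\hat{a}_{i}^{2})=\hat{g}_{i}^{\mathrm{sqY}}$.

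The beam-splitter pair works the same way. From $\hat{E}_{\bm{\varepsilon}_{i}-\bm{\varepsilon}_{j}}=\hat{a}_{i}^{\dagger}\hat{a}_{j}$ we get $\hat{E}_{-(\bm{\varepsilon}_{i}-\bm{\varepsilon}_{j})}=\hat{a}_{j}^{\dagger}\hat{a}_{i}$. This gives $\hat{X}=\hat{a}_{i}^{\dagger}\hat{a}_{j}+\hat{a}_{j}^{\dagger}\hat{a}_{i}=\hat{g}^{\mathrm{bsX}}_{ij}$ and $\hat{Y}=-i(\hat{a}_{i}^{\dagger}\hat{a}_{j}-\hat{a}_{j}^{\dagger}\hat{a}_{i})=\hat{g}^{\mathrm{bsY}}_{ij}$. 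For two-mode squeezing, $\hat{E}_{\bm{\varepsilon}_{i}+\bm{\varepsilon}_{j}}=\hat{a}_{i}^{\dagger}\hat{a}_{j}^{\dagger}$ has adjoint $\hat{a}_{i}\hat{a}_{j}$, which reproduces $\hat{g}^{\mathrm{tsX}}_{ij}$ and $\hat{g}^{\mathrm{tsY}}_{ij}$ exactly.

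To justify the root labels, I would use $[\hat{a}_{k}^{\dagger}\hat{a}_{k},\hat{a}_{i}^{\dagger}]=\delta_{ki}\hat{a}_{i}^{\dagger}$ and $[\hat{a}_{k}^{\dagger}\hat{a}_{k},\hat{a}_{j}]=-\delta_{kj}\hat{a}_{j}$. These give $[\hat{H}_{k},\hat{E}_{\bm{\alpha}}]=\bm{\alpha}(\hat{H}_{k})\hat{E}_{\bm{\alpha}}$, with weights $2\delta_{ki}$, $\delta_{ki}-\delta_{kj}$ and $\delta_{ki}+\delta_{kj}$ respectively. Taking adjoints gives the negative roots. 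Finally, a count of $n$ Cartan elements plus $2(n+n(n-1))$ Hermitian root combinations gives $n(2n+1)=\dim\mathrm{sp}(2n,\mathbb{R})$. Combined with the basis of Proposition~\ref{prop:physicalCartan}, this confirms that the decomposition is complete.

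No step is a genuine obstacle. The only point needing care is the factor $\tfrac12$ in $\hat{E}_{2\bm{\varepsilon}_{i}}$, which must be matched against the $\tfrac12$ normalization of $\hat{g}^{\mathrm{sqX,sqY}}_{i}$. I would also keep track of the sign convention in $\hat{Y}_{\bm{\alpha}}$, so that the orientation of $\hat{g}^{\mathrm{bsY}}_{ij}$ for $i<j$ agrees with the main text.
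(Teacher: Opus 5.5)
Your proposal is correct and follows the paper's proof: both substitute the root vectors $\hat{E}_{\pm\bm{\alpha}}$ into $\hat{X}_{\bm{\alpha}}$ and $\hat{Y}_{\bm{\alpha}}$ to recover each physical generator, and both obtain $\hat{g}_{i}^{\mathrm{rot}}=\hat{H}_{i}-1/2$ directly from the number Cartan. Your checks of the root labels via $[\hat{H}_{k},\hat{E}_{\bm{\alpha}}]=\bm{\alpha}(\hat{H}_{k})\hat{E}_{\bm{\alpha}}$ and your dimension count $n(2n+1)$ are sound additions, which the paper treats as part of its setup rather than proving.
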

		
		\begin{proof}
			Substituting the root vectors $\hat{E}_{\pm2\bm{\varepsilon}_{i}}$, $\hat{E}_{\pm(\bm{\varepsilon}_{i}-\bm{\varepsilon}_{j})}$, and $\hat{E}_{\pm(\bm{\varepsilon}_{i}+\bm{\varepsilon}_{j})}$ into the Hermitian combinations $\hat{X}_{\bm{\alpha}}=\hat{E}_{\bm{\alpha}}+\hat{E}_{-\bm{\alpha}}$ and $\hat{Y}_{\bm{\alpha}}=-i(\hat{E}_{\bm{\alpha}}-\hat{E}_{-\bm{\alpha}})$ reproduces each physical generator in the single-mode and mode-pair bases, and the number Cartan $\hat{H}_{i}=\hat{a}_{i}^{\dagger}\hat{a}_{i}+1/2$ recovers the phase rotation $\hat{g}_{i}^{\mathrm{rot}}=\hat{H}_{i}-1/2$.
		\end{proof}
		
		\begin{proposition}[Rank-one ordinary-root sectors]
			\label{prop:ordinaryrankone}
			For $i<j$, the difference-root generators close the compact rank-one algebra,
			\begin{align}
				\mathrm{span}_{\mathbb{R}}\left\{ \hat{X}_{\bm{\varepsilon}_{i}-\bm{\varepsilon}_{j}},\hat{Y}_{\bm{\varepsilon}_{i}-\bm{\varepsilon}_{j}},\hat{H}_{i}-\hat{H}_{j} \right\} \simeq&~ \mathrm{su}(2),
			\end{align}
			while the sum- and long-root generators close noncompact rank-one algebras,
			\begin{align}
				\begin{aligned}
					\mathrm{span}_{\mathbb{R}}\left\{ \hat{X}_{\bm{\varepsilon}_{i}+\bm{\varepsilon}_{j}},\hat{Y}_{\bm{\varepsilon}_{i}+\bm{\varepsilon}_{j}},\hat{H}_{i}+\hat{H}_{j} \right\} \simeq&~ \mathrm{su}(1,1), \\
					\mathrm{span}_{\mathbb{R}}\left\{ \hat{X}_{2\bm{\varepsilon}_{i}},\hat{Y}_{2\bm{\varepsilon}_{i}},\hat{H}_{i} \right\} \simeq&~ \mathrm{su}(1,1).
				\end{aligned}
			\end{align}
		\end{proposition}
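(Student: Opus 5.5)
The plan is to compute the commutators directly, using the bosonic relations $[\hat{a}_{i},\hat{a}_{j}^{\dagger}]=\delta_{ij}$ and the root-vector realization of Proposition~\ref{prop:ordinarycorrespondence}, and then to identify each three-dimensional real span with $\mathrm{su}(2)$ or $\mathrm{su}(1,1)$ by exhibiting the standard structure constants. I will work with the Hermitian bracket $[\cdot,\cdot]_{\mathrm{H}}=-i[\cdot,\cdot]$ of Proposition~\ref{prop:physicalCartan}. Scalars are tracked explicitly, since commutators of quadratic operators can produce $c$-numbers, which is why the Cartan elements $\hat{H}_{i}=\hat{a}_{i}^{\dagger}\hat{a}_{i}+1/2$ appear. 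For each root sector I set $J_{1}=\hat{X}_{\bm{\alpha}}/2$, $J_{2}=\hat{Y}_{\bm{\alpha}}/2$, and $J_{3}=\hat{H}_{\bm{\alpha}}/2$, where $\hat{H}_{\bm{\alpha}}$ is the listed Cartan combination. For the long root I instead take $J_{1,2}=\hat{X}_{2\bm{\varepsilon}_{i}}/2,\hat{Y}_{2\bm{\varepsilon}_{i}}/2$ and $J_{3}=\hat{H}_{i}/2$, adjusting factors as the computation dictates.

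For the difference root $\bm{\varepsilon}_{i}-\bm{\varepsilon}_{j}$, I first compute $[\hat{a}_{i}^{\dagger}\hat{a}_{j},\hat{a}_{j}^{\dagger}\hat{a}_{i}]=\hat{a}_{i}^{\dagger}\hat{a}_{i}-\hat{a}_{j}^{\dagger}\hat{a}_{j}=\hat{H}_{i}-\hat{H}_{j}$, where the $1/2$ offsets cancel. I also use $[\hat{H}_{i}-\hat{H}_{j},\hat{E}_{\pm\bm{\alpha}}]=\pm2\hat{E}_{\pm\bm{\alpha}}$. Converting to $\hat{X},\hat{Y}$ then gives $[J_{1},J_{2}]=iJ_{3}$, $[J_{2},J_{3}]=iJ_{1}$, and $[J_{3},J_{1}]=iJ_{2}$, which are the Schwinger-type $\mathrm{su}(2)$ relations. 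Compactness also follows from the fact that the Killing form is negative definite on the anti-Hermitian span.

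For the sum root, I compute $[\hat{a}_{i}\hat{a}_{j},\hat{a}_{i}^{\dagger}\hat{a}_{j}^{\dagger}]=\hat{a}_{i}^{\dagger}\hat{a}_{i}+\hat{a}_{j}^{\dagger}\hat{a}_{j}+1=\hat{H}_{i}+\hat{H}_{j}$. Here the offset is exactly absorbed by the $+1/2$ convention, and this step is the reason for that convention. Together with $[\hat{H}_{i}+\hat{H}_{j},\hat{E}_{\pm\bm{\alpha}}]=\pm2\hat{E}_{\pm\bm{\alpha}}$, this yields $[K_{1},K_{2}]=-iK_{0}$, $[K_{0},K_{1}]=iK_{2}$, and $[K_{0},K_{2}]=-iK_{1}$. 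These are the $\mathrm{su}(1,1)$ relations, where the sign flip in the first bracket relative to $\mathrm{su}(2)$ is the signature that distinguishes the two algebras. For the long root, $[\tfrac12\hat{a}_{i}^{2},\tfrac12\hat{a}_{i}^{\dagger2}]=\hat{a}_{i}^{\dagger}\hat{a}_{i}+\tfrac12=\hat{H}_{i}$ and $[\hat{H}_{i},\hat{a}_{i}^{\dagger2}]=2\hat{a}_{i}^{\dagger2}$. With $K_{0}=\hat{H}_{i}/2$ and $K_{1,2}=\hat{X},\hat{Y}/2$, the same computation gives the single-mode $\mathrm{su}(1,1)$ realization.

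The main obstacle is bookkeeping rather than any conceptual step. I need to fix the normalizations so that the relations close exactly rather than up to factors, and verify that no stray scalar survives, given that the statement is made modulo identity only implicitly. To settle noncompactness intrinsically rather than by a sign convention, I will also check that the Killing form restricted to the span is indefinite: it is positive on $\hat{X}$ and $\hat{Y}$, which correspond to symmetric Hamiltonian matrices by Proposition~\ref{prop:quadraticsymplectic}, and negative on the Cartan element. In the difference-root case the Killing form is negative definite on all three elements. This confirms that the spans are isomorphic to $\mathrm{su}(1,1)$ and $\mathrm{su}(2)$ respectively, not to some other real form.
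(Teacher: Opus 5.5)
Your proposal is correct and follows essentially the same route as the paper, which likewise computes the cross commutators $[\hat{X}_{\bm{\alpha}},\hat{Y}_{\bm{\alpha}}]=\pm 2i\hat{H}_{\bm{\alpha}}$ directly from the bosonic relations and reads off the compact versus noncompact real form from their sign. Your version is more complete: you also compute the brackets of the Cartan element with $\hat{X}_{\bm{\alpha}}$ and $\hat{Y}_{\bm{\alpha}}$, normalize to the standard $\mathrm{su}(2)$ and $\mathrm{su}(1,1)$ structure constants, and confirm the real form through the Killing-form signature, all of which the paper leaves implicit.
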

		
		\begin{proof}
			Each triple closes on its Cartan element under the Hermitian bracket, with the sign of the cross commutator determining the real form.
			The difference-root generators satisfy
			\begin{align}
				\left[ \hat{X}_{\bm{\varepsilon}_{i}-\bm{\varepsilon}_{j}},\hat{Y}_{\bm{\varepsilon}_{i}-\bm{\varepsilon}_{j}} \right] =&~ 2i\left( \hat{H}_{i}-\hat{H}_{j} \right),
			\end{align}
			the compact sign of $\mathrm{su}(2)$, since the pair $\hat{a}_{i}^{\dagger}\hat{a}_{j}$ conserves total number.
			The sum- and long-root generators instead satisfy
			\begin{align}
				\begin{aligned}
					\left[ \hat{X}_{\bm{\varepsilon}_{i}+\bm{\varepsilon}_{j}},\hat{Y}_{\bm{\varepsilon}_{i}+\bm{\varepsilon}_{j}} \right] =&~ -2i\left( \hat{H}_{i}+\hat{H}_{j} \right), \\
					\left[ \hat{X}_{2\bm{\varepsilon}_{i}},\hat{Y}_{2\bm{\varepsilon}_{i}} \right] =&~ -2i\hat{H}_{i},
				\end{aligned}
			\end{align}
			the noncompact sign of $\mathrm{su}(1,1)$, since the pair-creation operators $\hat{a}_{i}^{\dagger}\hat{a}_{j}^{\dagger}$ raise the total number by two~\cite{Knapp1996}.
			The opposite signs of the two cross commutators distinguish the compact and noncompact real forms of the common complexification $\mathrm{sl}(2,\mathbb{C})$.
		\end{proof}
		
		\begin{remark}[Ordinary roots and identifiability roots]
			\label{rem:tworoots}
			The ordinary roots use the number-Cartan dual coordinates $\bm{\varepsilon}_{j}$ and govern quadratic-generator commutators and root addition.
			The restricted roots use the squeezing-flat dual coordinates $\bm{e}_{j}$ and govern passive identifiability.
			Both root systems have abstract type $C_{n}$ but resolve different decompositions.
			In the ordinary decomposition, both beam-splitter quadratures belong to the root sector $\pm\left( \bm{\varepsilon}_{i}-\bm{\varepsilon}_{j} \right)$.
			In the restricted decomposition, $\hat{g}_{ij}^{\mathrm{bsY}}$ is associated with $\bm{e}_{i}-\bm{e}_{j}$, while $\hat{g}_{ij}^{\mathrm{bsX}}$ is associated with $\bm{e}_{i}+\bm{e}_{j}$.
		\end{remark}
		
		%%%%%%%%%%%%%%%%%%%%%%%%%%%%%%%%%%%%%%%%%%%%%%%%%%%%%%%%%%%%%%%%%%%%%%%%%%%%%%
		\section{Restricted roots and passive identifiability}
		\label{sec:S6}
		%%%%%%%%%%%%%%%%%%%%%%%%%%%%%%%%%%%%%%%%%%%%%%%%%%%%%%%%%%%%%%%%%%%%%%%%%%%%%%
		
		The diagonal squeezing generators
		\begin{align}
			\bm{A}_{j} =&~
			\begin{pmatrix}
				-\bm{E}_{jj} & \bm{0} \\
				\bm{0} & \bm{E}_{jj}
			\end{pmatrix},
			\qquad
			\left( \bm{E}_{jj} \right)_{k\ell} = \delta_{kj}\delta_{\ell j},
		\end{align}
		with diagonal matrix units $\bm{E}_{jj}$, span the maximal Abelian subspace
		\begin{align}
			\mathfrak{a} =&~ \mathrm{span}_{\mathbb{R}} \left\{ \bm{A}_{1},\ldots,\bm{A}_{n} \right\} \subset \mathfrak{p},
		\end{align}
		with radial element
		\begin{align}
			\bm{A}(\bm{r}) =&~ \sum_{j=1}^{n} r_{j} \bm{A}_{j}.
		\end{align}
		The spectrum $\bm{r}$ parametrizes a maximal flat of the Riemannian symmetric space $\mathrm{Sp}(2n,\mathbb{R})/\mathrm{U}(n)$~\cite{Helgason1984,Helgason2001}.
		The linear forms $\bm{e}_{j}$ are dual to the diagonal squeezing generators, $\bm{e}_{j}(\bm{A}_{k})=\delta_{jk}$, so $\bm{e}_{j}[\bm{A}(\bm{r})]=r_{j}$.
		We write these evaluations as $\bm{e}_{j}(\bm{r})$ in squeezing coordinates, where the positive restricted roots form the type-$C_{n}$ system~\cite{Helgason1984,Helgason2001}
		\begin{align}
			\Sigma_{+} =&~ \left\{ 2\bm{e}_{i} \right\}_{i=1}^{n} \cup \left\{ \bm{e}_{i}-\bm{e}_{j},\bm{e}_{i}+\bm{e}_{j} \right\}_{1\leq i<j\leq n}.
		\end{align}
		
		\begin{proposition}[Rank-one restricted-root pairing]
			\label{prop:pairing}
			Each positive restricted root $\bm{\alpha}$ pairs a passive generator matrix $\bm{K}_{\bm{\alpha}}\in\mathfrak{k}$ with an active generator matrix $\bm{P}_{\bm{\alpha}}\in\mathfrak{p}$,
			\begin{align}
				\begin{aligned}
					\left[ \bm{A}(\bm{r}),\bm{K}_{\bm{\alpha}} \right] =&~ \bm{\alpha}(\bm{r}) \bm{P}_{\bm{\alpha}},
					\\
					\left[ \bm{A}(\bm{r}),\bm{P}_{\bm{\alpha}} \right] =&~ \bm{\alpha}(\bm{r}) \bm{K}_{\bm{\alpha}}.
				\end{aligned}
			\end{align}
			The passive matrices correspond to the physical generators
			\begin{align}
				\begin{aligned}
					\bm{K}_{2\bm{e}_{j}} &\quad\longleftrightarrow\quad \hat{g}_{j}^{\mathrm{rot}}, \\
					\bm{K}_{\bm{e}_{j}-\bm{e}_{k}} &\quad\longleftrightarrow\quad \hat{g}_{jk}^{\mathrm{bsY}}, \\
					\bm{K}_{\bm{e}_{j}+\bm{e}_{k}} &\quad\longleftrightarrow\quad \hat{g}_{jk}^{\mathrm{bsX}}.
				\end{aligned}
			\end{align}
			The assignment is a bijection between $\Sigma_{+}$ and a basis of $\mathfrak{k}$, since $|\Sigma_{+}|=n+n(n-1)=n^{2}=\dim\mathrm{u}(n)$, every restricted root of this symmetric pair has multiplicity one, and the centralizer of $\mathfrak{a}$ in $\mathfrak{k}$ is trivial.
		\end{proposition}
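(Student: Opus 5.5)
The plan is to construct the pairs explicitly in phase space and verify the two bracket relations by direct matrix multiplication. The bijection claim then follows by counting and a centralizer check.

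\textbf{Step 1: Hamiltonian matrices.} Using Proposition~\ref{prop:quadraticsymplectic}, I will write each generator's Hamiltonian matrix $\bm{K}=\bm{\Omega}\bm{G}$. From $\hat{a}_{j}=(\hat{q}_{j}+i\hat{p}_{j})/\sqrt{2}$, the passive generators have $2\times2$ block form
\begin{align}
\begin{pmatrix} \bm{X} & -\bm{Y} \\ \bm{Y} & \bm{X} \end{pmatrix},
\qquad \bm{X}^{\mathrm{T}}=-\bm{X},\quad \bm{Y}=\bm{Y}^{\mathrm{T}},
\end{align}
and the active generators have block form
\begin{align}
\begin{pmatrix} \bm{S} & \bm{T} \\ \bm{T} & -\bm{S} \end{pmatrix},
\qquad \bm{S},\bm{T} \text{ symmetric}.
\end{align}
Concretely:
\begin{itemize}
\item $\hat{g}^{\mathrm{rot}}_{j}$ gives $\bm{X}=\bm{0}$, $\bm{Y}=-\bm{E}_{jj}$ (up to sign convention).
\item $\hat{g}^{\mathrm{bsX}}_{jk}$ gives $\bm{X}=\bm{0}$, $\bm{Y}\propto \bm{E}_{jk}+\bm{E}_{kj}$.
\item $\hat{g}^{\mathrm{bsY}}_{jk}$ gives $\bm{Y}=\bm{0}$, $\bm{X}\propto \bm{E}_{jk}-\bm{E}_{kj}$.
\end{itemize}

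\textbf{Step 2: Commute with $\bm{A}(\bm{r})=\mathrm{diag}(-\bm{R},\bm{R})$.} A direct computation gives
\begin{align}
[\bm{A},\bm{K}] = \begin{pmatrix} -[\bm{R},\bm{X}] & \{\bm{R},\bm{Y}\} \\ \{\bm{R},\bm{Y}\} & [\bm{R},\bm{X}] \end{pmatrix},
\end{align}
which is of active form with $\bm{S}=-[\bm{R},\bm{X}]$ and $\bm{T}=\{\bm{R},\bm{Y}\}$. Using $[\bm{R},\bm{E}_{jk}]=(r_{j}-r_{k})\bm{E}_{jk}$ and $\{\bm{R},\bm{E}_{jk}\}=(r_{j}+r_{k})\bm{E}_{jk}$, each basis element picks up a single scalar:
\begin{itemize}
\item $\bm{X}\propto \bm{E}_{jk}-\bm{E}_{kj}$ yields the factor $r_{j}-r_{k}$, giving bsY $\leftrightarrow \bm{e}_{j}-\bm{e}_{k}$.
\item $\bm{Y}\propto \bm{E}_{jk}+\bm{E}_{kj}$ yields the factor $r_{j}+r_{k}$, giving bsX $\leftrightarrow \bm{e}_{j}+\bm{e}_{k}$.
\item $\bm{Y}=\bm{E}_{jj}$ yields the factor $2r_{j}$, giving rot $\leftrightarrow 2\bm{e}_{j}$.
\end{itemize}
I then define $\bm{P}_{\bm{\alpha}}$ to be the resulting active matrix divided by $\bm{\alpha}(\bm{r})$. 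The entries of $\bm{P}_{\bm{\alpha}}$ are independent of $\bm{r}$, and $\bm{P}_{\bm{\alpha}}$ lies in $\mathfrak{p}$ by Proposition~\ref{prop:physicalCartan}. The same block computation applied to $\bm{P}_{\bm{\alpha}}$ gives $[\bm{A},\bm{P}_{\bm{\alpha}}]=\bm{\alpha}(\bm{r})\bm{K}_{\bm{\alpha}}$, since $\bm{A}$ maps $\mathfrak{p}$-type blocks back to $\mathfrak{k}$-type blocks with the same anticommutator or commutator factors.

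\textbf{Step 3: Bijection and multiplicity.} The $n^{2}$ matrices $\bm{K}_{\bm{\alpha}}$ are the Hamiltonian matrices of the given basis of $\mathfrak{k}\simeq\mathrm{u}(n)$, so they are linearly independent. Each $\bm{K}_{\bm{\alpha}}$ is a simultaneous eigenvector of $\mathrm{ad}_{\bm{A}}^{2}$ with eigenvalue $\bm{\alpha}(\bm{r})^{2}$. For generic $\bm{r}$ the values $\bm{\alpha}(\bm{r})^{2}$ are pairwise distinct over $\Sigma_{+}$. Each restricted root space $\mathfrak{g}_{\bm{\alpha}}$ is spanned by $\bm{K}_{\bm{\alpha}}\pm\bm{P}_{\bm{\alpha}}$ (one sign choice per root), and together these span $\mathfrak{k}\oplus\mathfrak{p}$ minus $\mathfrak{a}$. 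Dimension counting then forces multiplicity one: $\mathfrak{g}_{0}=\mathfrak{a}$ has dimension $n$, and $2n^{2}+n=\dim\mathrm{sp}(2n,\mathbb{R})$. It also shows the centralizer of $\mathfrak{a}$ in $\mathfrak{k}$ is trivial, since no nonzero combination of the $\bm{K}_{\bm{\alpha}}$ commutes with a regular $\bm{A}$.

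\textbf{Main obstacle.} I expect the hard part to be bookkeeping of signs and normalizations in Step 1. The factor of $2$ between the rotation and beam-splitter matrices, and the sign conventions from $\bm{\Omega}$, must be tracked so that the same $\bm{P}_{\bm{\alpha}}$ closes the second relation exactly, not up to a sign. I would fix the conventions on $n=2$ first and then extend by linearity.
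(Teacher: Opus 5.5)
Your proposal is correct, and it runs in the opposite logical direction from the paper. The paper gives no separate derivation. It states the bracket relations and the generator assignment as part of the standard polar structure of $\mathrm{Sp}(2n,\mathbb{R})/\mathrm{U}(n)$. It then obtains the bijection by importing from that theory two facts: every restricted root has multiplicity one, and $\mathfrak{a}$ has trivial centralizer in $\mathfrak{k}$. These are combined with the count $|\Sigma_{+}|=n^{2}=\dim\mathrm{u}(n)$.

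You instead verify the brackets directly from the block formula, with $\bm{S}=-[\bm{R},\bm{X}]$ and $\bm{T}=\{\bm{R},\bm{Y}\}$. You then use the known fact that the physical generators form a basis of $\mathfrak{k}$, and derive multiplicity one and the trivial centralizer by counting. The paper's route is shorter and places the result in general symmetric-space theory, where $\bm{K}_{\bm{\alpha}}$ is the $\mathfrak{k}$-projection of a restricted root vector. That theory, however, only guarantees that some $\bm{K}_{\bm{\alpha}}$ exists. Your computation is what actually shows that $\hat{g}^{\mathrm{bsY}}_{jk}$ pairs with $\bm{e}_{j}-\bm{e}_{k}$ and $\hat{g}^{\mathrm{bsX}}_{jk}$ with $\bm{e}_{j}+\bm{e}_{k}$, and it is self-contained.

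\textbf{Sign bookkeeping.} The step you flag as the main obstacle is automatic. Your block formulas give $\mathrm{ad}_{\bm{A}}^{2}$ acting as $\bm{X}\mapsto[\bm{R},[\bm{R},\bm{X}]]$ and $\bm{Y}\mapsto\{\bm{R},\{\bm{R},\bm{Y}\}\}$, so $\mathrm{ad}_{\bm{A}}^{2}\bm{K}_{\bm{\alpha}}=\bm{\alpha}(\bm{r})^{2}\bm{K}_{\bm{\alpha}}$. Define $\bm{P}_{\bm{\alpha}}$ as the $\bm{r}$-independent matrix with $[\bm{A}(\bm{r}),\bm{K}_{\bm{\alpha}}]=\bm{\alpha}(\bm{r})\bm{P}_{\bm{\alpha}}$; dividing by $\bm{\alpha}(\bm{r})$ fails on the walls. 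The second relation, with the correct sign, then holds at regular $\bm{r}$, and everywhere by linearity in $\bm{r}$. No $n=2$ warm-up is needed.

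\textbf{Step 3 ordering.} Do not assume $\mathfrak{g}_{0}=\mathfrak{a}$; conclude it. The matrices $\bm{A}_{j}$ span $\mathfrak{a}\subseteq\mathfrak{g}_{0}$. The $2n^{2}$ vectors $\bm{K}_{\bm{\alpha}}\pm\bm{P}_{\bm{\alpha}}$ are nonzero joint eigenvectors of $\mathrm{ad}_{\mathfrak{a}}$ with pairwise distinct nonzero eigenvalue forms $\pm\bm{\alpha}$. Hence all $2n^{2}+n=\dim\mathrm{sp}(2n,\mathbb{R})$ vectors are linearly independent. This forces each $\mathfrak{g}_{\pm\bm{\alpha}}$ to be one-dimensional and $\mathfrak{g}_{0}=\mathfrak{a}\subset\mathfrak{p}$, so the centralizer of $\mathfrak{a}$ in $\mathfrak{k}$ vanishes. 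The same count supplies the linear independence of the $\bm{P}_{\bm{\alpha}}$ that your regular-element argument for the centralizer tacitly uses.
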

		
		The combinations $\bm{K}_{\bm{\alpha}}\pm\bm{P}_{\bm{\alpha}}$ are eigenvectors of $\mathrm{ad}_{\bm{A}(\bm{r})}$ with eigenvalues $\pm\bm{\alpha}(\bm{r})$.
		Exponentiating this rank-one adjoint action produces
		\begin{align}
			e^{-\bm{A}(\bm{r})} \bm{K}_{\bm{\alpha}} e^{\bm{A}(\bm{r})} =&~ \cosh\left[ \bm{\alpha}(\bm{r}) \right] \bm{K}_{\bm{\alpha}}-\sinh\left[ \bm{\alpha}(\bm{r}) \right] \bm{P}_{\bm{\alpha}},
		\end{align}
		up to the sign convention for $\bm{P}_{\bm{\alpha}}$.
		
		A squeezing spectrum is regular when
		\begin{align}
			\bm{\alpha}(\bm{r}) \neq&~ 0,
			\qquad
			\forall\,\bm{\alpha}\in\Sigma_{+},
		\end{align}
		and singular when at least one restricted root vanishes.
		The singular arrangement is
		\begin{align}
			\mathcal{A}_{C_{n}} =&~ \bigcup_{\bm{\alpha}\in\Sigma_{+}} \ker\bm{\alpha}.
		\end{align}
		
		\begin{proposition}[Passive Fisher weights from restricted roots]
			\label{prop:weights}
			For the canonical probe $\lvert\psi_{\bm{r}}\rangle$, the passive Fisher matrix is diagonal in the physical passive basis, with weights
			\begin{align}
				\begin{aligned}
					F_{j}^{\mathrm{rot}} =&~ 2\sinh^{2}\left( 2r_{j} \right),
					\\
					F_{jk}^{\mathrm{bsY}} =&~ 4\sinh^{2}\left( r_{j}-r_{k} \right),
					\\
					F_{jk}^{\mathrm{bsX}} =&~ 4\sinh^{2}\left( r_{j}+r_{k} \right),
				\end{aligned}
			\end{align}
			for $1\leq j\leq n$ in the phase channels and $1\leq j<k\leq n$ in the beam-splitter channels.
			Equivalently,
			\begin{align}
				F_{\bm{\alpha}}(\bm{r}) =&~ c_{\bm{\alpha}}\sinh^{2}\left[ \bm{\alpha}(\bm{r}) \right],
			\end{align}
			with
			\begin{align}
				c_{2\bm{e}_{j}} =&~ 2,
				\qquad
				c_{\bm{e}_{j}\pm\bm{e}_{k}} = 4.
			\end{align}
		\end{proposition}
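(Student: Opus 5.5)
The plan is to evaluate the Wick contraction of Proposition~\ref{prop:Wick}, $F_{\mu\nu}=2\,\mathrm{Re}\,Z_{\mu\nu}$ with $Z_{\mu\nu}=\mathrm{Tr}(\bm{G}_{\mu}\bm{W}\bm{G}_{\nu}\bm{W}^{\mathrm{T}})$. Because $\bm{W}=\bm{\sigma}(\bm{r})+\tfrac{i}{2}\bm{\Omega}$ is explicitly known from Proposition~\ref{prop:canonicalcovariance}, everything reduces to finite matrix algebra.

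First, write down the quadratic-form matrices $\bm{G}_{\mu}$ of the passive generators, dropping scalar offsets. Since $\hat a_j^\dagger\hat a_k=\tfrac12(\hat q_j-i\hat p_j)(\hat q_k+i\hat p_k)$, we get $\hat g^{\mathrm{rot}}_j=\tfrac12(\hat q_j^2+\hat p_j^2)-\tfrac12$. This gives $\bm{G}^{\mathrm{rot}}_j=\mathrm{diag}(\bm{E}_{jj},\bm{E}_{jj})$. Similarly $\hat g^{\mathrm{bsX}}_{jk}=\hat q_j\hat q_k+\hat p_j\hat p_k$ and $\hat g^{\mathrm{bsY}}_{jk}=\hat q_j\hat p_k-\hat q_k\hat p_j$. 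Hence $\bm{G}^{\mathrm{bsX}}_{jk}$ is block diagonal with blocks $\bm{E}_{jk}+\bm{E}_{kj}$, and $\bm{G}^{\mathrm{bsY}}_{jk}$ is off-diagonal with upper-right block $\bm{E}_{jk}-\bm{E}_{kj}$ and its transpose below.

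Since $\bm{\sigma}$ is diagonal, I will use the pure-Gaussian form
\begin{align}
F_{\mu\nu}=2\mathrm{Tr}(\bm{G}_{\mu}\bm{\sigma}\bm{G}_{\nu}\bm{\sigma})-\tfrac12\mathrm{Tr}(\bm{G}_{\mu}\bm{\Omega}\bm{G}_{\nu}\bm{\Omega}^{\mathrm{T}})
\end{align}
rather than complex $Z$.

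Second, establish diagonality. Each $\bm{G}_\mu$ has support on a definite set of mode indices: $\{j\}$ or $\{j,k\}$. The matrices $\bm{\sigma}$ and $\bm{\Omega}$ act mode-diagonally, in the sense that they only couple $q_j\leftrightarrow p_j$ or $q_j\leftrightarrow q_j$. So a trace $\mathrm{Tr}(\bm{G}_\mu \bm{X}\bm{G}_\nu \bm{Y})$ with $\bm{X},\bm{Y}\in\{\bm\sigma,\bm\Omega\}$ can be nonzero only when the index-pair supports of $\mu$ and $\nu$ match. This kills all cross terms between different mode pairs and between rotations of different modes.

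The remaining cross terms need a separate argument: $\mathrm{rot}_j$ with $\mathrm{bs}_{jk}$, and $\mathrm{bsX}_{jk}$ with $\mathrm{bsY}_{jk}$. For rotation versus beam splitter, a rotation matrix is supported on the diagonal entry $(j,j)$, while a beam-splitter matrix has only off-diagonal mode entries $(j,k)$. The trace therefore needs a path $j\to j\to k\to j$ with mode-diagonal $\bm{X},\bm{Y}$, which is impossible, so these traces vanish. For $\mathrm{bsX}_{jk}$ versus $\mathrm{bsY}_{jk}$, one matrix is block diagonal in the $(q,p)$ blocks and the other block off-diagonal. Both $\bm\sigma$ and $\bm\Omega\,\cdot\,\bm\Omega^{\mathrm T}$ preserve this block parity in the relevant products, since $\bm\sigma$ is block diagonal and $\bm\Omega$ flips the block twice. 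The trace of a block-off-diagonal matrix is zero, so this term vanishes too. The same parity argument applies to the $\bm{\Omega}$ term; alternatively, the $\bm\Omega$ term is just the vacuum contribution, and the vacuum is annihilated by all passive generators, which fixes the constant.

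Third, compute the diagonal weights. I will carry this out for rotations and for the two beam-splitter quadratures.
\begin{itemize}
\item For $\mathrm{rot}_j$: $2\mathrm{Tr}(\cdot)=2\cdot\tfrac14(e^{-4r_j}+e^{4r_j})=\cosh 4r_j$, and the symplectic term contributes $-1$. The result is $\cosh4r_j-1=2\sinh^2 2r_j$, matching the Remark.
\item For $\mathrm{bsX}_{jk}$: the block-diagonal structure gives $2\cdot\tfrac14\cdot 2(e^{-2r_j-2r_k}+e^{2r_j+2r_k})=2\cosh 2(r_j+r_k)$. The symplectic term contributes $-2$, so the weight is $4\sinh^2(r_j+r_k)$.
\item For $\mathrm{bsY}_{jk}$: the off-diagonal blocks pair $q_j$ with $p_k$, producing factors $e^{-2r_j+2r_k}$, and the result is $2\cosh2(r_j-r_k)-2=4\sinh^2(r_j-r_k)$.
\end{itemize}
In each case the vacuum value fixes the symplectic term, because the weight must vanish at $\bm r=0$.

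Finally, reading off $c_{\bm\alpha}$ through the correspondence of Proposition~\ref{prop:pairing} gives the unified form. As a cross-check, the adjoint formula $e^{-\bm A}\bm K_{\bm\alpha}e^{\bm A}=\cosh\bm\alpha\,\bm K_{\bm\alpha}-\sinh\bm\alpha\,\bm P_{\bm\alpha}$ shows that only the active component $\propto\sinh\bm\alpha(\bm r)$ moves the vacuum. It gives $F_{\bm\alpha}=\sinh^2[\bm\alpha(\bm r)]\cdot 4\mathrm{Var}_0(\hat P_{\bm\alpha})$, where $4\mathrm{Var}_0(\hat P_{\bm\alpha})$ equals $2$ for $\hat g^{\mathrm{sqY}}_j$ and $4$ for two-mode squeezers. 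This is the conceptual restricted-root derivation.

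The main obstacle is bookkeeping the cross terms in the diagonality step, especially showing that the $\mathrm{bsX}_{jk}$–$\mathrm{bsY}_{jk}$ symmetric covariance vanishes even though their commutator mean does not. I would settle it by the block-parity argument above, or by computing it directly in the $(q_j,q_k,p_j,p_k)$ four-dimensional subspace.
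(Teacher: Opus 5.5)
Your proposal is correct and follows essentially the paper's own route. You evaluate $F_{\mu\mu}=2\,\mathrm{Tr}(\bm{G}_{\mu}\bm{\sigma}\bm{G}_{\mu}\bm{\sigma})-\tfrac{1}{2}\mathrm{Tr}(\bm{G}_{\mu}\bm{\Omega}\bm{G}_{\mu}\bm{\Omega}^{\mathrm{T}})$ on the canonical covariance for each generator type, and your mode-support and $(q,p)$ block-parity arguments spell out the paper's terse claim that the off-diagonal contractions have vanishing real part. Your restricted-root cross-check mirrors the paper's remark on $c_{\bm{\alpha}}$, with one cosmetic slip: the active partner of $\hat{g}^{\mathrm{rot}}_{j}$ under the $\mathrm{sqY}$ squeezer is $\hat{g}^{\mathrm{sqX}}_{j}$ rather than $\hat{g}^{\mathrm{sqY}}_{j}$, but it has the same vacuum variance, so $c_{2\bm{e}_{j}}=2$ still follows.
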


		\begin{proof}
			We compute the weights directly from the Gaussian Wick contraction of Proposition~\ref{prop:Wick} on the canonical covariance $\bm{\sigma}(\bm{r})$.
			
			For a generator $\hat{g}_{\mu}=(1/2)\hat{\bm{\chi}}^{\mathrm{T}} \bm{G}_{\mu}\hat{\bm{\chi}}+c_{\mu}$, the pure-state QFI is
			\begin{align}
				F_{\mu\mu}
				=&~ 2\,\mathrm{Tr}\bigl(\bm{G}_{\mu}\bm{\sigma} \bm{G}_{\mu}\bm{\sigma}\bigr)
				-\frac{1}{2}\,\mathrm{Tr}\bigl(\bm{G}_{\mu}\bm{\Omega} \bm{G}_{\mu}\bm{\Omega}^{\mathrm{T}}\bigr).
			\end{align}
			
			\emph{Local rotation.}
			For $\hat{g}_{j}^{\mathrm{rot}}=\hat{a}_{j}^{\dagger}\hat{a}_{j}$, the quadrature matrix $\bm{G}_{j}^{\mathrm{rot}}$ is block diagonal with entries in the $j$-th position of both the $q$ and $p$ blocks.
			Substituting the canonical diagonal covariance gives $F_{j}^{\mathrm{rot}}=2\sinh^{2}(2r_{j})$.
			
			\emph{Beam-splitter Y quadrature.}
			For $\hat{g}_{jk}^{\mathrm{bsY}}=-i(\hat{a}_{j}^{\dagger}\hat{a}_{k}-\hat{a}_{k}^{\dagger}\hat{a}_{j})$, the corresponding $\bm{G}_{jk}^{\mathrm{bsY}}$ has off-diagonal blocks coupling modes $j$ and $k$.
			Direct trace evaluation on $\bm{\sigma}(\bm{r})$ yields $F_{jk}^{\mathrm{bsY}}=4\sinh^{2}(r_{j}-r_{k})$.
			
			\emph{Beam-splitter X quadrature.}
			Analogously, for $\hat{g}_{jk}^{\mathrm{bsX}}=\hat{a}_{j}^{\dagger}\hat{a}_{k}+\hat{a}_{k}^{\dagger}\hat{a}_{j}$, we obtain $F_{jk}^{\mathrm{bsX}}=4\sinh^{2}(r_{j}+r_{k})$.
			
			The unified form $F_{\bm{\alpha}}=c_{\bm{\alpha}}\sinh^{2}[\bm{\alpha}(\bm{r})]$ follows with $c_{2\bm{e}_{j}}=2$ and $c_{\bm{e}_{j}\pm\bm{e}_{k}}=4$.
			The off-diagonal entries vanish because the corresponding contractions are purely imaginary, so the matrix is diagonal in this basis.
		\end{proof}
		
		\begin{remark}[Geometric interpretation of $c_{\bm{\alpha}}$]
			\label{rem:c_alpha_geometric}
			Conjugation by the squeezer resolves each passive generator into passive and active components through the rank-one adjoint action.
			The passive component annihilates the vacuum, so the Fisher weight is four times the squared norm of the active tangent, giving $F_{\bm{\alpha}}=c_{\bm{\alpha}}\sinh^{2}[\bm{\alpha}(\bm{r})]$, where $c_{\bm{\alpha}}=4\lVert\Delta\hat{p}_{\bm{\alpha}}\lvert0\rangle\rVert^{2}$ for the Hermitian active partner $\hat{p}_{\bm{\alpha}}$ in the canonical angle normalization.
			The long-root channel uses the single-term generator $\hat{g}_{j}^{\mathrm{rot}}=\hat{a}_{j}^{\dagger}\hat{a}_{j}$, giving $c_{2\bm{e}_{j}}=2$, while the difference- and sum-root channels use the two-term beam-splitter generators, giving $c_{\bm{e}_{j}\pm\bm{e}_{k}}=4$.
		\end{remark}

		\begin{theorem}[Passive identifiability arrangement]
			\label{thm:passive-identifiability}
			For the canonical probe $\lvert\psi_{\bm{r}}\rangle$, the passive $\mathrm{U}(n)$ Fisher matrix loses rank exactly on $\mathcal{A}_{C_{n}}$, and its nullity there equals the number of positive restricted roots that vanish on $\bm{r}$, since the matrix is diagonal in the canonical passive basis,
			\begin{align}
				\det\bm{F}_{\mathrm{passive}}(\bm{r})=0
				\quad\Longleftrightarrow\quad
				\bm{r}\in\mathcal{A}_{C_n}.
			\end{align}
			Each vanishing restricted root identifies its Fisher-null passive generator.
		\end{theorem}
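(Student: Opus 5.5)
The plan is to read the theorem off Proposition~\ref{prop:weights}, which already establishes that the passive Fisher matrix is diagonal in the ordered basis $\{\hat g^{\mathrm{rot}}_j\}\cup\{\hat g^{\mathrm{bsX}}_{jk},\hat g^{\mathrm{bsY}}_{jk}\}$, with diagonal entries $F_{\bm\alpha}(\bm r)=c_{\bm\alpha}\sinh^2[\bm\alpha(\bm r)]$ indexed bijectively by $\bm\alpha\in\Sigma_+$ (Proposition~\ref{prop:pairing}). For a diagonal matrix, rank equals the number of nonzero diagonal entries and the determinant is their product, so
\begin{align}
	\det\bm F_{\mathrm{passive}}(\bm r)=\prod_{\bm\alpha\in\Sigma_+}c_{\bm\alpha}\sinh^2[\bm\alpha(\bm r)].
\end{align}

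The key elementary step is that $c_{\bm\alpha}\in\{2,4\}$ is positive and $\sinh x=0$ iff $x=0$ for real $x$. Hence $F_{\bm\alpha}(\bm r)=0$ iff $\bm\alpha(\bm r)=0$, i.e.\ iff $\bm r\in\ker\bm\alpha$. It follows that the determinant vanishes iff some $\bm\alpha(\bm r)=0$, i.e.\ iff $\bm r\in\mathcal A_{C_n}$. The nullity equals $\#\{\bm\alpha\in\Sigma_+:\bm\alpha(\bm r)=0\}$, and the kernel is spanned by the coordinate directions of the corresponding generators $\hat g_{\bm\alpha}$ under the correspondence of Proposition~\ref{prop:pairing}. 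This gives the final claim that each vanishing root identifies its Fisher-null generator.

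One subtlety I will address explicitly concerns whether distinct roots could label the same generator and so distort the nullity count. They cannot, because the correspondence is a bijection onto a basis. I will also note that roots may coincide as hyperplanes only in the trivial sense that, for example, $\bm e_j+\bm e_k$ and $\bm e_j-\bm e_k$ both vanish at $r_j=r_k=0$. Since they are distinct roots, they are counted separately, and this matches the two independent null generators $\mathrm{bsX}_{jk}$ and $\mathrm{bsY}_{jk}$ together with the rotations.

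The only real content is the diagonality, including the vanishing of off-diagonal real parts of $Z_{\mu\nu}$, which is inherited from Proposition~\ref{prop:weights}. If I were to reverify it, I would use the $\mathrm{U}(1)^n$ parity symmetry $\hat a_j\mapsto -\hat a_j$ and $\hat a_j\mapsto \hat a_j^\dagger$-type reflections preserving the squeezed vacuum. Distinct basis generators carry distinct charge patterns under $\hat a_\ell\mapsto\pm\hat a_\ell$ combined with the antiunitary $q$–$p$ conjugation. Their symmetrized covariances must therefore vanish, and this is the step I expect to need the most care.
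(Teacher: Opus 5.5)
Your proposal is correct and follows essentially the paper's route. Both arguments read the diagonal weights $c_{\bm\alpha}\sinh^{2}[\bm\alpha(\bm r)]$ off Proposition~\ref{prop:weights} and use $c_{\bm\alpha}>0$ together with the root--generator bijection to get rank loss exactly on $\mathcal{A}_{C_n}$, with nullity equal to the number of vanishing roots; the paper only adds the interpretation of each null direction as a covariance stabilizer via $[\bm A(\bm r),\bm K_{\bm\alpha}]=0$. One caveat on your optional reverification of diagonality: all the rotations $\hat g^{\mathrm{rot}}_{j}$ carry the same trivial charge under the mode parities and the antiunitary conjugation, so their mutual off-diagonal entries vanish because $\lvert\psi_{\bm r}\rangle$ is a product state, not by the symmetry argument you sketch.
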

		
		\begin{proof}
			The weights
			\begin{align}
				F_{\bm{\alpha}}(\bm{r}) =&~ c_{\bm{\alpha}}\sinh^{2}\left[ \bm{\alpha}(\bm{r}) \right]
			\end{align}
			vanish exactly when $\bm{\alpha}(\bm{r})=0$.
			The $n$ long-root channels and the $n(n-1)$ difference- and sum-root channels exhaust the $n^{2}$ passive generators, so no additional passive rank-loss loci occur.
			
			On the root hyperplane $\ker\bm{\alpha}$,
			\begin{align}
				\left[ \bm{A}(\bm{r}),\bm{K}_{\bm{\alpha}} \right] =&~ \bm{0},
			\end{align}
			so $\bm{K}_{\bm{\alpha}}$ belongs to the stabilizer of the canonical covariance.
			For a centered pure Gaussian state the covariance-stabilizer condition makes the corresponding projective-state tangent vanish, so the direction is Fisher-null, while a global phase or metaplectic sign may still act at the state level without affecting the local quantum Fisher matrix.
			For an infinitesimal parameter change $\epsilon$ along this generator, the rank-one pairing gives
			\begin{align}
				\delta\bm{\sigma} =&~ \epsilon\left( \bm{K}_{\bm{\alpha}}\bm{\sigma}+\bm{\sigma}\bm{K}_{\bm{\alpha}}^{\mathrm{T}} \right) = \bm{0}.
			\end{align}
		\end{proof}
		
		\begin{corollary}[Singular hyperplanes and passive generators]
			\label{cor:rootgenerators}
			The elementary singular conditions and their Fisher-null passive generators are
			\begin{align}
				\begin{aligned}
					r_{j}=0
					&\quad\Longleftrightarrow\quad
					2\bm{e}_{j}
					&\quad\Longleftrightarrow\quad
					&\hat{g}_{j}^{\mathrm{rot}},
					\\
					r_{j}-r_{k}=0
					&\quad\Longleftrightarrow\quad
					\bm{e}_{j}-\bm{e}_{k}
					&\quad\Longleftrightarrow\quad
					&\hat{g}_{jk}^{\mathrm{bsY}},
					\\
					r_{j}+r_{k}=0
					&\quad\Longleftrightarrow\quad
					\bm{e}_{j}+\bm{e}_{k}
					&\quad\Longleftrightarrow\quad
					&\hat{g}_{jk}^{\mathrm{bsX}}.
				\end{aligned}
			\end{align}
		\end{corollary}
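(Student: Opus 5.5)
The corollary follows by reading Theorem~\ref{thm:passive-identifiability} one root at a time through the bijection of Proposition~\ref{prop:pairing}. For each positive restricted root $\bm{\alpha}\in\Sigma_{+}$, evaluating the linear form on the squeezing flat gives the elementary singular condition:
\begin{align}
2\bm{e}_{j}(\bm{r})=2r_{j},
\qquad
(\bm{e}_{j}\pm\bm{e}_{k})(\bm{r})=r_{j}\pm r_{k}.
\end{align}
So $\ker\bm{\alpha}$ is the hyperplane $r_{j}=0$, $r_{j}-r_{k}=0$, or $r_{j}+r_{k}=0$, respectively. This gives the left equivalences in each row. The factor $2$ in the long root does not change its kernel.

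For the right equivalences, I will invoke Proposition~\ref{prop:weights}. The passive Fisher matrix is diagonal in the physical basis, with entry $F_{\bm{\alpha}}=c_{\bm{\alpha}}\sinh^{2}[\bm{\alpha}(\bm{r})]$ attached to the generator that Proposition~\ref{prop:pairing} assigns to $\bm{\alpha}$: rot for $2\bm{e}_{j}$, bsY for $\bm{e}_{j}-\bm{e}_{k}$, and bsX for $\bm{e}_{j}+\bm{e}_{k}$. Since $c_{\bm{\alpha}}>0$ and $\sinh x=0$ only at $x=0$, that diagonal entry vanishes exactly on $\ker\bm{\alpha}$. Because the matrix is diagonal, a vanishing diagonal entry is the same as that basis generator spanning a null direction. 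Conversely, a basis generator is Fisher-null only when its own weight vanishes. This yields each row of the corollary.

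The only point needing care is the word ``elementary'', which I take to mean one root at a time. At intersections of hyperplanes, for example $r_{j}=r_{k}=0$, several roots vanish simultaneously. By Theorem~\ref{thm:passive-identifiability} the null space is then the span of all corresponding generators, which is consistent with the rows applied separately. The theorem's stabilizer argument, $[\bm{A}(\bm{r}),\bm{K}_{\bm{\alpha}}]=0$ on $\ker\bm{\alpha}$, could serve as an alternative, coordinate-free confirmation. I do not expect any real obstacle: the content is bookkeeping of the root--generator bijection, together with the correct assignment of bsY to the difference root and bsX to the sum root, which the explicit weights $4\sinh^{2}(r_{j}\mp r_{k})$ fix.
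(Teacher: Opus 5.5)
Your proposal is correct and follows the paper's route: you read the corollary off Theorem~\ref{thm:passive-identifiability} by evaluating each positive root on the squeezing flat and using the diagonal weights $F_{\bm{\alpha}}=c_{\bm{\alpha}}\sinh^{2}[\bm{\alpha}(\bm{r})]$ of Proposition~\ref{prop:weights}, and the stabilizer condition $[\bm{A}(\bm{r}),\bm{K}_{\bm{\alpha}}]=\bm{0}$ you mention is the same supporting argument the paper gives. Your treatment of intersecting walls also matches the theorem's nullity count: there the null space is spanned by all generators whose roots vanish.
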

		
		For uniform squeezing,
		\begin{align}
			r_{1}=\cdots=r_{n}=r\neq0,
		\end{align}
		all difference roots vanish.
		The residual passive stabilizer contains the $n(n-1)/2$ directions $\hat{g}_{ij}^{\mathrm{bsY}}$, while the phase-rotation and sum-root channels remain estimable.
		At $\bm{r}=\bm{0}$, every positive restricted root vanishes and the full passive algebra $\mathrm{u}(n)$ belongs to the stabilizer.
		
		The factor $\sinh^{2}[\bm{\alpha}(\bm{r})]$ is the radial dependence of an angular root sector of the invariant metric on $\mathrm{Sp}(2n,\mathbb{R})/\mathrm{U}(n)$~\cite{Helgason1984,Helgason2001,Caselle2004p41}. The full metric also includes the radial squeezing directions.
		For this symmetric pair every restricted root has multiplicity one~\cite{Helgason1984}, so the angular multiplicities do not distinguish the channels.
		The two Fisher coefficients $c_{2\bm{e}_{j}}=2$ and $c_{\bm{e}_{j}\pm\bm{e}_{k}}=4$ arise from the physical generator normalization rather than root multiplicity.
		The invariant generator normalization makes all coefficients equal to two.
		It is proportional to the norm induced by the negative Killing form of $\mathrm{sp}(2n,\mathbb R)$ on its compact subalgebra, and makes the radial factors coincide with the symmetric-space metric channels up to a common scale.
		The arrangement $\mathcal{A}_{C_{n}}$ identifies the simultaneous degeneracy loci of those channels and the passive Fisher directions in either normalization.
		
		%%%%%%%%%%%%%%%%%%%%%%%%%%%%%%%%%%%%%%%%%%%%%%%%%%%%%%%%%%%%%%%%%%%%%%%%%%%%%%
		\section{Probe optimization from the \texorpdfstring{$C_{n}$}{Cn} arrangement}
		\label{sec:S7}
		%%%%%%%%%%%%%%%%%%%%%%%%%%%%%%%%%%%%%%%%%%%%%%%%%%%%%%%%%%%%%%%%%%%%%%%%%%%%%%
		
		%%%%%%%%%%%%%%%%%%%%%%%%%%%%%%%%%%%%%%%%%%%%%%%%%%%%%%%%%%%%%%%%%%%%%%%%%%%%%%
		\subsection{Algebraic root margin}
		%%%%%%%%%%%%%%%%%%%%%%%%%%%%%%%%%%%%%%%%%%%%%%%%%%%%%%%%%%%%%%%%%%%%%%%%%%%%%%
		
		The closed fundamental chamber of the $C_{n}$ root system is~\cite{Bourbaki2002,Knapp1996}
		\begin{align}
			\overline{\mathcal{C}} =&~ \left\{ \bm{r}\in\mathbb{R}^{n} : r_{1}\ge r_{2}\ge\cdots\ge r_{n}\ge0 \right\}.
		\end{align}
		
		\begin{definition}[Algebraic root margin]
			For $\bm{r}\in\overline{\mathcal{C}}$, the algebraic root margin is
			\begin{align}
				m(\bm{r}) =&~ \min_{\bm{\alpha}\in\Sigma_{+}} \bm{\alpha}(\bm{r}).
			\end{align}
		\end{definition}
		
		Because $C_{n}$ contains roots of two lengths, $m(\bm{r})$ is not the Euclidean distance from $\bm{r}$ to the nearest root hyperplane.
		It is the smallest unnormalized root evaluation entering the passive Fisher weights $F_{\bm{\alpha}}(\bm{r})$.
		
		\begin{theorem}[Dual-Weyl solution of the algebraic margin]
			\label{thm:dualWeyl}
			With the Euclidean normalization $\lvert 2\bm{e}_{j}\rvert^{2}=4$ and $\lvert\bm{e}_{j}\pm\bm{e}_{k}\rvert^{2}=2$ for the long and short roots, and identifying $\bm{\rho}^{\vee}$ as the half-sum of the positive coroots~\cite{Bourbaki2002,Knapp1996}, the unique maximizer of
			\begin{align}
				\begin{aligned}
					\max_{\bm{r}\in\overline{\mathcal{C}}}
					\quad&
					m(\bm{r})
					\\
					\mathrm{s.t.}
					\quad&
					\lVert\bm{r}\rVert=1
				\end{aligned}
			\end{align}
			is the normalized dual Weyl vector,
			\begin{align}
				\bm{r}^{\star} =&~ \frac{\bm{\rho}^{\vee}}{\lVert\bm{\rho}^{\vee}\rVert},
				\qquad
				\bm{\rho}^{\vee} = \frac{1}{2}\left( 2n-1,2n-3,\ldots,3,1 \right).
			\end{align}
		\end{theorem}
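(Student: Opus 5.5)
Within $\overline{\mathcal{C}}$ I will first reduce $m(\bm{r})$ to a minimum over the $n$ simple roots $\bm{e}_{j}-\bm{e}_{j+1}$ ($1\le j\le n-1$) and $2\bm{e}_{n}$. Every positive root is a nonnegative integer combination of simple roots, and every simple root is nonnegative on the chamber, so these evaluations lower-bound all others. Hence $m(\bm{r})=\min\{\delta_{1},\ldots,\delta_{n-1},2r_{n}\}$ with $\delta_{j}=r_{j}-r_{j+1}$. (Example: $\bm{e}_{j}+\bm{e}_{k}\ge 2\bm{e}_{n}$ and $\bm{e}_{j}-\bm{e}_{k}=\sum_{l=j}^{k-1}(\bm{e}_{l}-\bm{e}_{l+1})$ on the chamber.)

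Next I will change variables to the simple-root evaluations $s_{j}=\delta_{j}$ for $j<n$ and $s_{n}=2r_{n}$. Inverting,
\begin{align}
r_{j} = \sum_{l=j}^{n-1} s_{l} + \tfrac{1}{2}s_{n},
\end{align}
so each $r_{j}$ is a linear combination of the $s_{l}$ with nonnegative coefficients. At least one coefficient is positive, and the coefficient of $s_{n}$ is always $1/2>0$. The chamber corresponds to $s_{l}\ge0$. If $m(\bm{r})=\mu$, then $s_{l}\ge\mu$ for all $l$, so $r_{j}\ge\mu\,\rho^{\vee}_{j}$ with $\rho^{\vee}_{j}=(n-j)+\tfrac12=\tfrac12(2n-2j+1)$. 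This is where the dual Weyl vector enters: $\bm{\rho}^{\vee}$ is exactly the point where every simple root evaluates to one, i.e.\ the sum of the fundamental coweights. I will also check that this equals the half-sum of positive coroots $\bm{\alpha}^{\vee}=2\bm{\alpha}/|\bm{\alpha}|^{2}$. In the stated normalization the coroots are $\bm{e}_{j}$ for the long roots and $\bm{e}_{j}\pm\bm{e}_{k}$ for the short roots. Their half-sum has $j$-th component $\tfrac12[1+(n-j)\cdot 2]$, since each $k>j$ contributes $2$ and each $k<j$ contributes $+1-1=0$, which gives $\tfrac12(2n-2j+1)$ as required.

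Then I will bound the margin. All $r_{j}$ and $\mu\rho^{\vee}_{j}$ are nonnegative, so componentwise $\bm{r}\ge\mu\bm{\rho}^{\vee}\ge0$ gives $1=\lVert\bm{r}\rVert\ge\mu\lVert\bm{\rho}^{\vee}\rVert$, hence $m(\bm{r})\le1/\lVert\bm{\rho}^{\vee}\rVert$. Equality requires $r_{j}=\mu\rho^{\vee}_{j}$ for every $j$. Because $\rho^{\vee}_{j}>0$, the inequality $r_{j}^{2}\ge\mu^{2}(\rho^{\vee}_{j})^{2}$ must be tight in each component, which forces $\bm{r}=\bm{\rho}^{\vee}/\lVert\bm{\rho}^{\vee}\rVert$. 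This point lies in the chamber and attains the value, so it is the unique maximizer.

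The main obstacle is the first step, namely justifying that non-simple roots never realize the minimum strictly. I will handle it by the explicit decompositions $2\bm{e}_{j}=2\sum_{l\ge j}(\bm{e}_{l}-\bm{e}_{l+1})+2\bm{e}_{n}$ and $\bm{e}_{j}+\bm{e}_{k}=(\bm{e}_{j}-\bm{e}_{k})+2\bm{e}_{k}$, together with the previous decomposition of $\bm{e}_{j}-\bm{e}_{k}$. The uniqueness argument then needs only strict positivity of the $\rho^{\vee}_{j}$.
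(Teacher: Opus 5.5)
Your proof is correct, but it certifies optimality by a different route from the paper. The paper rescales an interior unit spectrum to $\bm{x}=\bm{r}/m(\bm{r})$, which turns the maximin into the minimum-norm program $\min\lVert\bm{x}\rVert^{2}$ subject to $\bm{\alpha}\cdot\bm{x}\ge1$ for all $\bm{\alpha}\in\Sigma_{+}$. It checks feasibility of $\bm{\rho}^{\vee}$ with the same simple-root decomposition you use. It then proves optimality through KKT stationarity, solving a triangular system for the explicit positive multipliers $\mu_{k}=\sum_{i\le k}\rho^{\vee}_{i}$ and $\nu=(\rho^{\vee}_{n}+\mu_{n-1})/2$, and gets uniqueness from strict convexity. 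You avoid the Lagrangian. Inverting the simple-root coordinates with nonnegative coefficients gives the componentwise domination $\bm{r}\ge m(\bm{r})\,\bm{\rho}^{\vee}\ge\bm{0}$ on the whole closed chamber. Monotonicity of the Euclidean norm on the nonnegative orthant then gives both the bound and the equality case at once. This half needs only the trivial inequality $m(\bm{r})\le s_{l}$. Your decomposition of the non-simple roots is used only to show that $\bm{\rho}^{\vee}/\lVert\bm{\rho}^{\vee}\rVert$ attains the bound, which is exactly the role of the paper's feasibility check. What your route buys: it covers boundary spectra, where $m(\bm{r})=0$ and the paper's rescaling is undefined, without special treatment. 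It also verifies the coroot identification, which the paper only asserts. Finally, it carries over verbatim to any resource that is strictly increasing in each nonnegative coordinate, which is the mechanism the paper later uses for the photon-number-constrained optima in Theorems~\ref{thm:AP} and~\ref{thm:invariantdesign}. The paper's KKT argument instead produces an explicit dual certificate, exhibiting $\bm{\rho}^{\vee}$ as a strictly positive combination of the simple roots, which is the standard optimality condition for a Euclidean minimum-norm problem.
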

		
		\begin{proof}
			Rescaling an interior unit spectrum to $\bm{x}=\bm{r}/m(\bm{r})$ and recovering its direction as $\bm{r}=\bm{x}/\lVert\bm{x}\rVert$ converts the maximin problem into the minimum-norm convex program
			\begin{align}
				\begin{aligned}
					\min_{\bm{x}}
					\quad&
					\lVert\bm{x}\rVert^{2}
					\\
					\mathrm{s.t.}
					\quad&
					\bm{\alpha}\cdot\bm{x}\ge1,
					\qquad
					\forall\,\bm{\alpha}\in\Sigma_{+}.
				\end{aligned}
			\end{align}
			Consider
			\begin{align}
				\bm{x}^{\star} =&~ \frac{1}{2}\left( 2n-1,2n-3,\ldots,3,1 \right).
			\end{align}
			The simple roots of $C_{n}$,
			\begin{align}
				\begin{aligned}
					\bm{\alpha}_{j} =&~ \bm{e}_{j}-\bm{e}_{j+1},
					\qquad
					j=1,\ldots,n-1, \\
					\bm{\alpha}_{n} =&~ 2\bm{e}_{n},
				\end{aligned}
			\end{align}
			satisfy
			\begin{align}
				\bm{\alpha}_{j}\cdot\bm{x}^{\star} =&~ 1,
				\qquad
				j=1,\ldots,n.
			\end{align}
			Every positive root is a nonnegative integer combination of the simple roots, so $\bm{x}^{\star}$ satisfies all remaining constraints.
			
			The Karush--Kuhn--Tucker stationarity condition for the squared-norm objective, with nonnegative Lagrange multipliers $2\mu_{k}$ and $2\nu$ on the simple-root constraints, reads
			\begin{align}
				\bm{x}^{\star} =&~ \sum_{k=1}^{n-1}\mu_{k}\left( \bm{e}_{k}-\bm{e}_{k+1} \right)+\nu\left( 2\bm{e}_{n} \right).
			\end{align}
			With $\mu_{0}=0$ for the one-mode case, the resulting triangular system determines
			\begin{align}
				\begin{aligned}
					\mu_{k} =&~ \sum_{i=1}^{k}x_{i}^{\star}>0,
					\qquad
					k=1,\ldots,n-1,
					\\
					\nu =&~ \frac{1}{2}\left( x_{n}^{\star}+\mu_{n-1} \right)>0.
				\end{aligned}
			\end{align}
			The active simple-root constraints satisfy the KKT conditions, while all remaining multipliers vanish.
			Strict convexity makes $\bm{x}^{\star}$ the unique minimizer, whose normalization gives $\bm{r}^{\star}$.
		\end{proof}
		
		%%%%%%%%%%%%%%%%%%%%%%%%%%%%%%%%%%%%%%%%%%%%%%%%%%%%%%%%%%%%%%%%%%%%%%%%%%%%%%
		\subsection{Exact finite-photon-number optimization}
		%%%%%%%%%%%%%%%%%%%%%%%%%%%%%%%%%%%%%%%%%%%%%%%%%%%%%%%%%%%%%%%%%%%%%%%%%%%%%%
		
		The physical squeezing resource is the mean photon number
		\begin{align}
			N(\bm{r}) =&~ \sum_{j=1}^{n}\sinh^{2}r_{j}.
		\end{align}
		
		\begin{definition}[Worst passive Fisher weight; E-optimality criterion]
			For $\bm{r}\in\overline{\mathcal{C}}$, the worst passive Fisher weight is
			\begin{align}
				w(\bm{r}) =&~ \min_{\bm{\alpha}\in\Sigma_+}F_{\bm{\alpha}}(\bm{r}).
			\end{align}
		\end{definition}

		\begin{remark}[Identification with E-optimal design]
			\label{rem:Eoptimal}
			Since $\bm{F}_{\mathrm{passive}}$ is diagonal in the canonical passive basis, its spectrum is the set of Fisher weights and
			\begin{align}
				w(\bm{r}) =&~ \lambda_{\min}\left[ \bm{F}_{\mathrm{passive}}(\bm{r}) \right].
			\end{align}
			Maximizing $w$ at fixed mean photon number applies the E-optimal criterion, while minimizing the trace of the inverse Fisher matrix applies the A-optimal criterion used for the local-phase submodel~\cite{Kiefer1974p849,Pukelsheim1993}.
			For a general Fisher matrix one has only $\min_{\mu}F_{\mu\mu}\geq\lambda_{\min}(\bm{F})$ by the Rayleigh principle, so the identification of the weakest-channel objective with the smallest eigenvalue is a consequence of the restricted-root diagonalization.
			For $\bm{F}>0$, the spectral theorem gives $\max_{\lVert\bm{u}\rVert=1}\bm{u}^{\mathrm{T}}\bm{F}^{-1}\bm{u}=1/\lambda_{\min}(\bm{F})$. A weakest basis generator attains the maximum, as does every unit vector in the minimum-eigenvalue eigenspace. These bounds use the canonical angle normalization and do not imply simultaneous attainability of the full matrix bound.
			On approach to the restricted-root arrangement, $\lambda_{\min}$, $\det\bm{F}_{\mathrm{passive}}$, and $[\mathrm{Tr}\,\bm{F}_{\mathrm{passive}}^{-1}]^{-1}$ tend to zero. The A-optimal cost $\mathrm{Tr}\,\bm{F}_{\mathrm{passive}}^{-1}$ instead diverges; the inverse is not defined on the wall. A positive total Fisher trace is insufficient to certify full rank.
		\end{remark}

		\begin{lemma}[Reduction to the smallest squeezing and adjacent gap]
			\label{lem:reduction}
			For $n\geq2$ and $\bm{r}\in\overline{\mathcal{C}}$, define the smallest adjacent gap
			\begin{align}
				\delta =&~ \min_{1\leq j\leq n-1}\left( r_{j}-r_{j+1} \right).
			\end{align}
			The weakest Fisher weight is
			\begin{align}
				w(\bm{r}) =&~ \min\left\{ 2\sinh^{2}\left( 2r_{n} \right),4\sinh^{2}\delta \right\}.
			\end{align}
			In the chamber interior, no sum-root channel attains the minimum; on boundary strata a sum-root channel can tie at zero.
		\end{lemma}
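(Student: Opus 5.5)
By Proposition~\ref{prop:weights}, $w(\bm{r})$ is the minimum over three families of weights: $2\sinh^{2}(2r_{j})$, $4\sinh^{2}(r_{j}-r_{k})$, and $4\sinh^{2}(r_{j}+r_{k})$. The plan is to compute the minimum within each family separately on $\overline{\mathcal{C}}$, and then show that the sum-root family never falls strictly below the other two. Throughout I use that $\sinh^{2}$ is even and strictly increasing on $[0,\infty)$. In the chamber every root evaluation is nonnegative, so minimizing a weight within a family amounts to minimizing the root evaluation.

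First, the long-root family. Since $r_{n}\le r_{j}$ for all $j$, its minimum is $2\sinh^{2}(2r_{n})$.

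Next, the difference-root family. For $j<k$ the difference telescopes, $r_{j}-r_{k}=\sum_{i=j}^{k-1}(r_{i}-r_{i+1})$. Each adjacent gap is at least $\delta$ and nonnegative, so $r_{j}-r_{k}\ge\delta$, and equality is attained by the adjacent pair realizing $\delta$. The minimum over this family is therefore $4\sinh^{2}\delta$.

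Finally, the sum-root family. For $j<k$ we have $r_{j}+r_{k}\ge 2r_{n}\ge0$. I then compare $4\sinh^{2}x$ with $2\sinh^{2}y$ for $x\ge y\ge0$: monotonicity gives $4\sinh^{2}(r_{j}+r_{k})\ge4\sinh^{2}(2r_{n})\ge2\sinh^{2}(2r_{n})$. So the sum-root minimum is at least the long-root minimum and can be dropped, which yields the stated formula.

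For the interior claim, suppose $r_{n}>0$. Then $\sinh^{2}(2r_{n})>0$, so $4\sinh^{2}(r_{j}+r_{k})\ge4\sinh^{2}(2r_{n})>2\sinh^{2}(2r_{n})$, and the sum-root inequality is strict. Hence no sum-root channel attains the minimum; this already holds on the face $r_{n}>0$, not only in the open chamber. On strata with $r_{n}=0$ the bound becomes $0\ge0$. If in addition $r_{n-1}=0$, then $r_{n-1}+r_{n}=0$ and the sum-root weight $F^{\mathrm{bsX}}_{n-1,n}$ ties at zero, exhibiting the stated boundary tie.

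The argument is elementary. The only step needing care is the sum-root comparison: it requires both monotonicity of $\sinh^{2}$ and the coefficient mismatch $c=4$ versus $c=2$. Even without the factor two, monotonicity alone would suffice, but strictness in the interior needs $\sinh^{2}(2r_{n})>0$. The rest is bookkeeping of the telescoping bound.
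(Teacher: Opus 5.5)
Your proof is correct and follows the paper's argument essentially step by step: the telescoping bound $r_{j}-r_{k}\ge\delta$ for the difference roots, the chain $4\sinh^{2}(r_{j}+r_{k})\ge4\sinh^{2}(2r_{n})\ge2\sinh^{2}(2r_{n})$ for the sum roots with strictness once $\sinh^{2}(2r_{n})>0$, and monotonicity placing the weakest rotation weight at $r_{n}$. Your explicit tie on the stratum $r_{n-1}=r_{n}=0$, and your observation that strictness holds on the whole region $r_{n}>0$ of the closed chamber (not only in the open interior), are accurate refinements of the paper's wording.
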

		
		\begin{proof}
			For $i<j$,
			\begin{align}
				r_{i}-r_{j} =&~ \sum_{k=i}^{j-1}\left( r_{k}-r_{k+1} \right)\ge\delta,
			\end{align}
			so
			\begin{align}
				\min_{i<j}F_{ij}^{\mathrm{bsY}} =&~ 4\sinh^{2}\delta.
			\end{align}
			Ordering also gives
			\begin{align}
				r_{i}+r_{j} \ge&~ r_{n-1}+r_{n}\ge2r_{n},
			\end{align}
			placing every sum-root weight above the smallest phase-rotation weight,
			\begin{align}
				4\sinh^{2}\left( r_{i}+r_{j} \right) \ge&~ 4\sinh^{2}\left( 2r_{n} \right)\ge2\sinh^{2}\left( 2r_{n} \right).
			\end{align}
			Every sum-root weight is at least twice the weakest phase-rotation weight, and hence exceeds it whenever the latter is positive.
			Monotonicity of $\sinh^{2}\left( 2r \right)$ for $r\ge0$ places that weight at $r_{n}$.
		\end{proof}
		
		\begin{theorem}[Exact photon-number-constrained optimum]
			\label{thm:AP}
			For $n\ge2$ and fixed mean photon number
			\begin{align}
				N =&~ \sum_{j=1}^{n}\sinh^{2}r_{j} > 0,
			\end{align}
			the maximizer of $w(\bm{r})$ in the closed fundamental chamber $\overline{\mathcal{C}}$ is unique and is the arithmetic progression
			\begin{align}
				r_{j} =&~ a+(n-j)b,
				\qquad
				j=1,\ldots,n,
			\end{align}
			whose weakest phase-rotation and difference-root weights are balanced,
			\begin{align}
				2\sinh^{2}\left( 2a \right) =&~ 4\sinh^{2}b = w.
			\end{align}
			Equivalently, this spectrum uniquely minimizes the photon number required to attain any prescribed worst weight $w>0$.
			All other maximizers on the full flat are related to it by the restricted Weyl group, that is by mode permutations and squeezing-sign changes.
		\end{theorem}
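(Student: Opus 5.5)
We reduce the problem with Lemma~\ref{lem:reduction}. For $n\ge2$ and $\bm{r}\in\overline{\mathcal{C}}$, it gives $w(\bm{r})=\min\{2\sinh^{2}(2r_{n}),4\sinh^{2}\delta\}$, where $\delta$ is the smallest adjacent gap. We first solve the dual problem: for a prescribed level $w>0$, minimize $N(\bm{r})$ subject to $w(\bm{r})\ge w$. This constraint is equivalent to two conditions. The first is $r_{n}\ge a(w)$, where $a(w)=\tfrac12\operatorname{arcsinh}\sqrt{w/2}$. The second is $r_{j}-r_{j+1}\ge b(w)$ for every $j$, where $b(w)=\operatorname{arcsinh}(\sqrt{w}/2)$. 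Telescoping gives $r_{j}=r_{n}+\sum_{k=j}^{n-1}(r_{k}-r_{k+1})\ge a+(n-j)b$ componentwise. Since $\sinh^{2}$ is strictly increasing on $[0,\infty)$, $N(\bm{r})\ge\sum_{j}\sinh^{2}[a+(n-j)b]$, with equality iff every componentwise inequality is tight. So the arithmetic progression $r_{j}=a+(n-j)b$ is the unique minimizer of the photon number at level $w$. It also satisfies $2\sinh^{2}(2a)=4\sinh^{2}b=w$ by construction. This proves the "equivalently" clause of Theorem~\ref{thm:AP}.

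Next we pass from the dual problem to the primal one. Define $\Phi(w)=\sum_{j}\sinh^{2}[a(w)+(n-j)b(w)]$. Because $a(w)$ and $b(w)$ are continuous and strictly increasing with $a(0)=b(0)=0$, $\Phi$ is a continuous strictly increasing bijection from $[0,\infty)$ onto $[0,\infty)$. Given $N>0$, let $w^{\star}=\Phi^{-1}(N)>0$ and let $\bm{r}^{\star}$ be the corresponding progression, so $N(\bm{r}^{\star})=N$ and $w(\bm{r}^{\star})=w^{\star}$.

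Optimality follows by contradiction. Suppose some $\bm{r}\in\overline{\mathcal{C}}$ with $N(\bm{r})=N$ has $w(\bm{r})\ge w^{\star}$. The componentwise bound gives $N\ge\Phi(w(\bm{r}))\ge\Phi(w^{\star})=N$, so equality holds throughout. Strict monotonicity of $\Phi$ then forces $w(\bm{r})=w^{\star}$, and tightness forces $\bm{r}=\bm{r}^{\star}$. This yields both existence and uniqueness of the maximizer within the closed chamber. The exact relation $\sinh b=\sinh(2a)/\sqrt2$ is just the balance condition.

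For the statement about the full flat $\mathbb{R}^{n}$, note that $N$ and the multiset of Fisher weights are invariant under permutations and sign changes of the entries of $\bm{r}$. For the weights, sign changes swap $r_{j}-r_{k}$ with $\pm(r_{j}+r_{k})$, and $\sinh^{2}$ is even. More precisely, $c_{\bm{\alpha}}$ equals $4$ on both short-root types, so the restricted Weyl group permutes the short roots among themselves and the long roots among themselves up to sign. Every point of the flat has a Weyl image in $\overline{\mathcal{C}}$. Hence any maximizer on the flat maps to a maximizer in the chamber, which must be $\bm{r}^{\star}$, so all maximizers lie in the Weyl orbit of $\bm{r}^{\star}$.

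The step needing the most care is this last one. We must check that $w$, defined through the positive roots, is genuinely Weyl-invariant, i.e.\ that $c_{\bm{\alpha}}$ is constant on Weyl orbits of roots up to sign; this holds because the orbits are exactly the long roots and the short roots. A second point to check is the boundary of the chamber. Lemma~\ref{lem:reduction} covers ties on the boundary, but since $w^{\star}>0$ the optimum lies in the interior, so no boundary subtleties arise.
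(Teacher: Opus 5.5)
Your proof is correct and is essentially the paper's argument. You fix a target weight, use Lemma~\ref{lem:reduction} and monotonicity of $\sinh^{2}$ to get $r_{n}\ge a$ and $r_{j}-r_{j+1}\ge b$, telescope to $r_{j}\ge a+(n-j)b$, let strict monotonicity of $N$ force saturation, and invert the strictly increasing minimum-budget map; your explicit inversion through $\Phi$ and your Weyl-invariance check for the full-flat clause make precise two steps that the paper only asserts.
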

		
		\begin{proof}
			Choose a target weight $w_{0}>0$.
			The reduced expression for $w(\bm{r})$ requires
			\begin{align}
				\begin{aligned}
					2\sinh^{2}\left( 2r_{n} \right) \ge&~ w_{0},
					\\
					4\sinh^{2}\left( r_{j}-r_{j+1} \right) \ge&~ w_{0},
					\qquad
					j=1,\ldots,n-1.
				\end{aligned}
			\end{align}
			Define $a$ and $b$ by
			\begin{align}
				2\sinh^{2}\left( 2a \right) =&~ 4\sinh^{2}b = w_{0}.
			\end{align}
			Monotonicity gives
			\begin{align}
				r_{n}\ge&~ a,
				\qquad
				r_{j}-r_{j+1}\ge b,
			\end{align}
			accumulating into
			\begin{align}
				r_{j}\ge&~ a+(n-j)b.
			\end{align}
			Because $\sum_{j}\sinh^{2}r_{j}$ increases strictly with every $r_{j}\ge0$, the minimum-resource spectrum saturates all bounds,
			\begin{align}
				r_{n} =&~ a,
				\qquad
				r_{j}-r_{j+1} = b.
			\end{align}
			These equalities give the arithmetic progression.
			The minimum photon number required to attain $w_{0}$ increases strictly with $w_{0}$, so inversion gives the same unique spectrum at fixed $N$.
		\end{proof}
		
		For $n=1$, no beam-splitter gap exists and $w = 2\sinh^{2}\left( 2r_{1} \right)$, so the single mode carries the full squeezing resource.
		
		%%%%%%%%%%%%%%%%%%%%%%%%%%%%%%%%%%%%%%%%%%%%%%%%%%%%%%%%%%%%%%%%%%%%%%%%%%%%%%
		\subsection{Invariant normalization and exact dual-Weyl optimum}
		\label{sec:invariantdesign}
		
		\begin{theorem}[Invariant E-optimal design]
			\label{thm:invariantdesign}
			For the invariantly orthonormal passive generators, all canonical Fisher weights are $\widetilde F_{\bm{\alpha}}=2\sinh^{2}[\bm{\alpha}(\bm{r})]$.
			At fixed $N>0$ their minimum is uniquely maximized in the closed fundamental chamber by the spectrum on the dual-Weyl ray with that photon number, for every $n\geq1$.
			The Fisher eigenvalues and this optimum are independent of the passive Bloch--Messiah frame.
		\end{theorem}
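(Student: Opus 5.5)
First I will rescale the weights. The invariantly orthonormal generators are $\widetilde g^{\mathrm{rot}}_j=\hat g^{\mathrm{rot}}_j$ and $\widetilde g^{\mathrm{bsX,bsY}}_{jk}=\hat g^{\mathrm{bsX,bsY}}_{jk}/\sqrt2$. Fisher weights are quadratic in the generator, so Proposition~\ref{prop:weights} gives $\widetilde F_{2\bm e_j}=2\sinh^2(2r_j)$ and $\widetilde F_{\bm e_j\pm\bm e_k}=\tfrac14\cdot 4\cdot 2\sinh^2(r_j\pm r_k)$. After matching the normalization stated in the main text, this is the common form $2\sinh^2[\bm\alpha(\bm r)]$. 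The rescaling is diagonal, so $\widetilde{\bm F}=\bm M^{-1/2}\bm F\bm M^{-1/2}$ stays diagonal and its eigenvalues are exactly these weights.

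Next I will show frame independence. A passive change of frame $\hat U_1$ acts on $\mathrm u(n)$ by $\mathrm{Ad}$. This action preserves $\operatorname{Re}\operatorname{Tr}(\bm h_\mu\bm h_\nu)$, so it is orthogonal in the normalized basis. The Fisher matrix of $\hat U_1\lvert\psi_{\bm r}\rangle$ in the fixed orthonormal basis is therefore $\bm O^{\mathrm T}\widetilde{\bm F}\bm O$. This uses the unitarity argument already given in Sec.~\ref{sec:S4} for a conjugated operating point. The spectrum is unchanged, and hence so are $\widetilde w$ and its optimizer.

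The core step is the optimization. On $\overline{\mathcal C}$ every root value is nonnegative and $\sinh^2$ is increasing on $[0,\infty)$. Hence $\widetilde w(\bm r)=2\sinh^2 m(\bm r)$, and maximizing $\widetilde w$ at fixed $N$ is the same as maximizing the algebraic margin $m$ at fixed $N$. I will use the same inversion trick as in Theorem~\ref{thm:AP}. Fix a target margin $m_0>0$. The simple-root constraints $r_n\ge m_0/2$ and $r_j-r_{j+1}\ge m_0$ accumulate to $r_j\ge (2n-2j+1)m_0/2$. Every positive root is a nonnegative integer combination of simple roots, so these constraints imply all others. Since $N(\bm r)$ is strictly increasing in each $r_j\ge0$, the unique minimum-photon spectrum achieving margin $m_0$ saturates every bound. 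That spectrum is $r_j=(2n-2j+1)a$ with $a=m_0/2$, which lies on the $\bm\rho^\vee$ ray. The minimal photon number $\sum_j\sinh^2[(2n-2j+1)a]$ is continuous and strictly increasing in $a$ from $0$ to $\infty$. Inverting it gives a unique $a$ for each $N>0$. Any other feasible spectrum with the same $N$ must therefore have strictly smaller margin. The case $n=1$ is the same argument with only the root $2\bm e_1$.

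I expect the main obstacle to be uniqueness, stated carefully. I must rule out a spectrum with margin at least $m_0$ that does not saturate all the bounds yet uses the same $N$. This is handled by strict monotonicity: any unsaturated coordinate strictly raises $N$. I also need to be careful with the normalization constants, so that all weights indeed share the same coefficient; the ordering argument only needs them equal. Finally, I will note that the restricted Weyl group maps any maximizer on the full flat into the chamber.
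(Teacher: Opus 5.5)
Your proposal is correct and follows the paper's proof. Both halve the beam-splitter weights, reduce the margin condition to the simple-root constraints $r_{n}\ge t/2$ and $r_{j}-r_{j+1}\ge t$, saturate them using strict monotonicity of $N$, invert the strictly increasing minimal budget in $a=t/2$, and get frame independence because passive conjugation acts orthogonally on the invariantly orthonormal basis. One cosmetic fix: the rescaling factor is simply $(1/\sqrt{2})^{2}=1/2$, so $\widetilde F_{\bm{e}_{j}\pm\bm{e}_{k}}=\tfrac12\cdot 4\sinh^{2}(r_{j}\pm r_{k})=2\sinh^{2}(r_{j}\pm r_{k})$ follows directly, without the factorization $\tfrac14\cdot4\cdot2$ or any appeal to the main-text normalization.
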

		
		\begin{proof}
			Dividing each beam-splitter generator by $\sqrt{2}$ divides its Fisher weight by two.
			For a target minimum root evaluation $t>0$, the simple-root constraints require $r_n\geq t/2$ and $r_j-r_{j+1}\geq t$.
			They imply every positive-root constraint and give $r_j\geq(n-j+1/2)t$.
			The photon number is strictly increasing in each nonnegative coordinate, so its unique minimum at fixed $t$ saturates every bound.
			The resulting minimum budget increases strictly with $t$, and inversion yields the unique optimum at fixed $N$, with $a=t/2$.
			For $n\times n$ Hermitian generator matrices $\bm{h}$ and $\bm{k}$ and a passive mode transformation $\bm{U}\in\mathrm{U}(n)$, conjugation $\bm{h}\mapsto\bm{U}^{\dagger}\bm{h}\bm{U}$ preserves their trace inner product $\operatorname{Re}\operatorname{Tr}(\bm{h}\bm{k})$~\cite{Arvind1995p471}.
			It acts by an orthogonal matrix on the normalized passive basis, and transforms $\widetilde{\bm{F}}$ by orthogonal similarity.
		\end{proof}
		
		For three modes at $N=2$, the angle-optimal spectrum gives an invariant minimum weight of $0.3016$.
		The invariant optimum has smallest squeezing $a\simeq0.2072$, exact ratios $5:3:1$, and $\widetilde w\simeq0.3637$.
		These are distinct objectives on the same passive model.
		
		%%%%%%%%%%%%%%%%%%%%%%%%%%%%%%%%%%%%%%%%%%%%%%%%%%%%%%%%%%%%%%%%%%%%%%%%%%%%%%
		\subsection{Weak- and strong-resource limits in angle coordinates}
		%%%%%%%%%%%%%%%%%%%%%%%%%%%%%%%%%%%%%%%%%%%%%%%%%%%%%%%%%%%%%%%%%%%%%%%%%%%%%%
		
		The balance between the weakest local-rotation and difference-root weights determines the adjacent gap as a function of the smallest squeezing,
		\begin{align}
			b(a) =&~
			\operatorname{arsinh}
			\left[
			\frac{1}{\sqrt{2}}
			\sinh\left( 2a \right)
			\right].
		\end{align}
		The corresponding optimal spectrum,
		\begin{align}
			r_{j}(a) =&~ a+(n-j)b(a),
			\qquad
			j=1,\ldots,n,
		\end{align}
		forms a one-parameter family with mean photon number
		\begin{align}
			N(a) =&~
			\sum_{j=1}^{n}
			\sinh^{2}
			\left[
			a+(n-j)b(a)
			\right].
		\end{align}
		Both $b(a)$ and every $r_{j}(a)$ increase strictly with $a\geq0$, so $N(a)$ also increases strictly.
		The limit $a\rightarrow0$ describes the weak-resource regime, while $a\rightarrow\infty$ describes the strong-resource regime.
		
		Since $\sinh b(a)=\sqrt{2}\sinh a\cosh a$ and $\sqrt{2}\cosh a>1$ for $a\ge0$, monotonicity of $\sinh$ gives $b(a)>a$ for every $a>0$.
		
		For weak squeezing,
		\begin{align}
			\sinh\left( 2a \right) =&~ 2a+\mathcal{O}\left( a^{3} \right),
		\end{align}
		and the balance relation gives
		\begin{align}
			b(a) =&~ \sqrt{2}\,a+\mathcal{O}\left( a^{3} \right).
		\end{align}
		The normalized optimal spectrum satisfies
		\begin{align}
			\frac{r_{j}(a)}{a} =&~
			1+\sqrt{2}(n-j)+\mathcal{O}\left( a^{2} \right),
			\qquad
			a\rightarrow0.
		\end{align}
		
		For strong squeezing,
		\begin{align}
			\sinh\left( 2a \right) =&~
			\frac{1}{2}e^{2a}
			\left[
			1+\mathcal{O}\left( e^{-4a} \right)
			\right],
		\end{align}
		and the large-argument expansion of the inverse hyperbolic sine gives
		\begin{align}
			b(a) =&~
			2a-\frac{1}{2}\ln 2
			+\mathcal{O}\left( e^{-4a} \right).
		\end{align}
		Substituting this expansion into the optimal arithmetic progression leaves
		\begin{align}
			r_{j}(a) =&~
			\left( 2n-2j+1 \right)a
			-\frac{n-j}{2}\ln 2
			+\mathcal{O}\left( e^{-4a} \right).
		\end{align}
		After normalization by the common diverging scale $a$,
		\begin{align}
			\frac{r_{j}(a)}{a} \longrightarrow&~
			2n-2j+1,
			\qquad
			j=1,\ldots,n,
			\qquad
			a\rightarrow\infty.
		\end{align}
		This limit identifies the dual-Weyl direction of the photon-number-constrained optimum.
		For finite photon number, the balance relation instead determines a resource-dependent ratio between the offset $a$ and the gap $b(a)$, so the exact optimum remains the arithmetic progression.

		\begin{proposition}[Condition number at the E-optimal spectrum]
			\label{prop:kappa}
			In the fundamental chamber and for $n\ge2$,
			\begin{align}
				\lambda_{\max}\left(\bm{F}_{\mathrm{passive}}\right) =&~
				\max\left\{ 2\sinh^{2}(2r_{1}),\, 4\sinh^{2}(r_{1}+r_{2}) \right\}.
			\end{align}
			For nonzero spectra, the condition number $\kappa=\lambda_{\max}/\lambda_{\min}$ is infinite on the singular arrangement and finite off it. At the vacuum, $\bm{F}=\bm{0}$ and this ratio is undefined.
			Along the optimal arithmetic progression,
			\begin{align}
				\ln\kappa\left[\bm{r}(a)\right] =&~ 4(n-1)\,b(a)+\mathcal{O}(1),
				\qquad a\rightarrow\infty,
			\end{align}
			so $\kappa$ grows without bound as the photon budget increases.
		\end{proposition}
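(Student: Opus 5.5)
Since $\bm{F}_{\mathrm{passive}}$ is diagonal in the canonical passive basis (Proposition~\ref{prop:weights}), its eigenvalues are exactly the weights $F_{\bm{\alpha}}(\bm{r})$, so $\lambda_{\max}$ is the largest of them. I will compare the three families in the chamber $r_{1}\geq\cdots\geq r_{n}\geq0$ using monotonicity of $\sinh^{2}$ on $[0,\infty)$.

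\emph{The maximum.} Every phase weight $2\sinh^{2}(2r_{j})$ is bounded by $2\sinh^{2}(2r_{1})$. Every difference-root argument satisfies $|r_{i}-r_{j}|\leq r_{i}\leq r_{i}+r_{j}$, so each $\mathrm{bsY}$ weight is dominated by the $\mathrm{bsX}$ weight of the same pair. Every sum-root argument satisfies $r_{i}+r_{j}\leq r_{1}+r_{2}$ for $i<j$, so the sum-root weights are bounded by $4\sinh^{2}(r_{1}+r_{2})$. This gives the stated $\max$ formula. Neither term can be dropped in general: $4\sinh^{2}(r_{1}+r_{2})$ wins when $r_{1}\approx r_{2}$, while $2\sinh^{2}(2r_{1})$ wins when $r_{2}$ is small.

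\emph{Finiteness.} For $\bm{r}\neq\bm{0}$ we have $\lambda_{\max}\geq 2\sinh^{2}(2r_{1})>0$. By Theorem~\ref{thm:passive-identifiability}, $\lambda_{\min}=0$ exactly on $\mathcal{A}_{C_{n}}$, so $\kappa=\infty$ there and $\kappa<\infty$ off it. At $\bm{r}=\bm{0}$ every weight vanishes, so $\bm{F}=\bm{0}$ and the ratio is $0/0$.

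\emph{Asymptotics.} Along the progression $r_{j}=a+(n-j)b(a)$, Theorem~\ref{thm:AP} gives $\lambda_{\min}=w=4\sinh^{2}b$. In the maximum formula, $2r_{1}=2a+2(n-1)b$ while $r_{1}+r_{2}=2a+(2n-3)b$. These differ by $b>0$, so for large $a$ the phase channel $\mathrm{rot}_{1}$ dominates: $\lambda_{\max}=2\sinh^{2}(2r_{1})$. Using $\sinh^{2}x=\tfrac14 e^{2x}(1+\mathcal{O}(e^{-2x}))$,
\begin{align}
\ln\kappa = 2\bigl[2a+2(n-1)b\bigr]-2b+\mathcal{O}(1) = 4a+(4n-6)b+\mathcal{O}(1).
\end{align}
From the strong-resource expansion $b(a)=2a-\tfrac12\ln2+\mathcal{O}(e^{-4a})$, we get $4a=2b+\mathcal{O}(1)$. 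Therefore $\ln\kappa=(4n-4)b+\mathcal{O}(1)=4(n-1)b(a)+\mathcal{O}(1)$.

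The main obstacle is this last step. The exponent must be expressed in terms of $b$ alone, which means trading $a$ for $b$ via the expansion of the balance relation so that the $\mathcal{O}(1)$ remainder is uniform. Since $b(a)\to\infty$, $\kappa$ is unbounded. Because $N(a)$ is strictly increasing in $a$, the limit $a\to\infty$ is the same as $N\to\infty$, which gives unbounded growth with the photon budget.
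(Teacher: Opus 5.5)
Your proof is correct and follows essentially the same route as the paper. The same chamber monotonicity comparisons (each $\mathrm{bsY}$ weight below a $\mathrm{bsX}$ weight, which is below $\mathrm{bsX}_{12}$) give the $\max$ formula and the finiteness dichotomy, and the same large-argument expansion of $\sinh^{2}$ gives the asymptotics, with $\mathrm{rot}_{1}$ dominating once $b>\tfrac{1}{2}\ln 2$. That threshold is slightly stronger than the $b>0$ you cite, because of the coefficient ratio $4/2$, but it holds since $b\to\infty$. The only difference is that you write $\lambda_{\min}=4\sinh^{2}b$ and then trade $4a$ for $2b+\mathcal{O}(1)$, whereas the paper uses the balanced form $\lambda_{\min}=2\sinh^{2}(2a)$; then $\ln\kappa=4(r_{1}-a)+\mathcal{O}(1)=4(n-1)b+\mathcal{O}(1)$ follows immediately and your \emph{main obstacle} disappears.
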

		
		\begin{proof}
			Monotonicity of $\sinh^{2}$ on $[0,\infty)$ places the largest rotation argument at $2r_{1}$ and the largest sum-root argument at $r_{1}+r_{2}$.
			The largest difference-root argument is $r_{1}-r_{n}\le r_{1}+r_{2}$, and both carry the same coefficient $c=4$, so no $\mathrm{bsY}$ channel can exceed $\mathrm{bsX}_{12}$, leaving the largest rotation and sum-root weights as the two candidates for $\lambda_{\max}$.
			On $\mathcal{A}_{C_{n}}$ with $\bm{r}\neq\bm{0}$ one has $\lambda_{\min}=0$ while $\lambda_{\max}\ge2\sinh^{2}(2r_{1})>0$, giving $\kappa=\infty$; off the arrangement all weights are strictly positive and $\kappa$ is finite.
			For the asymptotics, $\sinh^{2}x=(1/4)e^{2x}[1+\mathcal{O}(e^{-2x})]$ and $b>(1/2)\ln 2$ at large $a$, so the maximum is realized by $\mathrm{rot}_{1}$ and $\lambda_{\max}=(1/2)e^{4r_{1}}[1+\mathcal{O}(e^{-4r_{1}})]$, while the balance condition gives $\lambda_{\min}=2\sinh^{2}(2a)=(1/2)e^{4a}[1+\mathcal{O}(e^{-4a})]$.
			Dividing and using $r_{1}-a=(n-1)b$ gives the condition-number asymptotics.
		\end{proof}
		
		\begin{remark}[Scope of the design criterion]
			E-optimality removes every Fisher-null direction for $N>0$. The condition-number growth established in Proposition~\ref{prop:kappa} is visible at $n=3$, where the optimal spectrum gives $\kappa\simeq25$ at $N=0.5$ and $\kappa\simeq3\times10^{3}$ at $N=50$.
			Growth with photon number does not establish whether this family minimizes the condition number at any fixed budget. That question remains open and requires a separate optimization argument.
		\end{remark}

		%%%%%%%%%%%%%%%%%%%%%%%%%%%%%%%%%%%%%%%%%%%%%%%%%%%%%%%%%%%%%%%%%%%%%%%%%%%%%%
		\section{Passive measurement compatibility}
		\label{sec:S8}
		%%%%%%%%%%%%%%%%%%%%%%%%%%%%%%%%%%%%%%%%%%%%%%%%%%%%%%%%%%%%%%%%%%%%%%%%%%%%%%
		
		The passive mean-commutator tensor is
		\begin{align}
			\mathcal{C}_{\mu\nu} =&~
			\frac{1}{2i}
			\left\langle
			\left[
			\hat{g}_{\mu},
			\hat{g}_{\nu}
			\right]
			\right\rangle.
		\end{align}
		For the canonical squeezed probe, the nonvanishing normal moments are
		\begin{align}
			\left\langle
			\hat{a}_{j}^{\dagger}\hat{a}_{k}
			\right\rangle =&~
			\delta_{jk}\sinh^{2}r_{j},
		\end{align}
		so every intermode first-order coherence vanishes.
		
		The local phase generators commute among themselves,
		\begin{align}
			\left[
			\hat{g}_{j}^{\mathrm{rot}},
			\hat{g}_{k}^{\mathrm{rot}}
			\right] =&~ 0.
		\end{align}
		Their commutators with beam-splitter generators are again intermode beam-splitter operators,
		\begin{align}
			\left[
			\hat{g}_{\ell}^{\mathrm{rot}},
			\hat{a}_{j}^{\dagger}\hat{a}_{k}
			\right] =&~
			\left(
			\delta_{\ell j}
			-
			\delta_{\ell k}
			\right)
			\hat{a}_{j}^{\dagger}\hat{a}_{k},
		\end{align}
		whose expectation values vanish for $j\neq k$,
		leaving
		\begin{align}
			\mathcal{C}_{\mathrm{rot}_{\ell},\mathrm{bsX}_{jk}} =&~
			\mathcal{C}_{\mathrm{rot}_{\ell},\mathrm{bsY}_{jk}}
			= 0.
		\end{align}
		
		Commutators between beam-splitter generators belonging to different unordered mode pairs either vanish or produce another intermode bilinear.
		Their expectation values vanish on the canonical product probe.
		The only nonzero passive mean commutator can occur between the two quadratures of the same mode pair.
		
		We write the intermode bilinear and its adjoint as
		\begin{align}
			\hat{A}_{jk} =&~
			\hat{a}_{j}^{\dagger}\hat{a}_{k},
			\qquad
			\hat{A}_{jk}^{\dagger}
			=
			\hat{a}_{k}^{\dagger}\hat{a}_{j}.
		\end{align}
		The beam-splitter quadratures are
		\begin{align}
			\begin{aligned}
				\hat{g}_{jk}^{\mathrm{bsX}} =&~
				\hat{A}_{jk}
				+
				\hat{A}_{jk}^{\dagger},
				\\
				\hat{g}_{jk}^{\mathrm{bsY}} =&~
				-i
				\left(
				\hat{A}_{jk}
				-
				\hat{A}_{jk}^{\dagger}
				\right).
			\end{aligned}
		\end{align}
		Using the bilinear commutator
		\begin{align}
			\left[
			\hat{A}_{jk},
			\hat{A}_{jk}^{\dagger}
			\right] =&~
			\hat{a}_{j}^{\dagger}\hat{a}_{j}
			-
			\hat{a}_{k}^{\dagger}\hat{a}_{k},
		\end{align}
		we obtain the beam-splitter commutator
		\begin{align}
			\left[
			\hat{g}_{jk}^{\mathrm{bsX}},
			\hat{g}_{jk}^{\mathrm{bsY}}
			\right] =&~
			2i
			\left(
			\hat{a}_{j}^{\dagger}\hat{a}_{j}
			-
			\hat{a}_{k}^{\dagger}\hat{a}_{k}
			\right).
		\end{align}
		Its expectation value gives the mean-commutator tensor entry
		\begin{align}
			\begin{aligned}
				\mathcal{C}_{\mathrm{bsX}_{jk},\mathrm{bsY}_{jk}} =&~
				\sinh^{2}r_{j}
				-
				\sinh^{2}r_{k}
				\\
				=&~
				\frac{1}{2}
				\left(
				\cosh 2r_{j}
				-
				\cosh 2r_{k}
				\right)
				\\
				=&~
				\sinh\left(r_{j}+r_{k}\right)
				\sinh\left(r_{j}-r_{k}\right).
			\end{aligned}
		\end{align}
		
		In the ordered passive basis
		\begin{align}
			\left\{
			\hat{g}_{j}^{\mathrm{rot}}
			\right\}_{j=1}^{n}
			\cup
			\left\{
			\hat{g}_{jk}^{\mathrm{bsX}},
			\hat{g}_{jk}^{\mathrm{bsY}}
			\right\}_{j<k},
		\end{align}
		the mean-commutator tensor has the direct-sum form
		\begin{align}
			\bm{\mathcal{C}}_{\mathrm{passive}} =&~
			\bm{0}_{n}
			\oplus
			\bigoplus_{1\leq j<k\leq n}
			\begin{pmatrix}
				0 &
				\mathcal{C}_{\mathrm{bsX}_{jk},\mathrm{bsY}_{jk}}
				\\
				-\mathcal{C}_{\mathrm{bsX}_{jk},\mathrm{bsY}_{jk}}
				&
				0
			\end{pmatrix}.
		\end{align}
		The passive Fisher weights,
		\begin{align}
			\begin{aligned}
				F_{jk}^{\mathrm{bsX}} =&~
				4\sinh^{2}\left(r_{j}+r_{k}\right),
				\\
				F_{jk}^{\mathrm{bsY}} =&~
				4\sinh^{2}\left(r_{j}-r_{k}\right),
			\end{aligned}
		\end{align}
		combine with the corresponding mean-commutator entry into the compact relation
		\begin{align}
			\left(
			\mathcal{C}_{\mathrm{bsX}_{jk},\mathrm{bsY}_{jk}}
			\right)^{2} =&~
			\frac{1}{16}
			F_{jk}^{\mathrm{bsX}}
			F_{jk}^{\mathrm{bsY}}.
		\end{align}
		Both beam-splitter quadratures are identifiable precisely when the right-hand side is positive, in which case their mean commutator is nonzero.
		Weak commutativity of the pair occurs only when one of its two restricted-root Fisher weights vanishes.
		
		\begin{proposition}[Saturated pair incompatibility and maximal compatible dimension]
			At any spectrum where both quadratures of a mode pair are identifiable, their $2\times2$ tangent Gram matrix has rank one and their dimensionless incompatibility measure is one.
			At a regular spectrum, the largest compatible real passive tangent subspace has dimension $n(n+1)/2$.
		\end{proposition}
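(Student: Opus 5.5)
The plan has two parts, one for each claim in the statement.

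\emph{Pair incompatibility.} Fix a mode pair $j<k$ at a spectrum with $r_{j}\neq\pm r_{k}$, so that both $F^{\mathrm{bsX}}_{jk}$ and $F^{\mathrm{bsY}}_{jk}$ are positive by Proposition~\ref{prop:weights}. The tangent Gram matrix is $\bm{Q}=\bm{F}/4+i\bm{\mathcal{C}}$, with entries $Q_{\mu\nu}=\langle\Delta\hat{g}_{\mu}\Delta\hat{g}_{\nu}\rangle=Z_{\mu\nu}/2$ by Proposition~\ref{prop:Wick}. Its $2\times2$ block on $\{\mathrm{bsX}_{jk},\mathrm{bsY}_{jk}\}$ is
\begin{align}
\begin{pmatrix} F^{\mathrm{bsX}}_{jk}/4 & i\,\mathcal{C}_{\mathrm{bsX}_{jk},\mathrm{bsY}_{jk}} \\ -i\,\mathcal{C}_{\mathrm{bsX}_{jk},\mathrm{bsY}_{jk}} & F^{\mathrm{bsY}}_{jk}/4 \end{pmatrix},
\end{align}
where the off-diagonal Fisher entry vanishes by diagonality. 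The relation $(\mathcal{C})^{2}=F^{\mathrm{bsX}}_{jk}F^{\mathrm{bsY}}_{jk}/16$, derived just above the proposition, makes its determinant zero. The block is positive semidefinite with positive diagonal, so it has rank one. Equivalently, the centered tangents $\Delta\hat{g}^{\mathrm{bsX}}_{jk}|\psi_{\bm{r}}\rangle$ and $\Delta\hat{g}^{\mathrm{bsY}}_{jk}|\psi_{\bm{r}}\rangle$ are complex collinear, which one could cross-check by writing both as linear combinations of the two-photon states $\hat{a}_{j}^{\dagger}\hat{a}_{k}^{\dagger}|\psi\rangle$-type vectors in the squeezed frame.

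For the incompatibility measure I would restrict to this two-parameter submodel. There $\bm{F}^{-1/2}\bm{D}\bm{F}^{-1/2}$ is the antisymmetric $2\times2$ matrix with entry $4\mathcal{C}/\sqrt{F^{\mathrm{bsX}}_{jk}F^{\mathrm{bsY}}_{jk}}$, which has modulus $1$ by the same relation. Multiplying by $i$ gives a Hermitian matrix with eigenvalues $\pm1$, so its operator norm is $R=1$. The general bound $R\le1$ follows from $\bm{Q}\ge0$, since $\bm{F}^{-1/2}\bm{Q}\bm{F}^{-1/2}=\bm{I}/4+i\bm{F}^{-1/2}\bm{\mathcal{C}}\bm{F}^{-1/2}\ge0$. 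This shows the pair attains the maximum.

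\emph{Compatible dimension.} At a regular spectrum I would read off from the block form of $\bm{\mathcal{C}}_{\mathrm{passive}}$ that $\operatorname{rank}\bm{\mathcal{C}}=n(n-1)$: there are $n(n-1)/2$ nondegenerate $2\times2$ blocks, each nonzero because $\sinh(r_{j}+r_{k})\sinh(r_{j}-r_{k})\neq0$. A real tangent subspace $V\subseteq\mathbb{R}^{n^{2}}$ is weakly compatible iff $\bm{u}^{\mathrm{T}}\bm{\mathcal{C}}\bm{v}=0$ for all $\bm{u},\bm{v}\in V$, that is, iff $V$ is isotropic for the antisymmetric form $\bm{\mathcal{C}}$. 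This uses bilinearity of the mean commutator in the generator coefficients. Identifiability is automatic because $\bm{F}>0$. Let $\ker\bm{\mathcal{C}}$ be the $n$-dimensional rotation span, and pass to the quotient $\mathbb{R}^{n^{2}}/\ker\bm{\mathcal{C}}$. There the form is symplectic of dimension $n(n-1)$. Any isotropic $V$ projects to an isotropic subspace of dimension at most $n(n-1)/2$, and its intersection with the kernel has dimension at most $n$. Hence $\dim V\le n+n(n-1)/2=n(n+1)/2$.

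For attainment, I would take the span of all rotations together with one quadrature per pair. Each block then contributes only a single diagonal position, so the form restricted to this span vanishes identically and the bound is reached.

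The main obstacle is the isotropic-dimension bound. I would handle it with the standard linear-algebra fact $\dim V\le\dim\ker+\tfrac12\operatorname{rank}$, proved through the quotient map as above, rather than through any basis-selection argument, since the statement concerns arbitrary real subspaces and not only selections of basis generators.
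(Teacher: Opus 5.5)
Your proposal is correct and follows the paper's proof essentially step for step. You use positivity of $\bm{Q}=\bm{F}/4+i\bm{\mathcal{C}}$ together with the Fisher--commutator identity to force a rank-one block with normalized eigenvalues $\pm1$. You then pass to the quotient of the antisymmetric form by its $n$-dimensional rotation kernel to bound isotropic subspaces by $n+n(n-1)/2$, a bound attained by all rotations plus one quadrature per pair.

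One small fix concerns the general bound $R\le1$. The inequality $\bm{F}^{-1/2}\bm{Q}\bm{F}^{-1/2}\ge0$ only bounds the eigenvalues of $i\bm{F}^{-1/2}\bm{D}\bm{F}^{-1/2}$ from below by $-1$. For the upper bound you also need $\overline{\bm{Q}}=\bm{F}/4-i\bm{\mathcal{C}}\ge0$, which is what the paper invokes; equivalently, the spectrum of $i$ times a real antisymmetric matrix is symmetric under $\lambda\mapsto-\lambda$.
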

		
		\begin{proof}
			For $|v_\mu\rangle=\Delta\hat g_\mu|\psi_{\bm{r}}\rangle$, the Gram matrix $Q_{\mu\nu}=\langle v_\mu|v_\nu\rangle=F_{\mu\nu}/4+i\mathcal C_{\mu\nu}$ is positive semidefinite.
			Its two-parameter determinant is nonnegative, giving the quantum-geometric incompatibility bound.
			For an identifiable beam-splitter pair, the off-diagonal Fisher entry vanishes, and the Fisher--commutator identity sets that determinant to zero while both diagonal entries are positive.
			The block has rank one, the two vectors are complex collinear, and the eigenvalues of $i\bm{F}^{-1/2}(4\bm{\mathcal{C}})\bm{F}^{-1/2}$ on the pair are $+1$ and $-1$.
			More generally $\bm{Q}\geq0$ and its complex conjugate imply that these normalized eigenvalues lie in $[-1,1]$.
			
			At a regular spectrum each mode pair contributes a nondegenerate two-dimensional antisymmetric block to $\bm{\mathcal{C}}$, while its kernel consists of the $n$ rotations.
			The quotient by that kernel is a symplectic vector space of dimension $n(n-1)$.
			An isotropic subspace in this quotient has dimension at most $n(n-1)/2$; restoring the kernel gives the bound $n(n+1)/2$.
			All rotations and one quadrature from every pair attain this bound.
		\end{proof}
		
		%%%%%%%%%%%%%%%%%%%%%%%%%%%%%%%%%%%%%%%%%%%%%%%%%%%%%%%%%%%%%%%%%%%%%%%%%%%%%%
		\section{Two-mode reduction and the Mach--Zehnder sloppy model}
		\label{sec:S8b}
		%%%%%%%%%%%%%%%%%%%%%%%%%%%%%%%%%%%%%%%%%%%%%%%%%%%%%%%%%%%%%%%%%%%%%%%%%%%%%%
		
		We compare the two-phase Mach--Zehnder model~\cite{Frigerio2026p2540001} with the difference-root wall of our classification to distinguish the two mechanisms.
		
		In that model, the estimated unitary is the ordered product
		\begin{align}
			\hat{U} =&~ \hat{U}_{R}(\lambda_{2})\,\hat{S}_{x}\,\hat{U}_{R}(\lambda_{1})\,\hat{U}_{\mathrm{BS}},
		\end{align}
		where $\hat{U}_{\mathrm{BS}}$ is a fixed parameter-independent beam splitter, $\hat{U}_{R}(\lambda)=e^{-i\lambda\hat{a}_{1}^{\dagger}\hat{a}_{1}}$ describes each of the two phase shifters on the same arm mode, and $\hat{S}_{x}$ is a single-mode squeezer of strength $x$, with fixed squeezing phase, inserted between them.
		The beam splitter precedes both phase encodings and carries no parameter, so it does not separate them.
		In the bare sloppy model, $x=0$, the two phase shifters act consecutively through the single generator $\hat{a}_{1}^{\dagger}\hat{a}_{1}$,
		\begin{align}
			e^{-i\lambda_{2}\hat{a}_{1}^{\dagger}\hat{a}_{1}}
			e^{-i\lambda_{1}\hat{a}_{1}^{\dagger}\hat{a}_{1}}
			=&~
			e^{-i(\lambda_{1}+\lambda_{2})\hat{a}_{1}^{\dagger}\hat{a}_{1}},
		\end{align}
		so the encoded state depends only on the sum $\lambda_{1}+\lambda_{2}$.
		The direction $(1,-1)$ is always Fisher-null; $(1,1)$ has a nonzero eigenvalue whenever the probe has nonzero QFI for the common phase, and otherwise both directions are null.
		This redundancy holds for every input squeezing spectrum and displacement, because both parameters use the same generator with nothing between them.
		
		The restricted-root singularities instead concern a single passive generator whose Fisher weight vanishes as the squeezing spectrum reaches a wall.
		For the difference root $\bm{e}_{1}-\bm{e}_{2}$, the relevant generator is $\hat{g}_{12}^{\mathrm{bsY}}$, and its weight $F_{12}^{\mathrm{bsY}}=4\sinh^{2}(r_{1}-r_{2})$ vanishes only on the wall $r_{1}=r_{2}$, becoming nonzero for unequal squeezing.
		This is a spectrum-controlled rank loss of one generator, not a spectrum-independent redundancy of a two-parameter encoding, and the two null directions are distinct objects.
		
		The two mechanisms require different remedies.
		An operation inserted between the two phase encodings can make their effective generators independent, as an intermediate squeezer does~\cite{Frigerio2026p2540001}.
		For the restricted-root wall, changing the squeezing spectrum makes the $\mathrm{bsY}$ tangent nonzero.
		The Mach--Zehnder input family allows relative displacement, but its compatible configuration also exists at zero displacement with nonzero input and intermediate squeezing~\cite{Frigerio2026p2540001}.
		Initial displacement leaves the bare sequential redundancy unchanged, while first moments can lift a covariance-wall blind direction.

		%%%%%%%%%%%%%%%%%%%%%%%%%%%%%%%%%%%%%%%%%%%%%%%%%%%%%%%%%%%%%%%%%%%%%%%%%%%%%%
		\section{Operational consequences of the singular arrangement}
		\label{sec:S9}
		%%%%%%%%%%%%%%%%%%%%%%%%%%%%%%%%%%%%%%%%%%%%%%%%%%%%%%%%%%%%%%%%%%%%%%%%%%%%%%
		
		%%%%%%%%%%%%%%%%%%%%%%%%%%%%%%%%%%%%%%%%%%%%%%%%%%%%%%%%%%%%%%%%%%%%%%%%%%%%%%
		\subsection{Phase sensing versus network calibration}
		%%%%%%%%%%%%%%%%%%%%%%%%%%%%%%%%%%%%%%%%%%%%%%%%%%%%%%%%%%%%%%%%%%%%%%%%%%%%%%
		
		For $n$ independent phase rotations, the canonical probe has Fisher matrix
		\begin{align}
			\bm{F}_{\mathrm{phase}} =&~ \mathrm{diag}\left( 2\sinh^{2}\left( 2r_{1} \right),\ldots,2\sinh^{2}\left( 2r_{n} \right) \right),
		\end{align}
		and total single-parameter inverse-QFI cost
		\begin{align}
			C_{\mathrm{phase}} =&~ \sum_{j=1}^{n}\frac{1}{2\sinh^{2}\left( 2r_{j} \right)}.
		\end{align}
		
		\begin{proposition}[Uniform phase-sensing optimum]
			\label{prop:uniform}
			At fixed mean photon number
			\begin{align}
				N =&~ \sum_{j=1}^{n}\sinh^{2}r_{j}>0,
			\end{align}
			uniform squeezing uniquely minimizes $C_{\mathrm{phase}}$ within the nonnegative squeezing chamber.
		\end{proposition}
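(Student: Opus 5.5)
The plan is to change variables to the per-mode photon numbers $n_{j}=\sinh^{2}r_{j}\ge0$. This turns the resource constraint into the linear simplex constraint $\sum_{j}n_{j}=N$. The map $r\mapsto\sinh^{2}r$ is a bijection from $[0,\infty)$ to $[0,\infty)$, so uniqueness in $n_{j}$ transfers to uniqueness in $r_{j}$ within the nonnegative chamber. Using $\sinh^{2}(2r)=4\sinh^{2}r\cosh^{2}r=4n_{j}(1+n_{j})$, the cost becomes separable:
\begin{align}
C_{\mathrm{phase}}=\sum_{j=1}^{n}\phi(n_{j}),\qquad \phi(x)=\frac{1}{8x(1+x)}.
\end{align}

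Next I will establish strict convexity of $\phi$ on $(0,\infty)$. The function $x\mapsto x(1+x)$ is positive, increasing and convex there, and $t\mapsto 1/(8t)$ is convex and decreasing on $t>0$. Their composition is therefore convex. Strictness can be checked directly: $\phi''(x)=\bigl[(1+2x)^{2}+x(1+x)\bigr]/\bigl[4x^{3}(1+x)^{3}\bigr]>0$, up to the positive constant, and I would only verify this sign. Since $\phi(x)\to\infty$ as $x\to0^{+}$, any point with some $n_{j}=0$ has infinite cost. Because $N>0$, a finite-cost point such as the uniform one exists, so the minimization is effectively over the open simplex interior.

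On the open simplex, the sum of strictly convex functions under a linear constraint has at most one minimizer. Jensen's inequality gives
\begin{align}
\sum_{j}\phi(n_{j})\ge n\,\phi(N/n),
\end{align}
with equality iff all $n_{j}=N/n$. Equivalently, a Lagrange condition $\phi'(n_{j})=\lambda$ forces equal $n_{j}$, since $\phi'$ is strictly increasing. Mapping back gives $r_{j}=\operatorname{arcsinh}\sqrt{N/n}$ for all $j$, which is the uniform spectrum quoted in Sec.~\ref{sec:Sec5}.

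The only delicate point I expect is the boundary handling. The walls $r_{j}=0$ make $C_{\mathrm{phase}}$ infinite rather than undefined, so I will treat the cost as extended-valued and note that the infimum is attained in the interior. Apart from that, the argument is routine convex resource allocation.
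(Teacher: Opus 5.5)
Your route is the paper's own: substitute $n_j=\sinh^2 r_j$, use $\sinh^2(2r_j)=4n_j(n_j+1)$ to get the separable cost $\sum_j 1/[8n_j(n_j+1)]$, discard $n_j=0$ because the cost diverges there, and apply Jensen's inequality to a strictly convex summand, with equality only at $n_j=N/n$. Your boundary handling is fine, and so is transferring uniqueness through the bijection $r\mapsto\sinh^2 r$ on $[0,\infty)$.

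The convexity step, however, contains one error, and it shows up in both of your justifications. The composition rule you cite does not apply here. A convex \emph{nonincreasing} outer function preserves convexity only when the inner function is \emph{concave}, but $g(x)=x(1+x)$ is convex. The chain rule shows the problem: with $h(t)=1/(8t)$, $\phi''=h''(g)\,g'^{2}+h'(g)\,g''$, and the second term is negative. The numerator of $\phi''$ is therefore $(1+2x)^{2}-x(1+x)=3x^{2}+3x+1$, not $(1+2x)^{2}+x(1+x)$. Your plus sign is exactly what you get by treating that negative term as positive.

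The corrected expression is the paper's $f''(x)=(3x^{2}+3x+1)/[4(x^{2}+x)^{3}]$, which is still strictly positive, so your conclusion survives. You do have to show that comparison explicitly rather than rely on the composition rule. A computation-free alternative is to write $\phi$ as $1/8$ times the product of $u=1/x$ and $v=1/(1+x)$, which are positive, decreasing, and strictly convex. Then $(uv)''=u''v+2u'v'+uv''$ is a sum of three positive terms.
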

		
		\begin{proof}
			With mean mode occupations
			\begin{align}
				n_{j} =&~ \sinh^{2}r_{j},
			\end{align}
			the identity $\sinh^{2}\left( 2r_{j} \right)=4n_{j}\left( n_{j}+1 \right)$ gives
			\begin{align}
				C_{\mathrm{phase}} =&~ \sum_{j=1}^{n}\frac{1}{8n_{j}\left( n_{j}+1 \right)}.
			\end{align}
			Any $n_{j}=0$ produces divergent cost.
			For the single-mode cost function
			\begin{align}
				f(x) =&~ \frac{1}{8x(x+1)},
			\end{align}
			strict convexity follows from
			\begin{align}
				f''(x) =&~ \frac{3x^{2}+3x+1}{4\left( x^{2}+x \right)^{3}}>0,
				\qquad
				x>0.
			\end{align}
			Jensen's inequality uniquely minimizes the sum at
			\begin{align}
				n_{1}=\cdots=n_{n} =&~ \frac{N}{n},
			\end{align}
			which corresponds to uniform squeezing.
		\end{proof}
		
		Passive network calibration probes both beam-splitter quadratures on each mode pair, with single-parameter inverse-QFI sensitivities
		\begin{align}
			\begin{aligned}
				\mathcal{S}_{ij}^{\mathrm{bsY}} =&~ \frac{1}{4\sinh^{2}\left( r_{i}-r_{j} \right)},
				\\
				\mathcal{S}_{ij}^{\mathrm{bsX}} =&~ \frac{1}{4\sinh^{2}\left( r_{i}+r_{j} \right)}.
			\end{aligned}
		\end{align}
		Uniform squeezing sets
		\begin{align}
			r_{i}-r_{j} =&~ 0
		\end{align}
		for every mode pair and makes every $\mathcal{S}_{ij}^{\mathrm{bsY}}$ diverge.
		The inverse-QFI costs characterize individual parameter sensitivities.
		Simultaneous attainability additionally requires weak commutativity and Fisher independence.
		
		For $n=3$ and $N=2$, uniform squeezing gives $\bm{r}_{\mathrm{phase}} = \left( 0.745,0.745,0.745 \right)$ and $C_{\mathrm{phase}} \simeq 0.34$, while every $\mathrm{bsY}$ sensitivity diverges.
		
		%%%%%%%%%%%%%%%%%%%%%%%%%%%%%%%%%%%%%%%%%%%%%%%%%%%%%%%%%%%%%%%%%%%%%%%%%%%%%%
		\subsection{Displacement lifting of a covariance-wall degeneracy}
		%%%%%%%%%%%%%%%%%%%%%%%%%%%%%%%%%%%%%%%%%%%%%%%%%%%%%%%%%%%%%%%%%%%%%%%%%%%%%%
		
		For an infinitesimal passive parameter change $\epsilon$ generated by $\bm{K}$,
		\begin{align}
			\delta\bm{d} =&~ \epsilon\bm{K}\bm{d},
			\qquad
			\delta\bm{\sigma} = \epsilon\left( \bm{K}\bm{\sigma}+\bm{\sigma}\bm{K}^{\mathrm{T}} \right).
		\end{align}
		On an equal-squeezing wall the corresponding $\mathrm{bsY}$ generator stabilizes the covariance, $\bm{K}\bm{\sigma}+\bm{\sigma}\bm{K}^{\mathrm{T}}=\bm{0}$, so the centered covariance model is Fisher-blind to that direction.
		
		\begin{proposition}[Displacement lifting]
			\label{prop:displacement}
			For a covariance-blind generator with $\bm{K}\bm{d}\neq\bm{0}$, the first moments of a displaced probe contribute a strictly positive Gaussian QFI~\cite{Monras2013p1303.3682,Safranek2019p035304},
			\begin{align}
				F_{\bm{d}} =&~ \left( \bm{K}\bm{d} \right)^{\mathrm{T}}\bm{\sigma}^{-1}\left( \bm{K}\bm{d} \right)>0.
			\end{align}
		\end{proposition}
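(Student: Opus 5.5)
The plan is to use the standard pure-Gaussian QFI for a displaced probe and isolate the first-moment contribution. Under a passive generator with Hamiltonian matrix $\bm{K}$, Proposition~\ref{prop:quadraticsymplectic} gives the infinitesimal moment changes $\delta\bm{d}=\epsilon\bm{K}\bm{d}$ and $\delta\bm{\sigma}=\epsilon(\bm{K}\bm{\sigma}+\bm{\sigma}\bm{K}^{\mathrm{T}})$. For a Gaussian state, the QFI splits into a covariance term, quadratic in $\partial\bm{\sigma}$, and a displacement term,
\begin{align}
F = \tfrac{1}{2}\,\mathrm{vec}(\partial\bm{\sigma})^{\mathrm{T}}\mathcal{M}^{-1}\mathrm{vec}(\partial\bm{\sigma}) + (\partial\bm{d})^{\mathrm{T}}\bm{\sigma}^{-1}(\partial\bm{d}),
\end{align}
as in Refs.~\cite{Monras2013p1303.3682,Safranek2019p035304}. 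I would quote this decomposition rather than rederive it. If a self-contained check is wanted, one can write the displaced probe as $\hat{D}(\bm{d})\lvert\psi\rangle$ and extend the Wick computation of Proposition~\ref{prop:Wick}.

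The key steps are the following.

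\emph{Covariance term.} On the equal-squeezing wall the hypothesis is $\bm{K}\bm{\sigma}+\bm{\sigma}\bm{K}^{\mathrm{T}}=\bm{0}$, so $\partial\bm{\sigma}=\bm{0}$. This is exactly the stabilizer statement used in Theorem~\ref{thm:passive-identifiability}, and it kills the covariance contribution.

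\emph{Displacement term.} The first-moment contribution reduces to $(\bm{K}\bm{d})^{\mathrm{T}}\bm{\sigma}^{-1}(\bm{K}\bm{d})$. There is no cross term between the two contributions, since the Gaussian QFI is additive in these two pieces.

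\emph{Positivity.} The covariance $\bm{\sigma}$ is positive definite, because a physical covariance obeys $\bm{\sigma}+\tfrac{i}{2}\bm{\Omega}\ge 0$ and is invertible; for the displaced canonical probe this is explicit from Proposition~\ref{prop:canonicalcovariance}. Hence $\bm{\sigma}^{-1}>0$, and the quadratic form is strictly positive whenever $\bm{K}\bm{d}\neq\bm{0}$.

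A subtlety is the ordering of displacement and squeezing. The passive unitary acts on the state $\hat{D}\hat{S}\lvert0\rangle$, whose covariance is still $\bm{\sigma}(\bm{r})$ because displacement does not alter the covariance. Under $e^{-i\epsilon\hat{g}}$, the mean vector $\bm{d}$ transforms linearly by $e^{\epsilon\bm{K}}$. No linear generator is added, so $\bm{K}$ carries no affine part. I would also check conventions: the paper's $F=2\langle\{\Delta\hat{g},\Delta\hat{g}\}\rangle$ equals $4\,\mathrm{Var}$. Under $\bm{\sigma}_{0}=\bm{I}/2$, a vacuum displaced along $q$ should give $F=4\,\mathrm{Var}(\hat{p})\cdot|\partial d|^{2}$ factors consistent with $\partial\bm{d}^{\mathrm{T}}\bm{\sigma}^{-1}\partial\bm{d}$. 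I would verify this single-mode normalization explicitly before quoting the formula.

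The main obstacle I expect is pinning down the exact normalization of the displacement term, i.e.\ whether it is $\bm{\sigma}^{-1}$ or $2\bm{\sigma}^{-1}$ under $\bm{\sigma}_{0}=\bm{I}/2$. I would settle it with a direct computation on a coherent state, where $F=4\,\mathrm{Var}(\hat{g})$ can be evaluated by hand. Strict positivity holds independently of this constant.
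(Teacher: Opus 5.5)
Your proposal is correct and follows the paper's own route: the paper also quotes the standard Gaussian QFI decomposition, uses $\delta\bm{d}=\epsilon\bm{K}\bm{d}$ together with the stabilizer condition $\bm{K}\bm{\sigma}+\bm{\sigma}\bm{K}^{\mathrm{T}}=\bm{0}$ to remove the covariance term (which has no cross term with the displacement term), and concludes strict positivity from $\bm{\sigma}>0$. On the normalization you flagged, your proposed check gives $\bm{\sigma}^{-1}$ rather than $2\bm{\sigma}^{-1}$ under $\bm{\sigma}_{0}=\bm{I}_{2n}/2$: a vacuum displaced along $\hat{q}$ by the generator $\hat{p}$ has $\partial d_{q}=1$ and $F=4\,\mathrm{Var}(\hat{p})=2=(1/2)^{-1}$, so the stated formula is correctly normalized.
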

		
		The covariance-stabilizer condition places the $\mathrm{bsY}$ generator in the isotropy subalgebra of $\bm{\sigma}$ on the wall, while the same generator acts nontrivially on the first moments whenever $\bm{K}\bm{d}\neq\bm{0}$.
		A coherent displacement recovers local sensitivity to a direction that the centered model cannot resolve, without moving the covariance off the wall.
		The magnitude of $F_{\bm{d}}$ scales with the squared displacement and draws on the displacement energy allocated to the probe.
		A displacement $\bm{d}$ adds $\lVert\bm{d}\rVert^{2}/2$ to the mean photon number $N=\sum_{j}\sinh^{2}r_{j}$, so it competes with squeezing for the same resource.
		Placing displacement on an equal footing with squeezing requires a joint optimization over the squeezing spectrum and the displacement at fixed total photon number. Our displacement result establishes that a covariance-wall degeneracy can be lifted; the optimal resource allocation remains open.
		
		%%%%%%%%%%%%%%%%%%%%%%%%%%%%%%%%%%%%%%%%%%%%%%%%%%%%%%%%%%%%%%%%%%%%%%%%%%%%%%
		\subsection{Robustness of the walls under isotropic additive noise}
		%%%%%%%%%%%%%%%%%%%%%%%%%%%%%%%%%%%%%%%%%%%%%%%%%%%%%%%%%%%%%%%%%%%%%%%%%%%%%%
		
		Isotropic additive Gaussian noise preserves the first moments and broadens the covariance~\cite{Weedbrook2012p621},
		\begin{align}
			\bm{\sigma}_{\bar{n}} =&~ \bm{\sigma}+\bar{n}\bm{I}_{2n},
			\qquad
			\bar{n}\ge0,
		\end{align}
		with $\bar{n}$ the added mean photon number per mode.
		
		\begin{proposition}[Isotropic additive-noise robustness]
			\label{prop:thermal}
			For the centered Gaussian family with covariance $\bm{\sigma}_{\bar n}=\bm{\sigma}+\bar n\bm{I}$ at any finite $\bar n\geq0$, the passive Fisher kernel is unchanged.
			Its QFI rank-loss arrangement is exactly $\mathcal A_{C_n}$, with the same nullity on every stratum.
		\end{proposition}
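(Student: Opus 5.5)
The plan is to reduce the noisy problem to the pure one through the covariance stabilizer. The key observation is that isotropic noise commutes with every passive transformation. For a passive Hamiltonian matrix $\bm{K}\in\mathfrak{k}$, the symplectic matrix $e^{\bm{K}}$ is orthogonal, so $\bm{K}^{\mathrm{T}}=-\bm{K}$. By Proposition~\ref{prop:quadraticsymplectic}, the first-order covariance change is then
\begin{align}
\delta\bm{\sigma}_{\bar n}=\epsilon\left(\bm{K}\bm{\sigma}_{\bar n}+\bm{\sigma}_{\bar n}\bm{K}^{\mathrm{T}}\right)=\epsilon\left(\bm{K}\bm{\sigma}+\bm{\sigma}\bm{K}^{\mathrm{T}}\right)+\epsilon\,\bar n\left(\bm{K}+\bm{K}^{\mathrm{T}}\right)=\epsilon\left(\bm{K}\bm{\sigma}+\bm{\sigma}\bm{K}^{\mathrm{T}}\right).
\end{align}
The first moments stay zero. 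Hence the linear map $\bm{K}\mapsto\delta\bm{\sigma}$ restricted to $\mathfrak{k}$ is the same for every $\bar n\geq0$, and so is its kernel $\mathfrak{k}_{\bm{\sigma}}$, the passive stabilizer of $\bm{\sigma}(\bm{r})$.

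Next I will show that, for every $\bar n\geq0$, the passive Fisher kernel equals $\mathfrak{k}_{\bm{\sigma}}$. A centered Gaussian state is determined by its covariance, so $\delta\bm{\sigma}=\bm{0}$ gives a vanishing state tangent and lies in the Fisher kernel. For the converse at $\bar n>0$, the symplectic eigenvalues of $\bm{\sigma}_{\bar n}$ are at least $1/2+\bar n>1/2$, so the state is faithful. For faithful centered Gaussian families I will invoke the standard covariance formula for the QFI~\cite{Monras2013p1303.3682,Safranek2019p035304},
\begin{align}
F=\tfrac{1}{2}\,\mathrm{vec}(\delta\bm{\sigma})^{\mathrm{T}}\bm{M}^{-1}\mathrm{vec}(\delta\bm{\sigma}),\qquad \bm{M}=\bm{\sigma}_{\bar n}\otimes\bm{\sigma}_{\bar n}-\tfrac{1}{4}\bm{\Omega}\otimes\bm{\Omega}.
\end{align}
Here $\bm{M}$ is positive definite exactly when no symplectic eigenvalue equals $1/2$, so $F=0$ if and only if $\delta\bm{\sigma}=\bm{0}$. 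The case $\bar n=0$ is already covered by Theorem~\ref{thm:passive-identifiability}, whose proof shows that Fisher-null directions stabilize the covariance. Its converse uses the same determination of a centered pure Gaussian state by $\bm{\sigma}$.

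It remains to identify $\mathfrak{k}_{\bm{\sigma}}$ with the span of $\{\bm{K}_{\bm{\alpha}}:\bm{\alpha}(\bm{r})=0\}$. I will write $\bm{K}\bm{\sigma}+\bm{\sigma}\bm{K}^{\mathrm{T}}=-\tfrac12[e^{2\bm{A}(\bm{r})},\bm{K}]$ and expand $\bm{K}=\sum_{\bm{\alpha}}k_{\bm{\alpha}}\bm{K}_{\bm{\alpha}}$ in the basis of Proposition~\ref{prop:pairing}. The rank-one adjoint action gives $e^{2\bm{A}}\bm{K}_{\bm{\alpha}}e^{-2\bm{A}}=\cosh[2\bm{\alpha}(\bm{r})]\bm{K}_{\bm{\alpha}}+\sinh[2\bm{\alpha}(\bm{r})]\bm{P}_{\bm{\alpha}}$ (up to the sign convention for $\bm{P}_{\bm{\alpha}}$). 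The $\mathfrak{p}$-components $\sinh[2\bm{\alpha}(\bm{r})]\,k_{\bm{\alpha}}\bm{P}_{\bm{\alpha}}$ are linearly independent across roots, so they all vanish only if $k_{\bm{\alpha}}=0$ whenever $\bm{\alpha}(\bm{r})\neq0$. Conversely, every $\bm{K}_{\bm{\alpha}}$ with $\bm{\alpha}(\bm{r})=0$ commutes with $\bm{A}(\bm{r})$ and hence lies in the stabilizer. This gives kernel dimension equal to the number of vanishing positive roots. The rank-loss set is therefore $\mathcal{A}_{C_n}$ with unchanged nullity on every stratum, consistent with Theorem~\ref{thm:passive-identifiability} at $\bar n=0$.

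The main obstacle is the nondegeneracy step for $\bar n>0$: I must justify that the mixed-state Gaussian QFI is strictly positive on every nonzero $\delta\bm{\sigma}$. I will rely on positivity of $\bm{M}$ for strictly thermal symplectic spectra. If that citation is not accepted as stated, the fallback is the general argument that the Bures metric is nondegenerate on faithful states, together with injectivity of $\bm{\sigma}\mapsto\hat\rho$ for centered Gaussians.
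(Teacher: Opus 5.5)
Your proof is correct. It shares the paper's skeleton: antisymmetry of the passive $\bm{K}$ removes the $\bar n\bm{I}$ contribution to $\delta\bm{\sigma}$, faithfulness for $\bar n>0$ makes the QFI kernel equal to the covariance-stabilizer kernel, and $\bar n=0$ is Theorem~\ref{thm:passive-identifiability}. You differ from the paper in two steps.

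For the nondegeneracy step, the paper stays in Hilbert space. Its eigenbasis SLD formula $F=2\sum_{a,b}\lvert\langle a|\partial\hat{\rho}|b\rangle\rvert^{2}/(p_{a}+p_{b})$ vanishes only for $\partial\hat{\rho}=0$ on a faithful state, and the characteristic function then gives $\partial\hat{\rho}=0\Leftrightarrow\partial\bm{\sigma}=0$; this is exactly your fallback. Your primary route stays in phase space. Your claim that $\bm{M}$ is positive definite exactly when every symplectic eigenvalue $\nu_{j}$ of $\bm{\sigma}_{\bar n}$ exceeds $1/2$ is right: congruence by $\bm{S}\otimes\bm{S}$, with $\bm{S}$ the Williamson symplectic matrix, reduces $\bm{M}$ to mode-pair blocks with eigenvalues $\nu_{j}\nu_{k}\pm\frac{1}{4}$. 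This shows why faithfulness is the sharp threshold. It also yields a quantitative bound $F\geq\lVert\delta\bm{\sigma}\rVert^{2}/[2\lambda_{\max}(\bm{M})]$ in the Frobenius norm. The cost is that you rely on the covariance QFI formula being valid for invertible $\bm{M}$, whereas the paper's argument needs no Gaussian-specific QFI formula.

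For the equal-nullity claim, the paper inherits the identification of the kernel with $\mathrm{span}\{\bm{K}_{\bm{\alpha}}:\bm{\alpha}(\bm{r})=0\}$ from the diagonal pure-state Fisher matrix at $\bar n=0$. You instead compute the stabilizer of $\bm{\sigma}$, and hence of every $\bm{\sigma}_{\bar n}$, directly from $\bm{K}\bm{\sigma}+\bm{\sigma}\bm{K}^{\mathrm{T}}=-\frac{1}{2}[e^{2\bm{A}(\bm{r})},\bm{K}]$, the rank-one adjoint action, and the linear independence of the $\bm{P}_{\bm{\alpha}}$. That makes the stratum-by-stratum nullity count self-contained, at the price of a little more algebra than the paper's shorter appeal to the theorem.
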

		
		\begin{proof}
			A passive generator commutes with the symplectic form, $\left[ \bm{\Omega},\bm{G} \right]=\bm{0}$, so its Hamiltonian matrix $\bm{K}=\bm{\Omega}\bm{G}$ is antisymmetric, $\bm{K}+\bm{K}^{\mathrm{T}}=\bm{0}$.
			Adding the isotropic term $\bar{n}\bm{I}$ to the covariance shifts the stabilizer condition by
			\begin{align}
				\bm{K}\left( \bar{n}\bm{I} \right)+\left( \bar{n}\bm{I} \right)\bm{K}^{\mathrm{T}} =&~ \bar{n}\left( \bm{K}+\bm{K}^{\mathrm{T}} \right) = \bm{0},
			\end{align}
			so $\bm{K}\bm{\sigma}_{\bar{n}}+\bm{\sigma}_{\bar{n}}\bm{K}^{\mathrm{T}}=\bm{K}\bm{\sigma}+\bm{\sigma}\bm{K}^{\mathrm{T}}$.
			A direction stabilizes $\bm{\sigma}_{\bar n}$ if and only if it stabilizes $\bm{\sigma}$.
			For $\bar n>0$ the Gaussian state is faithful because all symplectic eigenvalues are strictly larger than $1/2$.
			For its density operator $\hat{\rho}$, let $\lvert a\rangle$ be orthonormal eigenvectors with eigenvalues $p_{a}>0$.
			The derivative $\partial\hat{\rho}$ along a passive parameter direction has SLD QFI~\cite{Safranek2019p035304}
			\begin{align}
				F(\partial\hat{\rho}) =&~ 2\sum_{a,b}
				\frac{|\langle a|\partial\hat{\rho}|b\rangle|^2}{p_a+p_b},
			\end{align}
			which vanishes exactly when $\partial\hat{\rho}=0$.
			For a centered Gaussian state, the covariance determines the characteristic function, whose derivative vanishes exactly when $\partial\bm{\sigma}=0$.
			The QFI kernel equals the unchanged covariance-stabilizer kernel.
			The case $\bar n=0$ is Theorem~\ref{thm:passive-identifiability}.
		\end{proof}
		
		Isotropic additive Gaussian noise preserves the singular arrangement, while its nonzero Fisher weights require the mixed-state Gaussian QFI~\cite{Safranek2019p035304}.
		
		%%%%%%%%%%%%%%%%%%%%%%%%%%%%%%%%%%%%%%%%%%%%%%%%%%%%%%%%%%%%%%%%%%%%%%%%%%%%%%
		\section{Comparison of ordinary and restricted roots}
		\label{sec:S10}
		%%%%%%%%%%%%%%%%%%%%%%%%%%%%%%%%%%%%%%%%%%%%%%%%%%%%%%%%%%%%%%%%%%%%%%%%%%%%%%
		
		The ordinary and restricted $C_{n}$ constructions differ in their defining Abelian subspaces, adjoint actions, physical assignments, and metrological roles.
		
		\begin{table}[!ht]
			\caption{Ordinary and restricted root constructions of abstract type $C_{n}$.}
			\label{tab:rootcomparison}
			
			\centering
			
			\begin{tabular}{@{}lll@{}}
				\hline\hline
				
				\parbox[t]{0.18\textwidth}{}
				&
				\parbox[t]{0.36\textwidth}{\textbf{Ordinary Cartan--Weyl roots}}
				&
				\parbox[t]{0.36\textwidth}{\textbf{Restricted roots}}
				\\
				\hline
				
				\parbox[t]{0.18\textwidth}{Defining structure}
				&
				\parbox[t]{0.36\textwidth}{
					complexified algebra
					$\mathrm{sp}(2n,\mathbb{C})$
				}
				&
				\parbox[t]{0.36\textwidth}{
					symmetric pair
					$\mathrm{Sp}(2n,\mathbb{R})/\mathrm{U}(n)$
				}
				\\[2mm]
				
				\parbox[t]{0.18\textwidth}{Abelian subspace}
				&
				\parbox[t]{0.36\textwidth}{
					number Cartan
					$\mathfrak{h}
					=
					\mathrm{span}
					\left\{
					\hat{H}_{i}
					\right\}_{i=1}^{n}$
				}
				&
				\parbox[t]{0.36\textwidth}{
					diagonal squeezing flat
					$\mathfrak{a}
					=
					\mathrm{span}_{\mathbb{R}}
					\left\{
					\bm{A}_{i}
					\right\}_{i=1}^{n}
					\subset\mathfrak{p}$
				}
				\\[2mm]
				
				\parbox[t]{0.18\textwidth}{Coordinates}
				&
				\parbox[t]{0.36\textwidth}{
					number-Cartan dual basis
					$\bm{\varepsilon}_{i}(\hat{H}_{j})=\delta_{ij}$
				}
				&
				\parbox[t]{0.36\textwidth}{
					squeezing-flat dual basis
					$\bm{e}_{i}(\bm{r})=r_{i}$
				}
				\\[2mm]
				
				\parbox[t]{0.18\textwidth}{Positive roots}
				&
				\parbox[t]{0.36\textwidth}{
					$2\bm{\varepsilon}_{i}$,
					$\bm{\varepsilon}_{i}-\bm{\varepsilon}_{j}$,
					$\bm{\varepsilon}_{i}+\bm{\varepsilon}_{j}$
				}
				&
				\parbox[t]{0.36\textwidth}{
					$2\bm{e}_{i}$,
					$\bm{e}_{i}-\bm{e}_{j}$,
					$\bm{e}_{i}+\bm{e}_{j}$
				}
				\\[2mm]
				
				\parbox[t]{0.18\textwidth}{Characteristic action}
				&
				\parbox[t]{0.36\textwidth}{
					$\left[
					\hat{H},
					\hat{E}_{\bm{\alpha}}
					\right]
					=
					\bm{\alpha}(\hat{H})
					\hat{E}_{\bm{\alpha}}$,
					$\hat{H}\in\mathfrak{h}$
				}
				&
				\parbox[t]{0.36\textwidth}{
					$\left[
					\bm{A}(\bm{r}),
					\bm{K}_{\bm{\alpha}}
					\right]
					=
					\bm{\alpha}(\bm{r})
					\bm{P}_{\bm{\alpha}}$
				}
				\\[2mm]
				
				\parbox[t]{0.18\textwidth}{Root addition}
				&
				\parbox[t]{0.36\textwidth}{
					$\left[
					\hat{E}_{\bm{\alpha}},
					\hat{E}_{\bm{\beta}}
					\right]
					\propto
					\hat{E}_{\bm{\alpha}+\bm{\beta}}$
					when
					$\bm{\alpha}+\bm{\beta}$
					is a root
				}
				&
				\parbox[t]{0.36\textwidth}{
					not the organizing operation for passive Fisher weights
				}
				\\[2mm]
				
				\parbox[t]{0.18\textwidth}{Physical role}
				&
				\parbox[t]{0.36\textwidth}{
					quadratic-generator sectors,
					commutators,
					measurement compatibility
				}
				&
				\parbox[t]{0.36\textwidth}{
					radial response,
					passive Fisher weights,
					identifiability
				}
				\\[2mm]
				
				\parbox[t]{0.18\textwidth}{Beam-splitter assignment}
				&
				\parbox[t]{0.36\textwidth}{
					$\hat{g}_{ij}^{\mathrm{bsX}}$
					and
					$\hat{g}_{ij}^{\mathrm{bsY}}$
					belong to the ordinary sector
					$\pm\left(
					\bm{\varepsilon}_{i}-\bm{\varepsilon}_{j}
					\right)$
				}
				&
				\parbox[t]{0.36\textwidth}{
					$\hat{g}_{ij}^{\mathrm{bsY}}$
					corresponds to
					$\bm{e}_{i}-\bm{e}_{j}$,
					while
					$\hat{g}_{ij}^{\mathrm{bsX}}$
					corresponds to
					$\bm{e}_{i}+\bm{e}_{j}$
				}
				\\[2mm]
				
				\parbox[t]{0.18\textwidth}{Metrological tensor}
				&
				\parbox[t]{0.36\textwidth}{
					mean-commutator tensor
					$\bm{\mathcal{C}}$
				}
				&
				\parbox[t]{0.36\textwidth}{
					passive Fisher tensor
					$\bm{F}_{\mathrm{passive}}$
				}
				\\[2mm]
				
				\parbox[t]{0.18\textwidth}{Metrological condition}
				&
				\parbox[t]{0.36\textwidth}{
					weak commutativity,
					$\mathcal{C}_{\mu\nu}
					=
					(1/(2i))
					\left\langle
					\left[
					\hat{g}_{\mu},
					\hat{g}_{\nu}
					\right]
					\right\rangle
					=
					0$
				}
				&
				\parbox[t]{0.36\textwidth}{
					Fisher rank loss,
					$\bm{\alpha}(\bm{r})=0$
				}
				\\[2mm]
				
				\parbox[t]{0.18\textwidth}{Weight or coupling}
				&
				\parbox[t]{0.36\textwidth}{
					determined by structure constants and expectation values of the commutator output
				}
				&
				\parbox[t]{0.36\textwidth}{
					$F_{\bm{\alpha}}(\bm{r})
					=
					c_{\bm{\alpha}}
					\sinh^{2}
					\left[
					\bm{\alpha}(\bm{r})
					\right]$
				}
				\\
				
				\hline\hline
			\end{tabular}
			
		\end{table}
		
		\clearpage
		
		%%%%%%%%%%%%%%%%%%%%%%% References %%%%%%%%%%%%%%%%%%%%%%%%%
		
	%	\putbib[references]
	
	%apsrev4-2.bst 2019-01-14 (MD) hand-edited version of apsrev4-1.bst
	%Control: key (0)
	%Control: author (72) initials jnrlst
	%Control: editor formatted (1) identically to author
	%Control: production of article title (-1) disabled
	%Control: page (0) single
	%Control: year (1) truncated
	%Control: production of eprint (0) enabled
	%

	\end{bibunit}
	
\end{document}